\pgfmathsetmacro\MathAxis{height("$\vcenter{}$")}			
\tikzset{with arrow/.style = {
   decoration={
     markings,
     mark=at position 0.5
          with {\arrow[xshift=0.8mm]{Stealth[width=1.1mm,length=1.6mm]}}
     },
   postaction=decorate}}
\newcommand{\Z}{\mathbb{Z}}									
\newcommand{\C}{\mathbb{C}}
\newcommand{\R}{\mathbb{R}}
\newcommand{\N}{\mathbb{N}}
\newcommand{\dgauss}[2]{\mathbb{P}_{#1}(\dd #2)}			
\newcommand{\gyr}{\mathfrak{a}}								
\newcommand{\rgyr}{\gyr_\textsc{z}^\textsc{r}}	
\newcommand{\mnorm}[1]			
	{\abs*{#1}_{\! \scriptscriptstyle \mathrm{M}}}			
\newcommand{\mink}{{\scriptscriptstyle{\mathrm{M}}}}		
\newcommand{\mproj}[1]{\mathcal{C}_{#1}}					
\newcommand{\BigO}{\mathcal{O}}								
\newcommand{\cutoff}{\Lambda}								
\newcommand{\id}{\mathds{1}}								
\newcommand{\sing}[2]{{#1}^{#2}}							
\renewcommand{\phi}{\varphi}								
\newcommand{\loc}{\mathcal{L}}								
\newcommand{\kloc}{\mathscr{L}}								
\newcommand{\ren}{\mathcal{R}}								
\newcommand{\kren}{\mathscr{R}}								
\newcommand{\vrt}{\mathrm{V}}								
\newcommand{\edg}{\mathrm{E}}								
\newcommand{\trees}{\mathcal{T}}							
\newcommand{\TExp}[1]{\mathcal{E}^{#1}_%
{\scriptscriptstyle{\mathrm{T}}}}							
\newcommand{\src}{\omega}									
\newcommand{\Src}{\Omega}
\newcommand{\weight}{\mathsf{u}}							
\newcommand{\torus}{{\scriptstyle{\mathbb{T}}}}				
\newcommand{\floor}[1]{\lfloor #1 \rfloor}					
\DeclareMathOperator{\val}{Val}								
\DeclareMathOperator{\supp}{supp}							
\newtheoremstyle{thm}{}{}{}{}{\bfseries\sffamily}{.}{.5em}{}	
\theoremstyle{thm}
\newtheorem{lm}{Lemma}
\newtheorem{defn}{Definition}
\newtheorem{theorem}{Theorem}
\newtheorem{rmk}{Remark}
\title{ \bfseries \sffamily \Large The $g-2$ in the neutral Electroweak model with cutoff: convergent expansion, RG and the Jackiw-Weinberg formula}
\author[1]{Vieri Mastropietro}
\author[2]{Michele Bianchessi}
\affil[1]{Università degli Studi di Roma ``La Sapienza'', Department of Physics, Rome -- Italy}
\affil[2]{Università degli Studi Roma Tre, Department of Mathematics and Physics, Rome -- Italy}
\begin{document}

\maketitle

\begin{abstract}
The prediction of the anomalous gyromagnetic factor of the electron,
started with the evaluation of the electromagnetic contribution by
Schwinger (1948) and 
of the weak contribution by Jackiw and Weinberg (1972), 
is one of the major successes of Quantum Field Theory and the Standard Model. 
The results obtained truncating the
series are in spectacular agreement with experiments. Yet, a mathematical justification and 
an estimate of the truncation error are problematic, being such series diverging and not asymptotic to any QFT.
For a non perturbative result, one has to consider the Standard Model as an effective theory 
valid up to certain energy scales.
In this paper we 
consider the neutral sector of the Electroweak model 
with a momentum cutoff; we rigorously prove that 
the anomalous gyromagnetic factor in the effective regularized theory 
coincides with the Jackiw-Weinberg result, obtained by the truncation of the formal expansion with no cutoffs (whose sum is not expected to exist), up to a regularization-dependent correction
which is subdominant in the weak coupling regime if the cutoff is smaller than the inverse coupling and larger than the boson mass.
The proof is based on a convergent expansions and Renormalization Group (RG) methods; cancellations
based on exact and approximated symmetries are needed to get lowest order dominance.
\end{abstract}

\newpage

\tableofcontents

\newpage

\section{Introduction and main results}  
\label{sec:introduction}
\subsection{The anomalous gyromagnetic factor}  
A spin-$1/2$ fermion with mass $m$ and charge $e$ has an intrinsic magnetic dipole moment $\bm{\mu} = -g (e/2m) \mathbf{S}$, where $\mathbf{S}$ is the spin three-vector and $g$ a dimensionless number known as \emph{gyromagnetic factor}. The quantity
$\gyr \equiv (g-2)/2$ is called \emph{anomalous gyromagnetic factor}; it receives nonzero contributions from all the fundamental interactions $\gyr = \gyr_{\textsc{qed}} + \gyr_{\textup{weak}} + \gyr_{\textsc{qcd}}$ and it can be computed in the Standard Model, the theory providing the more fundamental description of elementary particles.
The current theoretical value of $\gyr$ for the electron, obtained as result of decades of computations, is $\gyr_e = 0.0011596521817877\dots$, while its experimental value is $\gyr_e^{\textsc{exp}} = 0.00115965218059(13)$~\cite{Aoyama_2012}. In the case of the muon, the agreement is also impressive
but a bit less precise~\cite{Aguillard_2023}. Such results provide a very stringent
test for the validity of the Standard Model, and possible differences could give an insight on new physical phenomena.

The first theoretical prediction about the $g$-factor is due to Dirac, who obtained the result $g = 2, \gyr = 0$ in 1928~\cite{Dirac_1930}. Experimental deviations from this value were attributed to quantum radiative contributions due to the interaction with the electromagnetic field.
Their evaluation was problematic due to the appearence
of diverging integrals (``infinities'') in the computations due to the large energy behavior, and
it was finally achieved using the theory of renormalization.
Such a procedure is based on the fact that the parameters appearing in the model (called \emph{bare parameters}), are not the observed ones (called \emph{dressed parameters}), as they also aquire radiative corrections. It turns out that
one can absorb all the divergences in the bare parameters, thus
expressing the anomalous gyromagnetic factor as a series in terms of the dressed ones.
The corrections to $\gyr$ due to the electroweak forces (there are also corrections due to strong forces which are expected to be much smaller)
can be therefore written as a series expansion in $e$ and $\mathsf{g}_{\textup{w}}$ with 
\begin{equation} 
\label{eq:perturbative_expansion_g}
\gyr_{\textsc{ew}} \equiv \gyr_{\textsc{qed}} + \gyr_{\textsc{weak}} =\sum_{n,m} A_{n,m} e^n \mathsf{g}_{\textup{w}}^m
\end{equation}
where $e$ is the electric charge and $\mathsf{g}_{\textup{w}}$ the weak coupling; they can be expressed in terms
of the fine structure constant $\alpha$, that is $\alpha=e^2/(\hbar c) \approx 1/137$ and $\mathsf{g}_{\textup{w}}^2=4\pi \alpha/\sin^2 \theta_W$ with
$\sin^2\theta_W=0,232 \dots$ and $\theta_W$ is the Weinberg angle. The smallness of the couplings motivates the perturbative approach.
Due to perturbative renormalizablity, $A_{n,m}$ is finite at all orders. 
The lowest order electromagnetic contribution $A_{2,0} e^2=\alpha/\pi$
was obtained by Schwinger in 1948~\cite{Schwinger_1948}, while 
 $A_{4,0} e^4=-(\alpha/\pi)^2
0.328 \dots$ was initially computed by Karplus and Kroll in 1950~\cite{Karplus_Kroll_1950} with an error corrected by Petermann and Sommerfeld in 1958~\cite{Petermann_1958, Sommerfeld_1958}. 
$A_{6,0}$  has been computed analytically~\cite{Laporta_1996} and 
$A_{8,0}$, $A_{10,0}$
are known numerically~\cite{Aoyama_2012}. The lowest order contribution coming from
Weak forces is $A_{0,2}\mathsf{g}_{\textup{w}}^2=\gyr_{\textsc{z}, 1} +
\gyr_{\textsc{w}, 1}$, where $\gyr_{\textsc{z}, 1}$ and $\gyr_{\textsc{w}, 1}$ are the \emph{neutral} and the \emph{charged} contribution, respectively mediated by the $Z^0, W^\pm$ bosons~\cite{Weinberg_1964}. This contribution was computed in 1972 by Jackiw and Weinberg~\cite{Jackiw_Weinberg_1972} as well as Altarelli, Cabibbo and Maiani~\cite{Altarelli_1972} within the first perturbative order. In particular,
\begin{equation}
\label{eq:JW_result} 
\gyr_{\textsc{z}, 1} = \frac{m^2}{M^2} \frac{\lambda^2}{4\pi^2} \frac{1- 5 \kappa^2}{3} \left[ 1 + \BigO \left( \frac{m^2}{M^2} \right) \right]
=\bar \gyr_{\textsc{z}, 1}\left[ 1 + \BigO \left( \frac{m^2}{M^2} \right) \right]
\end{equation}
where $\lambda = \mathsf{g}_{\textup{w}} (4 \sin^2 \theta_W - 1)/(4 \cos \theta_W), \kappa =  (4\sin^2 \theta_W - 1)^{-1}$ and $M$ is the $Z^0$ mass. This prediction was historically relevant, because it anticipated the experimental discovery of weak neutral currents~\cite{Rousset_1994}. Note that $\gyr_{\textsc{z}, 1}$ is proportional to the square of the ratio between the fermion mass and the $Z^0$ mass, which is a very small quantity ($\approx 10^{-6}$) for the electron; it is still small but larger ($\approx 10^{-3}$) for the muon. By summing up the first few terms of~\eqref{eq:perturbative_expansion_g} and adding the QCD contribution, which is computed numerically, one gets the theoretical value mentioned above.

Despite the accuracy of the result obtained truncating the series expansions, the procedure appears mathematically unclear. 
The series are not convergent and even not asymptotic to any QFT, see e.\ g.~\cite{Fredenhagen_2007}; the Higgs sector, necessary to make the Standard Model renormalizable, is trivial, that is, 
there is no way to define the theory in the removed cutoff limit, unless the limiting theory is non-interacting (see~\cite{Frohlich_1982, Aizenman_Copin_2021}). 

A possible rigorous approach consists in considering the Standard Model as an effective theory valid up to certain energy scales and therefore requiring
an energy cutoff $\cutoff$ to be defined.
One expects that $\gyr$ (and all the physical quantities) are independent on the regularization $\cutoff$ 
and agree with the perturbative computations at lowest order,
up to an error which is small in a range of values of the couplings 
including their physical values.
There are however no results putting such expectations on a quantitative, rigorous footing. The main difficulty is that one has to quantitatively fuflill two opposite requirements: from one side, $\cutoff$ must large enough so that no effects are seen, as no sign of cutoff is experimentally observed; on the other hand, the theory is well-defined only up to an energy scale
that decreases if the coupling size increases.
In the present paper we present the first rigorous result
on the anomalous gyromagnetic factor, focusing on the $Z^0$ contributions of Weak forces
and proving upper and lower bounds for it. 
An announcement of a weaker version of this result  
with no techical details is in~\cite{Mastropietro_2024}.

\subsection{The regularized neutral Electroweak model}
\label{subsec:model_definition}

The neutral sector of the Electroweak model (see e.\ g.~\cite[Equation (14)]{Jackiw_Weinberg_1972} with $W=A=0$) describes a fermion $\psi$ with mass $m$ interacting with a boson $\mathcal{Z}_\mu$ with mass $M$. The theory is \emph{chiral}, in the sense that the $\mathcal{Z}_\mu$ boson couples in different ways with the left-handed and the right-handed Weyl components of $\psi$.
We consider an Euclidean regularization of the neutral sector of the Electroweak model obtained via an ultraviolet cutoff $\cutoff$, to be kept fixed, and an infrared finite-volume cutoff $L$, to be removed at the end.
The model is defined by its expectations, expressed by the functional integral
\begin{equation}
\expval{F} = \frac{\int \dgauss{\sing{g}{\le N}}{\psi}  \int \dgauss{v}{\mathcal{Z}} 
\exp[\sum_{s = \pm} \int_{\mathcal{M}_L} \!\! \dd[4]{x} \, \psi^+_s \sigma^\mu_s \psi^-_s i\lambda_s \mathcal{Z}_\mu] F(A,\psi)}{\int \dgauss{\sing{g}{\le N}}{\psi}  \int \dgauss{v}{\mathcal{Z}} 
\exp[\sum_{s = \pm} \int_{\mathcal{M}_L} \!\! \dd[4]{x} \psi^+_s \sigma^\mu_s \psi^-_s  i\lambda_s \mathcal{Z}_\mu]},
\end{equation}
where
\begin{itemize}
\item Spacetime is represented by the four-dimensional torus $\mathcal{M}_L \equiv [-L/2, L/2]^4$ with periodic boundary conditions and a finite volume $L^4$. The spacetime metric is \emph{Euclidean}, so the metric tensor is simply $\mathrm{diag}(1, 1, 1, 1)$. A natural norm on $\mathcal{M}_L$ that is compatible with periodic boundary conditions is given by $\abs{a}_L \equiv (L/\pi)(\sum_{j=1}^4 \sin^2(\pi a_j/L))^{1/2}$. For every $a \in \R^4$, we also define the vector $(a)_\torus = [L/(2\pi)](\sin(2\pi a_1/L), \dots, \sin(2\pi a_4/L))$.
\item The ultraviolet regularization is realized through the function $\chi_N(k) \equiv \chi_0(2^{-N}k)$, where $\chi_0$ is a rotationally invariant smooth function on $\R^4$ such that $\chi_0(k)=0$ for $\abs{k} \ge 2$ and $\chi_0(k)=1$ for $\abs{k} \le 1$. We assume that $\chi_0$ has Gevrey class $2$. The cutoff scale associated with $\chi_N$ is $\cutoff \equiv 2^N$.
\item Given the set $\mathcal{D}_{L, N} = (2\pi \Z/L)^4 \cap \supp(\chi_N)$ and a function $\hat{\mathcal{Z}}_{k \mu} \colon \mathcal{D}_{L, N} \to \R^{4 \abs*{\mathcal{D}_{L, N}}}$, we define the Gaussian measure
\begin{equation}
\label{eq:bosonic_gaussian_measure}
\dgauss{v}{\mathcal{Z}} \equiv \mathcal{N}_\mathcal{Z} \prod_{k \in \mathcal{D}_{L, N}} \prod_{\mu = 0}^3 \dd{\hat{\mathcal{Z}}_{k \mu}} \, \exp \left( - \frac{1}{2} \frac{1}{L^4} \sum_{k \in \mathcal{D}_{L, N}} \!\!\! \hat{\mathcal{Z}}_{k \mu} \left[\hat{v}^{-1}(k) \right]^{\mu \nu} \! \hat{\mathcal{Z}}_{-k, \nu} \right),
\end{equation}
where
\begin{equation}
\label{eq:free_boson_propagator}
v^{\mu \nu}(x-y) =\frac{1}{L^4} \sum_{k \in \mathcal{D}_{L}} \frac{\delta_{\mu \nu} \chi_N(k)}{\abs*{k}^2 + M^2} \, e^{i k \cdot (x-y)} \equiv \frac{\delta_{\mu \nu}}{L^4} \sum_{k \in \mathcal{D}_{L}} e^{i k \cdot (x-y)} \, \hat{v}(k),
\end{equation}
see e.\ g.~\cite{Bauerschmidt_2019}. The constant $\mathcal{N}_\mathcal{Z}$ is chosen so that $\int \dgauss{v}{\mathcal{Z}} = 1$.
\item Given the finite-dimensional Grassmann algebra generated by the anticommuting variables $\lbrace \hat{\psi}^\pm_{k \alpha} \rbrace_{\alpha=1, \dots 4, k \in \mathcal{D}_{L, N}}$, we define the Grassmann integral as
$\int \dd{\hat{\psi}^\pm_{k \alpha}} = 0, \int \dd{\hat{\psi}^\pm_{k \alpha}} \, \psi^{\pm}_{k \alpha} = 1$, see e.\ g.~\cite{Feldman_2002}, and we let
\begin{equation}
\label{eq:fermionic_gaussian_measure}
\dgauss{\sing{g}{\le N}}{\psi} \equiv \mathcal{N}_\psi \prod_{k \in \mathcal{D}_{L, N}} \prod_{\alpha = 1}^4 \dd{\hat{\psi}^+_{k \alpha}} \dd{\hat{\psi}^-_{k \alpha}} \, \exp \left( - \frac{1}{L^4} \sum_{k \in \mathcal{D}_{L, N}} \!\!\! \hat{\psi}^+_{k \alpha} \left[(\sing{\hat{g}}{\le N})^{-1}(k) \right]^{\alpha \beta} \! \hat{\psi}^-_{k \beta} \right)
\end{equation} 
with
\begin{equation}
\label{eq:free_fermion_propagator}
\sing{g}{\le N}(x) \equiv \frac{1}{L^4} \sum_{k \in \mathcal{D}_{L}} \frac{\chi_N(k)}{i k_\mu \gamma^\mu_N + \mathsf{m}_N} \, e^{i k \cdot x} \equiv \frac{1}{L^4} \sum_{k \in \mathcal{D}_{L}} e^{i k \cdot x} \, \sing{\hat{g}}{\le N}(k).
\end{equation}
As before, the constant $\mathcal{N}_\psi$ is chosen so that $\int \dgauss{\sing{g}{\le N}}{\psi} = 1$; the $4 \times 4$ complex matrices $\lbrace \gamma^\mu_N \rbrace_{\mu = 0}^3$, $\mathsf{m}_N$ take the form
\begin{equation}
\label{eq:gamma_matrices_and_mass_matrix}
\gamma^\mu_N \equiv 
\begin{pmatrix}
0					&	Z^+_N \sigma_+^\mu \\
Z^-_N \sigma_-^\mu 	&	0
\end{pmatrix},
\qquad
\mathsf{m}_N \equiv 
\begin{pmatrix}
m^+_N	&	0	\\
0		&	m^-_N
\end{pmatrix},
\end{equation}
where $\sigma_{\pm} \equiv (\id_{2 \times 2}, \mp i \sigma_1, \mp i \sigma_2, \mp i \sigma_3)$ and $\sigma_1, \sigma_2, \sigma_3$ are the standard Pauli matrices. In the following sections, we will often use the standard \emph{Euclidean gamma matrices}
\begin{equation}
\label{eq:standard_gamma_matrices}
\gamma^\mu \equiv 
\begin{pmatrix}
0					&	\sigma_+^\mu \\
\sigma_-^\mu 		&	0
\end{pmatrix},
\qquad
\gamma^5 \equiv 
\begin{pmatrix}
\id_{2 \times 2}	&	0 \\
0 					&	-\id_{2 \times 2}
\end{pmatrix}.
\end{equation}
together with the Feynman slash notation $\slashed{k} \equiv \gamma^\mu k_\mu$. The gamma matrices satisfy the anticommutation relations $\lbrace \gamma^\mu, \gamma^\nu \rbrace = 2 \delta^{\mu \nu}$.
\item We define $
\psi_\alpha^\epsilon(x) \equiv L^{-4} \sum_{k \in \mathcal{D}_{L, N}} \hat{\psi}_{k \alpha}^\epsilon \, e^{i \epsilon k \cdot x}$, thought as a Grassmann-valued smooth periodic function on $\mathcal{M}_L$. We will often use the notation
\begin{equation}
\psi^-(x) \equiv 
\begin{pmatrix}
\psi^-_1 \\ \psi^-_2 \\ \psi^-_3 \\ \psi^-_4
\end{pmatrix}(x)
\equiv 
\begin{pmatrix}
\psi^-_- \\[2pt] \psi^-_+
\end{pmatrix}(x),
\qquad
\psi^+(x) \equiv
\begin{pmatrix}
\psi_1^+ \\ \psi^+_2 \\ \psi^+_3 \\ \psi^+_4
\end{pmatrix}^{\!\!\mathsf{T}}(x)
\equiv 
\begin{pmatrix}
\psi^+_+		\\	\psi^+_-
\end{pmatrix}^{\!\!\mathsf{T}}(x),
\end{equation}
where the $\mathsf{T}$ superscript denotes transposition (so $\psi^+(x)$ is thought as a row vector) and the two-dimensional Grassmann fields $\lbrace \psi^-_s(x) \rbrace_{s = \pm}$ are equal to the left-handed ($s = -$) and right-handed ($s = +$) Weyl components of the Dirac field $\psi^-(x)$. Throughout the rest of the paper, we use Greek indices to denote the components of the Dirac fields $\psi^+, \psi^-$. Given a pair of components $\psi^+_\alpha, \psi^-_\beta$, the indices $\alpha, \beta$ are said to have \emph{opposite chiralities} if $\alpha, \beta \in \lbrace 1, 2 \rbrace$ or $\alpha, \beta \in \lbrace 3, 4 \rbrace$. Conversely, $\alpha, \beta$ are said to have \emph{similar chiralities} if $\alpha \in \lbrace 1, 2 \rbrace, \beta \in \lbrace 3, 4 \rbrace$ or $\alpha \in \lbrace 3, 4 \rbrace, \beta \in \lbrace 1, 2 \rbrace$.
\item Finally, the constants $\lambda_+, \lambda_-$ are related with the physical parameters as $2\lambda \equiv \lambda_+ + \lambda_-, 2\lambda \kappa \equiv \lambda_- - \lambda_+$.
\end{itemize}

The connected correlation functions of our theory are obtained from the generating functional
\begin{multline}
\label{eq:generating_functional}
W[J, \eta] \equiv \log \int \dgauss{\sing{g}{\le N}}{\psi} \, \dgauss{v}{\mathcal{Z}} \, \times \\
\times \exp \left[\sum_{s = \pm} \int_{\mathcal{M}_L} \!\! \dd[4]{x} \, [\psi^+_s \sigma^\mu_s \psi^-_s (i\lambda_s \mathcal{Z}_\mu + Z^{J, s}_N J_\mu) + \eta^+_s \psi^-_{-s} + \psi^+_s \eta^-_{-s}](x) \right].
\end{multline}
By letting
\begin{equation}
\label{eq:schwinger_functions}
G_{\underline{\alpha}, \underline{\beta}, \underline{\mu}}^{r,s}(\underline{x}, \underline{y}, \underline{z}) \equiv \frac{\partial}{\partial \eta^-_{\underline{\beta}}(\underline{y})} \frac{\partial}{\partial \eta^+_{\underline{\alpha}}(\underline{x})} \frac{\partial}{\partial J_{\underline{\mu}}(\underline{z})} W[J, \eta] \eval_{J, \eta = 0},
\end{equation}
the \emph{two-point function} and the \emph{vertex function} are given by
\begin{equation}
S_{\alpha \beta}(x, y) \equiv G_{\alpha, \beta}^{1, 0}(x, y), \qquad (\mathfrak{I}^\mu)_{\alpha \beta}(x, y, z) \equiv G_{\alpha, \beta, \mu}^{2, 1}(x, y, z).
\end{equation}
Since the system is translationally invariant, we define
\begin{equation}\label{ct}
\hat{S}(k) \equiv \int_{\mathcal{M}_L} \dd[4]{x} e^{-i k \cdot x} \, S(x, 0), \qquad
\hat{\mathfrak{I}}^\mu(p', p) \equiv \int_{\mathcal{M}_L^2} \dd[4]{x} \dd[4]{y} e^{-i p' \cdot x + i p \cdot y} \, \mathfrak{I}^\mu(x, y, 0).
\end{equation}
Finally, the \emph{amputated vertex function} in Fourier space is 
\begin{equation}
\hat{\Gamma}^\mu(p', p) \equiv [\hat{S}(p')]^{-1} \cdot \hat{\mathfrak{I}}^\mu(p', p) \cdot [\hat{S}(p)]^{-1}(p).
\end{equation}
After integrating out the boson fields, the integral~\eqref{eq:generating_functional} can be written as
\begin{equation}
\label{eq:generating_functional_fermionic}
W[J, \eta] = \log \int \dgauss{\sing{g}{\le N}}{\psi} \, \exp( \sing{V}{N}[\psi, J, \eta] )
\end{equation}
with
\begin{multline}
\label{eq:ultraviolet_fermionic_potential}
\sing{V}{N}[\psi, J, \eta] \equiv - \frac{\lambda^2}{2} \int_{\mathcal{M}_L^2} \dd[4]{x} \dd[4]{y} \, (\psi^+ \Upsilon^\mu \psi^-)(x) \, v(x - y) \, (\psi^+ \Upsilon^{\mu} \psi^-)(y) \, + \\
+ \sum_{s = \pm} \int_{\mathcal{M}_L} \dd[4]{x} (Z^{J, s}_N \psi^+_s \sigma^\mu_s J_\mu \psi^-_s + \eta^+_s \psi^-_{-s} + \psi^+_s \eta^-_{-s})(x)
\end{multline}
and $\lambda \equiv (\lambda_+ + \lambda_-)/2, \Upsilon^\mu \equiv \gamma^\mu(1 - \kappa \gamma^5),\kappa \equiv (\lambda_+ - \lambda_-)/(2\lambda_+ + 2\lambda_-)$.
Whenever there is no risk of ambiguity, we adopt the condensed notation $(J, \eta) \equiv \src$, it being understood that $\src = (\src_1, \dots, \src_{12}) = (J_0, \dots, \eta^-_4)$.

If possible, the bare parameters $Z^s_N, Z^{J, s}_N, m^s_N$ have to be chosen so that the dominant parts of the correlation functions describe a fermion whose charge and field strength renormalization are equal to $1$ and whose mass is equal to $m$. In particular, the two-point function and the vertex function must satisfy, in the limit $L\to\infty$, the \emph{renormalization conditions}
\begin{align} 
\label{eq:conditions_correlations}
\begin{split}
\hat{S}(k) 
& =\frac{1}{i \slashed{k}+m} \cdot \left( 1 + \BigO \left[ \left( \frac{\max \lbrace m, \abs*{k}\rbrace}{\cutoff} \right)^\theta \right] \right), \\ 
\hat{\Gamma}^\mu(p', p) 
& = \gamma^\mu + \BigO \left[ \left( \frac{\max \lbrace m, \abs{p}, \abs*{p'} \rbrace}{\cutoff} \right)^\theta \right]
\end{split}
\end{align} 
for some $\theta>0$. 

\subsection{The regularized gyromagnetic factor}
\label{subsec:model_definition1}

The regularized gyromagnetic factor is given by
\begin{equation}
\label{eq:g-2:def}
\rgyr \equiv \sum_{\ell = 0}^K \frac{(im)^\ell}{\ell!} \mathcal{A}^{(\ell)}(0)
\end{equation}
with
\begin{equation}
\label{eq:noncovariant_gfunction}
\mathcal{A}(z) \equiv \frac{z \epsilon_{abc} \tr[\gamma^5 (1 + \gamma^0) (\partial_{p'} - \partial_p)^a \hat{\Gamma}^b(p_z, p_z) \, \gamma^c] + 2 \tr[\gamma_a \hat{\Gamma}^a(p_z, p_z)]}{6 \tr[(1 + \gamma^0) \hat{\Gamma}^0(p_z, p_z)]} - 1
\end{equation}
where $a, b, c \in \lbrace 1, 2, 3 \rbrace$ and $p_z \equiv (z, 0, 0, 0)$. If the formal $L, N, K \to\infty$ limit is taken, this definition formally coincides with the standard one (see e.\ g.~\cite[Equation (6.37)]{Peskin_1995}). More details about this definition are provided in Appendix~\ref{app:g-2_definition}. 
Our main result is the following
\begin{theorem}
\label{th:main_theorem}
Suppose that $K=4$ and $M > 10m$. If
\begin{equation}
\lambda^2 \le  (C_0)^{-1} (M^2/\cutoff^2) \log^{-2}(M/m) \log^{-2}(\cutoff/M),
\end{equation}
the limit $L\to
\infty$ of the correlations (\ref{ct}) exists, it verifies (\ref{eq:conditions_correlations})
for a suitable choice of 
$Z^s_N, Z^{J, s}_N, m^s_N$ and is such that the regularized anomalous gyromagnetic factor~\eqref{eq:g-2:def} satisfies
\begin{equation}
\label{eq:final_result_remainder}
\rgyr = 
\bar{\gyr}_{\textsc{z},1}(1+R_\lambda), \qquad 
\abs*{R_\lambda} \le  C_1 \frac{m^2}{M^2} +C_2 \frac{M^2}{\cutoff^2} + C_3 \frac{\lambda^2\cutoff^2}{M^2} \cdot \log^4 \left(\frac{M}{m}\right) \log^4 \left( \frac{\cutoff}{M} \right),
\end{equation}
where $\bar{\gyr}_{\textsc{z},1}$ is given by (\ref{eq:JW_result}) and 
$C_0, C_1, C_2, C_3$ are positive constants that do not depend on $m, M, \cutoff, \lambda$.
\end{theorem}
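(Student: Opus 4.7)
The plan is to analyze the generating functional \eqref{eq:generating_functional_fermionic} through a multi-scale renormalization group expansion flowing from the ultraviolet cutoff $\cutoff = 2^N$ down toward the infrared. First, decompose the fermion propagator as $\sing{g}{\le N}=\sum_{h\le N} g^{(h)}$, with $g^{(h)}$ supported in the momentum shell $|k|\sim 2^h$, and decompose the boson propagator analogously down to the scale $h_M$ with $2^{h_M}\sim M$ (below which the boson mass cuts off further scales). Integrating the fields scale by scale yields a family of effective potentials which, following the standard constructive approach, can be expressed as sums over Gallavotti--Nicol\`o trees; each tree contribution is bounded by combining a tree formula for the truncated expectations with a Gram--Hadamard estimate for the fermionic determinants generated along the branches.

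Two scale regimes then need to be distinguished. In the ultraviolet regime $h \in [h_M, N]$ the boson still propagates and a detailed power counting shows that the dimensionless running coupling responsible for any potential ultraviolet breakdown is the effective four-fermion one, of order $\lambda^2 \, 2^{2(h-h_M)}$, reaching its maximum $\lambda^2 \cutoff^2/M^2$ at the cutoff scale. Convergence of the tree expansion at every scale is then ensured by the hypothesis $\lambda^2 \le (C_0)^{-1}(M^2/\cutoff^2) \log^{-2}(M/m) \log^{-2}(\cutoff/M)$, the logarithmic factors absorbing the $\log$-divergences produced by the beta-function iteration between $m, M$ and between $M, \cutoff$. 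For $h < h_M$ only an irrelevant effective four-fermion interaction of amplitude $\lambda^2/M^2$ survives, whose dimensionless strength flows to smaller values in the infrared. The bare parameters $Z^s_N, Z^{J,s}_N, m^s_N$ are tuned via a contraction-mapping argument on the associated running constants so that the renormalization conditions \eqref{eq:conditions_correlations} are satisfied in the limit $L \to \infty$.

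Once the tree expansion for $\hat{S}(k)$ and $\hat{\Gamma}^\mu(p', p)$ has been shown to converge uniformly in $L$, the infinite-volume limit of \eqref{ct} follows by a Cauchy argument on the explicit sums. The proof of \eqref{eq:final_result_remainder} then reduces to isolating, inside $\hat{\Gamma}^\mu(p_z, p_z)$ for $|z|\le m$, the lowest-order-in-$\lambda$ contribution, which is the single one-boson-exchange tree at order $\lambda^2$. A direct evaluation of this tree, combined with the Taylor expansion in $z$ up to order $K=4$ dictated by \eqref{eq:g-2:def}, reproduces the Jackiw--Weinberg quantity $\bar{\gyr}_{\textsc{z},1}$ of \eqref{eq:JW_result} modulo two explicit remainders: an $\BigO(m^2/M^2)$ error from expanding the loop integrand in $m/M$ and $z/M$, and an $\BigO(M^2/\cutoff^2)$ error from the smooth cutoff $\chi_N$, whose smallness is guaranteed by the Gevrey-$2$ regularity of $\chi_0$. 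All remaining trees carry at least two extra boson lines or fermion loops relative to this leading graph and cumulatively produce the third term $C_3(\lambda^2\cutoff^2/M^2)\log^4(M/m)\log^4(\cutoff/M)$ of the bound on $R_\lambda$.

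The hard part will be to show that the numerator of $\mathcal{A}(z)$ in \eqref{eq:noncovariant_gfunction} is genuinely of order $\lambda^2 m^2/M^2$ rather than of lower order in $m$: only then does the above extraction actually reproduce $\bar{\gyr}_{\textsc{z},1}$ instead of being overwhelmed by the $\lambda^2\cutoff^2/M^2$ remainder from higher orders. The projectors $(1+\gamma^0)$ together with the antisymmetric combination $\epsilon_{abc}(\partial_{p'}-\partial_p)^a \gamma^c$ are designed precisely so that the leading parts of the field-strength renormalization, of the charge renormalization, and of the Dirac-normal part of the vertex all cancel in the quotient defining $\mathcal{A}$. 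Rigorously establishing this cancellation in the regulated model requires combining the exact $U(1)$ vector Ward identities of the cutoff measure, which survive because $\chi_N$ acts as a pure momentum-space multiplier, with the approximate chiral symmetry broken only by the mass matrix $\mathsf{m}_N$ and by the non-standard factors $Z^\pm_N \sigma^\mu_\pm$ appearing in \eqref{eq:gamma_matrices_and_mass_matrix}. The delicate point is to track the chiral-breaking terms along the entire renormalization group flow and verify that every non-invariant contribution surviving at the end of the induction carries at least an extra factor of $m^2/M^2$ or $M^2/\cutoff^2$, as required by \eqref{eq:final_result_remainder}.
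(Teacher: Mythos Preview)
Your high-level architecture (multiscale RG, Gallavotti--Nicol\`o trees, Gram--Hadamard bounds, contraction mapping for the bare couplings) matches the paper's. Two points, however, diverge in ways that matter.

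First, a structural difference: the paper does \emph{not} decompose the boson propagator in scales. The boson field is integrated out exactly at the outset (equation~\eqref{eq:generating_functional_fermionic}), leaving a non-local four-fermion interaction mediated by $v(x-y)$, which is then carried through the fermionic multiscale analysis as a single endpoint kernel $W^N_{(\lambda)}$. Your two-regime picture with a ``boson still propagating'' for $h\in[h_M,N]$ is a different organization; it might be made to work, but it is not what the paper does, and the bounds you would need on the resulting mixed expansion are not the ones proved here.

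Second, and more seriously, your proposed mechanism for extracting the crucial $m^2/M^2$ factor is not the one the paper uses, and it is not clear it would succeed. You invoke ``exact $U(1)$ vector Ward identities of the cutoff measure''; the paper uses no Ward identities at all. The actual mechanism is built directly into the definition of the localization operator~\eqref{eq:localization_definition}: the Taylor subtraction is pushed one order \emph{beyond} what power counting requires ($R_v=3$ rather than $2$ on two-leg kernels, $R_v=2$ rather than $1$ on vertex kernels), so that the renormalized scaling dimension satisfies $D_v-R_v\le -2$ uniformly (Remark~\ref{rmk:gain}). This over-subtraction would normally generate extra running couplings, but Lemma~\ref{lm:localization_request} shows---via the Euclidean $\mathrm{SO}(4)$ symmetry and the chiral $U(1)$ symmetry of the massless theory---that the would-be new local terms vanish. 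The price is the appearance of the operator $\ren^1=(\id-\mathcal{P})\loc$, whose contribution is controlled by the separate bound~\eqref{eq:renormalization_bound_zero_mass} on $\mathscr{S}W_{\tau,\underline{P},\underline{\Delta}}$, carrying the explicit $2^{h^\star-h_{w_0}}$ gain. Only with $D_v-R_v\le -2$ can the short-memory argument be run with $\theta=2$ (Theorem~\ref{th:convergent_expansions}), and it is this choice of $\theta$ that converts the dimensional factor $2^{2(h^\star-N)}$ into the $m^2/M^2$ appearing in~\eqref{eq:bounds_on_S_and_Gamma}. Your proposal identifies the right difficulty but not the right lever; without the enhanced localization and the companion $\mathscr{S}$-estimates, the higher-order bound you quote would be $\BigO(\lambda^{2n}\cutoff^{2n}/M^{2n})$ with no $m^2/M^2$ prefactor, and the remainder would swamp the Jackiw--Weinberg term.
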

The result rigorously shows that the anomalous gyromagnetic factor in the effective regularized theory coincides with the truncation of the perturbative expansion  with cutoff removed (whose sum does not exist), up to regularizaton-dependent corrections which are small in a suitable region of the parameter space. Therefore, this result allows to estimate the error due to truncation in the formal expansion, even if it is non convergent. The physical regime corresponds to $m/M$ and $\lambda$ small with 
$m/M \ll \lambda$ (in reality, $\lambda\sim 10^{-2}$ and $m/M\sim 10^{-6}$ in the case of electrons). Hence, the first term appearing in the remainder $R_\lambda$ is surely small; further corrections due to higher values of $K$ would be dominated by this one. The second term comes from the difference between the lowest order contribution of the regularized theory and the one without cutoff, and it is small if $\Lambda$ is chosen to be much higher than the boson mass $M$. 
The third term takes into account higher orders in $\lambda$ and it is small provided that $\lambda$ is sufficiently small, that is $\lambda \ll \BigO( (M/\Lambda)\log^{-1}(M/m))$. A crucial aspect of our result is the fact that this term depends logarithmically on $M/m$: a power law dependence would imply a completely unrealistic smallness condition such as $\lambda \ll \BigO(m/\cutoff)$. This is achieved by proving suitable cancellations in the expansion, implying that all the higher orders contributions are $\BigO(m^2/M^2)$. The choice $K=4$ is enough to get upper and lower bounds in terms of the Jackiw-Weimberg formula~\eqref{eq:JW_result}.

\subsection{Sketch of the proof}

After integrating out the bosons, the amputated vertex function appearing in~\eqref{eq:noncovariant_gfunction} is expressed by Grassmann integrals which can be analyzed by means of rigorous Renormalization Group methods. The amputated vertex function is written as a series in the coupling $\lambda$ and the running coupling constants, that is
\begin{equation}
\label{eq:vertex_splitting}
\hat{\Gamma}^\mu(p', p)=
\hat{\Gamma}_a^\mu(p', p)+\hat{\Gamma}^\mu_b(p', p),
\end{equation}
where the dominant part, $\hat{\Gamma}_a^\mu(p', p)$, only depends from the running coupling constants; $\hat{\Gamma}^\mu_b(p', p)$ is subdominant and can be expressed as a series in $\lambda$, namely
\begin{equation}
\label{eq:subdominant_vertex_expansion}
\hat{\Gamma}^\mu_b(p', p)=\sum_{n=1}^{+\infty} \lambda^{2 n} \hat{\Gamma}_n^\mu(p', p; \lambda).
\end{equation}
Note that the above is not a \emph{power} series in $\lambda$, as the running coupling constants are also functions of $\lambda$ (that is, $\hat{\Gamma}_n^\mu(p', p; \lambda)$ depend from $\lambda$).
It turns out that, by fixing the bare constants so that the renormalization conditions~\eqref{eq:conditions_correlations} are verified, the contribution to $\rgyr$ coming from $\hat{\Gamma}_a^\mu$ is vanishing. The contribution coming from $\lambda^2 \hat{\Gamma}_1^\mu(p', p; \lambda)$ is expressed by several renormalized graphs whose sum reconstructs the Jackiw-Weinberg result, up to corrections which are included in the first
and second term of~\eqref{eq:final_result_remainder}. In particular, the dominant part of $\rgyr$ turns out be of order $\lambda^2 m^2/M^2$. The higher orders ($\lambda^{2 n} \hat{\Gamma}_n^\mu(p', p; \lambda)$, with $n\ge 2$) can also be written as sums
over renormalized Feynman graphs. In order to prove convergence, however, it is convenient
\emph{not} to bound each graph separately, but to control their sum using suitable determinant bounds which exploit cancellations caused by the anticommutativity of fermionic variables. A dimensional bound for $\hat{\Gamma}_n^\mu(p', p; \lambda)$ would be
$\BigO(C^n (\Lambda^2/M^2)^n)$; this estimate is enough for convergence, but not for ensuring the lowest order dominance, as it does not contain the small $m^2/M^2$ factor. An improvement of the bound for the contributions coming from $\hat{\Gamma}^\mu_b(p', p)$ follows from certain cancellations due to exact or approximate symmetries; the last ones are indeed violated by the presence of the fermion mass, which is however small. As a consequence, the renormalized scaling dimension is smaller than or equal to $-2$ and this produces the desidered $m^2/M^2$ gain up to some harmless logarithmic corrections.

\subsection{Open problems and remarks}

The regularized neutral Electroweak model considered in this paper shares some similarities with Condensed Matter models that admit an emerging Quantum Field Theory description, like graphene, Hall systems or Weyl semimetals. In such systems, the lattice of ions among which the conduction elctrons are hopping provides a physical realization of the ultraviolet cutoff. A basic property of these Condensed Matter models is the universality of certain transport coefficients, that is, the independence of such quantities from the microscopic details. In particular, universality is present in the optical conductivity of graphene ($\sigma=(e^2/h) \cdot 1/(2\pi)$) or in the Hall conductance ($\sigma_n = n e^2/h$, where $n \in \N$). These coefficients are strictly related with quantum anomalies of the emerging QFT description.

The transport coefficients, togheter the anomalous gyromagnetic factor of the electron, are measured with very high precision and their theoretical values are used to determine the fine structure constant. Note that, in the case of the transport coefficients, the theoretical prediction coincides with the lowest order computation in perturbation theory, while the anomalous gyromagnetic factor is a series with non vanishing coefficients at any order. The Renormalization Group approach used in this paper provides an explanation of this fact: the crucial point is that transport coefficients (as most of the correlation functions that are actually measured in Condensed Matter experiments) are usually associated to marginal or relevant terms, while the anomalous gyromagnetic factor is associated to an irrelevant one.

The universality of the transport coefficients in presence of a weak short range many-body interaction has been proven in $3+1$ dimensions~\cite{Mastropietro_2007, Giuliani_Mastropietro_Porta_2019, Mastropietro_2024_Vanishing} or in $2+1$ dimensions~\cite{Giuliani_Mastropietro_Porta_2019, Giuliani_Mastropietro_Porta_2017} in the case of graphene. The transport coefficients are decomposed as sums of two terms as in~\eqref{eq:vertex_splitting}. In this case, however, only the first term (involving marginal couplings) gives a nonzero contribution; the second one, involving irrelevant terms, is exactly zero. No need of improvment of estimates with respect to power counting is necessary in this case.
From this fact follows the universality of transport coefficients or the related phenomenon of the non-renormalization of quantum anomalies in presence of a finite cutoff. 
The anomalous gyromagnetic factor, on the contrary, only gets contributions from higher orders in the renormalized expansion, which involve irrelevant interactons. 

There are several natural extensions of the present work, which could be possibly treated with the methods introduced in this paper and their generalizations. 
A natural question is to improve the smallness condition on the coupling,
$\lambda \ll \BigO( (M/\Lambda)\log^{-1}(M/m))$.
In the case of renormalizable models, one could hope to replace such condition with $\lambda \ll \BigO(\log^{-1}(\Lambda/M)\log^{-1}(M/m))$, which would allow to reach much higher cutoffs. The model considered here is perturbatively non renormalizable due to the presence of anomalies, so this improvment is not expected here. However, after including the quark contributions with realistic values of the charges, the anomalies are expected to cancel and perturbative renormalizablity is recovered, so one can hope to get such a result in this generalized framework.
As a starting point, one could consider a vector model like massive QED$_4$, and even in this case the construction up to exponentially high cutoff is unknown.
As a preliminary step, one would like to repeat the analysis presented here with lattice cutoff (which is necessary for the anomaly cancellation), where the symmetries at the basis of the above mentioned cancellations are different.
The issue of QED$_4$ contributions to the gyromagnetic factor with vanishing photon mass is also a natural problem to be considered, see e.\ g.~\cite{Kaplan_2021}.

The paper is structured in the following way. In Section~\ref{sec:RG_analysis} we developed a Renormalization Group analysis and a tree expansion for the correlations, incorporating exact and emeging symmetries. In Section~\ref{sec:renormalized_expansion_bounds} we prove bounds for the coefficient of the expansion for the vertex function, proving its convergence using determinant bounds and cluster expansion formulas for truncared expectations. Finally, in Section~\ref{sec:g-2_calculation} we prove the decomposition~\eqref{eq:subdominant_vertex_expansion} and the relative bounds with improvements due to cancellations, leading to the proof of the main result.

\section{Renormalization Group analysis}
\label{sec:RG_analysis}

\subsection{Multiscale decomposition}
\label{subsec:rg_step_and_localization}
The Grassmann integral occurring in~\eqref{eq:generating_functional_fermionic}
is perfomed iteratively using the Gaussian convolution identity; namely, if $g = g_1 + g_2$, we have
\begin{equation}
\label{eq:measure_additivity}
\int \dgauss{g_1}{\psi} \, F(\psi)=\int \dgauss{g_1}{\psi_1} \int \dgauss{g_1}{\psi_2} \, F(\psi_1+\psi_2)
\end{equation}
for every functional $F$. This allows to represent
the generating functional~\eqref{eq:generating_functional_fermionic} as
\begin{equation}
\label{eq:generating_functional_on_scale_h}
W[\src] = \log \int \dgauss{\sing{g}{\le h}}{\psi} \exp(\mathscr{V}^{h}[\psi, \src]),
\end{equation}
where $h \in \lbrace h^\star, \dots, N \rbrace$, $h^\star = \lfloor \log_2 m \rfloor$, $\chi_h(k) \equiv \chi_0(2^{-h} k)$ and
\begin{equation}
\label{eq:renormalized_propagator}
\sing{\hat{g}}{\le h}(k) \equiv 
\chi_h(k) \cdot
\begin{pmatrix}
m_h^+						&	i Z_h^+ \sigma^\mu_+ k_\mu \\
i Z_h^- \sigma^\mu_- k_\mu	&	m_h^- \\
\end{pmatrix}^{\!-1}
\equiv \frac{\chi_h(k)}{i k_\mu \gamma^\mu_h + \mathsf{m}_h}.
\end{equation}
The \emph{effective potential} on scale $h$ takes the form
\begin{equation}
\label{eq:veff_kernels}
\sing{\mathscr{V}}{h}[\psi, \src] = \sum_{A, B} \int \dd{\underline{x}} \dd{\underline{y}} \, \sing{W}{h}(A, \underline{x}; B, \underline{y}) \Psi(A, \underline{x}) \Src(B, \underline{y}),
\end{equation}
where $\sing{W}{h}(A, \underline{x}; B, \underline{y})$ are smooth, periodic, translationally invariant kernels and the \emph{field monomials} $\Psi(A, \underline{x}) \Src(B, \underline{y})$ contain at most three derivatives on each field $\psi^\pm$. Here, we adopt the same notation introduced in~\cite[Equation (4.2)]{Giuliani_Mastropietro_Rychkov_2021}: given a pair of \emph{index sets} $A^\pm = \lbrace A_1^\pm, \dots, A_s^\pm \rbrace$ with $A^\pm_\ell = (\alpha_\ell^\pm, \underline{\delta}_\ell^\pm) \subseteq \lbrace 1, \dots, 4 \rbrace \times \N^4$, we let
\begin{equation}
\label{eq:field_monomials}
\Psi(A, \underline{x}) \equiv \prod_{a = 1}^{\abs*{A^+}} \partial_{\underline{\delta}_a^+} \psi^+_{\alpha_a^+}(x_a^+) \prod_{b = 1}^{\abs*{A^-}} \partial_{\underline{\delta}_b^-} \psi^-_{\alpha_b^-}(x_b^-),
\end{equation}
where $A \equiv A^+ \sqcup A^-$, $\underline{x} = (x_1^+, \dots, x_s^-)$ and $\partial_{\underline{\delta}_\ell^\pm} = \prod_{r = 1}^4 (\partial_r)^{\delta_\ell^{\pm, r}}$ with $\abs*{\underline{\delta}_\ell^\pm} \equiv \sum_{r = 1}^4 \abs*{\delta_\ell^{\pm, r}} \le 3$. In the same way, we let
\begin{equation}
\label{eq:source_field_monomials}
\Src(B, \underline{y}) \equiv \prod_{a = 1}^{\abs*{B_J}} J^{\mu_a}(x_a) \prod_{b = 1}^{\abs*{B_{\eta}^+}} \eta^+_{\alpha^+_b}(y_b^+) \prod_{c = 1}^{\abs*{B_{\eta}^-}} \eta^-_{\alpha^-_c}(y_c^-),
\end{equation}
where the index set $B = B_J \sqcup B^+_\eta \sqcup B^-_\eta$ is constructed as before. 

The representation~\eqref{eq:generating_functional_on_scale_h} coincides with~\eqref{eq:generating_functional_fermionic} when $h = N$, so it will be fully determined once we show how to pass from an integer $h$ to $h-1$.
\begin{rmk} 
As it will be clear from Theorem~\ref{th:renormalization_bound}, to each kernel $W(A, \underline{x}; B, \underline{y})$ is associated a dimension $D=4-3\ell/2-n_J$, where $\ell, n_J$ respectively denote total number of $\psi$ and $\eta$ fields and the number of $J$ fields. One has to introduce the renormalization operator in order to control the behavior of the kernels with positive or vanishing dimension. 
\end{rmk}

\subsection{Localization}
\label{subsec:localization}

We write~\eqref{eq:generating_functional_on_scale_h} by decomposing $\sing{\mathscr{V}}{h}[\psi, \omega]$ as
\begin{equation}
\label{eq:generating_functional_RL}
W[\src] = \log \int \dgauss{\sing{g}{\le h}}{\psi}
\exp(\loc \sing{\mathscr{V}}{h}[\psi, \src] + \ren \sing{\mathscr{V}}{h}[\psi, \src]),
\end{equation}
where $\ren=\id-\loc$. Depending on the structure of the kernel $W(A, \underline{x}; B, \underline{y})$, the \emph{localization operator} $\loc$ is defined as 
\begin{equation}
\label{eq:localization_definition}
\loc \left[ W(A, \underline{x}; B, \underline{y}) \, \Psi(A, \underline{x})\Src(B, \underline{y}) \right] \equiv
\begin{cases}
(\loc_{0, x} + \loc_{1, x} + \bar{\loc}_{2, x})\left[ W_{\alpha \beta}(x, y) \psi^+_\alpha(x) \psi^-_\beta(y) \right] \\[5pt]
(\loc_{0, z} + \bar{\loc}_{1, z}) \left[ W_{\alpha \beta \mu}(x, y, z) \psi^+_\alpha(x) \psi^-_\beta(y) J_\mu(z) \right] \\[5pt]
(\loc_{0, x} + \bar{\loc}_{1, x}) \left[ W_{\alpha \beta \underline{\delta}}(x, y) \partial_{\underline{\delta}_1} \psi^+_\alpha(x) \partial_{\underline{\delta}_2} \psi^-_\beta(y) \eval_{\underline{\delta} = 1} \right] \\[5pt]
\bar{\loc}_{0, x} \left[ W_{\alpha \beta \underline{\delta}}(x, y) \partial_{\underline{\delta}_1} \psi^+_\alpha(x) \partial_{\underline{\delta}_2} \psi^-_\beta(y) \eval_{\underline{\delta} = 2} \right] \\[5pt]
\bar{\loc}_{0, z} \left[ W_{\alpha \beta \mu \underline{\delta}}(x, y, z) \partial_{\underline{\delta}_1} \psi^+_\alpha(x) \partial_{\underline{\delta}_2} \psi^-_\beta(y) J_\mu(z) \eval_{\underline{\delta} = 1} \right] \\[5pt]
0	\qquad	\text{In any other case}
\end{cases}
\end{equation}
where $\abs*{\underline{\delta}} \equiv \abs*{\underline{\delta}_1} + \abs*{\underline{\delta}_2}$ and
\begin{itemize}
\item If $x_0$ is a spacetime point among $\underline{x}, \underline{y}$, the operator $\loc_{\ell, x_0}$ acts as
\begin{equation}
\label{eq:taylor_localization_definition}
\loc_{\ell, x_0} \left[ W(A, \underline{x}; B, \underline{y}) \, \Psi(A, \underline{x})\Src(B, \underline{y}) \right] \equiv  W(A, \underline{x}; B, \underline{y}) \, \frac{1}{\ell!} \frac{\dd^\ell}{\dd t^\ell}\Psi(A, \underline{x}_t)\Src(B, \underline{y}_t) \eval_{t = 0},
\end{equation}
where the spacetime points $\underline{x}_t, \underline{y}_t$ are obtained by replacing every point $z$ occurring among $\underline{x}, \underline{y}$ with the interpolating point $I_t(x_0, z) \equiv x_0 + t(z - x_0)_\torus$.
\item $\bar{\loc}_{\ell, x_0}$ acts on a kernel with spinor indices $\alpha, \beta$ as
\begin{equation}
\bar{\loc}_{\ell, x_0}= 
\begin{cases}
\loc_{\ell, x_0}			& \textup{if } \alpha, \beta \textup{ have similar chiralities} \\
\mathcal{P}\loc_{\ell, x_0}	& \textup{if } \alpha, \beta \textup{ have opposite chiralities}
\end{cases}
\end{equation}
with $\mathcal{P}(\, \cdot \,) \equiv (\, \cdot \,) \vert_{m_N = 0}$.
\end{itemize}
The \emph{renormalization operator} $\ren = \id - \loc$ can be decomposed as $\ren \equiv \ren^0 + \ren^1$, where $\ren^0 = \id - \loc_{0, x_0} - \loc_{1, x_0} - \cdots - \loc_{R - 1, x_0}$ and $\ren^1$ is either equal to $0$ or to $(\id - \mathcal{P})\loc_{R - 1, x_0}$ (both $R$ and $x_0$ depend on the kernel on which $\ren$ acts). The $\ren^0$ operator will be called \emph{Taylor renormalization}.

An important property of the localization operation is that
\begin{lm}
\label{lm:localization_request}
For every $h \in \lbrace h^\star, \dots, N \rbrace$, $\loc \sing{\mathscr{V}}{h}[\psi, \src]$ takes the form
\begin{equation}
\label{eq:localization_request}
\loc \sing{\mathscr{V}}{h}[\psi, \src] =  \sum_{s = \pm} \int_{\mathcal{M}_L} \dd[4]{x} (-\beta^{m, s}_h \, \psi^+_s \psi^-_{-s} - \beta^s_h \, \psi^+_s \sigma_s^\mu \partial_\mu \psi^-_s + Z^{J, s}_h \psi^+_s \sigma_s^\mu J_\mu \psi^-_s)(x),
\end{equation}
with $\beta^{m, s}_h \vert_{m_N = 0} = m^s_h \vert_{m_N = 0} = 0$.
\end{lm}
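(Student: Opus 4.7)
The plan is to derive the identity from the explicit formula~\eqref{eq:localization_definition} combined with the symmetries preserved along the multiscale integration. Every kernel in the expansion~\eqref{eq:veff_kernels} is obtained from the bare potential~\eqref{eq:ultraviolet_fermionic_potential} by iterating the Gaussian convolution identity~\eqref{eq:measure_additivity}. Each slice of the free propagator~\eqref{eq:free_fermion_propagator} is translation invariant, $\mathrm{SO}(4)$-covariant and $U(1)_V$-invariant under $\psi^\pm\to e^{\pm i\alpha}\psi^\pm$; moreover, when $\mathsf{m}_N=0$, the bare action is also $U(1)_A$-invariant under $\psi^\pm\to e^{\pm i\alpha\gamma^5}\psi^\pm$, as follows from $\gamma^5\gamma^\mu\gamma^5=-\gamma^\mu$ and the explicit form of $\Upsilon^\mu=\gamma^\mu(1-\kappa\gamma^5)$. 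Every effective kernel $\sing{W}{h}$ therefore inherits all these symmetries.

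The next step is a case analysis of the pieces of~\eqref{eq:localization_definition}. Only two-fermion kernels, with or without one current source $J$, contribute to $\loc$. For a two-fermion kernel $W_{\alpha\beta}(x,y)$ with no derivatives, the $\mathrm{SO}(4)$-covariant decomposition in Fourier variables yields an \emph{opposite-chirality} block proportional to $\delta_{\alpha\beta}\,g(k^2)$ and a \emph{similar-chirality} block proportional to $k_\mu(\sigma^\mu_s)_{\alpha\beta}\,f(k^2)$, in the paper's convention. The operator $\loc_{0,x}$ extracts the constant Fourier mode, which survives only on opposite-chirality blocks and produces the mass counterterm $-\beta^{m,s}_h\,\psi^+_s\psi^-_{-s}$. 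The operator $\loc_{1,x}$ extracts the first Fourier moment, which survives only on similar-chirality blocks through the invariant $(\sigma^\mu_s)_{\alpha\beta}$ and yields the kinetic counterterm $-\beta^s_h\,\psi^+_s\sigma^\mu_s\partial_\mu\psi^-_s$. The second-order $\bar\loc_{2,x}$ generates the invariant $\delta^{\mu\nu}\partial_\mu\partial_\nu$ on the opposite-chirality block; the $\mathcal{P}$ projector retains only its $\mathsf{m}_N=0$ part, which vanishes by $U(1)_A$ invariance.

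The analogous analysis for the two-fermion, one-current kernel $W_{\alpha\beta\mu}(x,y,z)$ forces $\loc_{0,z}$ to be of the similar-chirality form $(\sigma^\mu_s)_{\alpha\beta}J_\mu$, giving the current coupling $Z^{J,s}_h\,\psi^+_s\sigma^\mu_s J_\mu\psi^-_s$, while $\bar\loc_{1,z}$ and the residual barred cases in~\eqref{eq:localization_definition} only produce opposite-chirality tensor structures whose $\mathsf{m}_N=0$ parts are forbidden by chiral symmetry and therefore killed by $\mathcal{P}$. Assembling the surviving pieces reproduces exactly~\eqref{eq:localization_request}. The identities $\beta^{m,s}_h\vert_{\mathsf{m}_N=0}=m^s_h\vert_{\mathsf{m}_N=0}=0$ follow from the $U(1)_A$ invariance at $\mathsf{m}_N=0$: an opposite-chirality scalar bilinear $\psi^+_s\psi^-_{-s}$ carries nonzero axial charge and must have vanishing coefficient, and the running mass $m^s_h$ appearing in~\eqref{eq:renormalized_propagator}, defined iteratively through the localization of the two-point kernel, cannot be generated unless the bare mass $\mathsf{m}_N$ is turned on.

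The main technical obstacle is the exhaustive bookkeeping required to verify that rotation invariance together with the vector and axial $U(1)$ symmetries fix uniquely the tensor content of each subtraction in~\eqref{eq:localization_definition} to the three monomials of~\eqref{eq:localization_request}, and that the $\bar\loc$--$\mathcal{P}$ prescription acts precisely on the bilinears forbidden by chiral symmetry at $\mathsf{m}_N=0$. Once this matching is in place, the symmetry arguments themselves are essentially algebraic; this bookkeeping also sets up the approximate chiral structure that is later leveraged to extract the $m^2/M^2$ factor discussed in the sketch of the proof of Theorem~\ref{th:main_theorem}.
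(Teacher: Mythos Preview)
Your proposal is correct and follows essentially the same strategy as the paper: use the rotational covariance and the (approximate) chiral $U(1)$ symmetry inherited along the multiscale integration to constrain the tensor structure of the localized two--fermion and two--fermion--one--$J$ kernels, and observe that the $\bar\loc/\mathcal{P}$ prescription removes precisely the opposite--chirality pieces that chiral symmetry forbids at $m_N=0$. One point to tighten: at finite $L$ the full $\mathrm{SO}(4)$ is \emph{not} a symmetry of the model, only the discrete subgroup generated by $\pi/2$ rotations is; the paper handles this by showing that the four--dimensional representation of this finite group is still irreducible and then invoking Schur's lemma to force rank--two invariant tensors to be proportional to $\delta^{\mu\nu}$, which is what your Fourier--space decomposition implicitly uses.
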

\begin{proof}
See Appendix~\ref{app:localization_request_proof}.
\end{proof} 
Note that several terms which should appear in the above expression based on definition~\eqref{eq:localization_definition} are indeed absent due to symmetry properties. In particular, the $\bar{\loc}_{\ell, x_0}$ operators displayed in~\eqref{eq:localization_definition} \emph{have no influence on $\loc \sing{\mathscr{V}}{h}[\psi, \src]$}.

The action of the renormalization operator on the kernel representation~\eqref{eq:veff_kernels} can be understood as follows. Consider a smooth periodic function $F \colon \mathcal{M}_L^2 \to \C$ whose Fourier coefficients take the form $\hat{F}(k_0, k) = f_1(k_0) f_2(k)$, where $f_2$ vanishes as long as the norm of its argument is greater than $2^N$. If $\ren^0_{R, x_0} = \id - \loc_{0, x_0} - \cdots - \loc_{R-1, x_0}$ and $W(x_0, x) = W(0, x - x_0)$ is a smooth, periodic, translationally invariant kernel, then
\begin{equation}
\label{eq:ren_kernel_eq}
\int \dd{x_0} \dd{x} \ren^0_{R, x_0} [W(x_0, x) F(x_0, x)] = \int \dd{x_0} \dd{x} (\kren^0_{R, x_0} W)^{\underline{\alpha}}(x_0, x) \frac{\partial F}{\partial x^{\underline{\alpha}}}(x_0, x)
\end{equation}
with
\begin{equation}
\label{eq:ren_taylor_kernel}
(\kren^0_{R, x_0} W)^{\underline{\alpha}}(x_0, x) \equiv \int_0^1 \dd{t} \frac{(1-t)^{R-1}}{(R-1)!} \int \dd{y} W(0,y) \, y^{\underline{\alpha}}_\torus \, 
\delta_N(x - x_0 - t y_\torus).
\end{equation}
Here, $\underline{\alpha} = (\alpha_1, \dots, \alpha_R) \in \lbrace 0, \dots, 3 \rbrace^R$ is a multi-index: given any $x \in \R^4$, the notations $x^{\underline{\alpha}}$ and $\partial/\partial x^{\underline{\alpha}}$ respectively stand for $\prod_{\ell = 1}^R x^{\alpha_\ell}$ and $\prod_{\ell = 1}^R \partial/\partial x^{\alpha_\ell}$ and sum over $\alpha_1, \dots, \alpha_R$ is understood in both~\eqref{eq:ren_kernel_eq} and~\eqref{eq:ren_taylor_kernel}. Finally, the ``regularized delta'' $\delta_N(x)$ is a periodic function on $\mathcal{M}_L$ whose Fourier coefficients are given by $\hat{\delta}_N(k) \equiv \chi_{N+1}(k)$.

Formula~\eqref{eq:ren_taylor_kernel} is an immediate consequence of the interpolation identity
\begin{multline}
\label{eq:lm:interpolating_formula}
\int \dd{x_0} \dd{x} \ren^0_{R, x_0} [W(x_0, x) F(x_0,x)] = \\ 
= \int \dd{x_0} \dd{x} W(x_0, x) \sum_{\underline{\alpha}} \int_0^1 \dd{t} \frac{(1-t)^{R-1}}{(R-1)!} (x - x_0)^{\underline{\alpha}}_\torus \, \partial_{\underline{\alpha}} F(x_0, I_t(x_0, x)).
\end{multline}
Note that, according to our initial assumption, $\hat{F} \hat{\delta}_N = \hat{F}$. In position space, this equality becomes
\begin{equation}
F(x_0, z) = \int \dd{y} F(x_0, y) \delta_N(y - z) \qquad \forall z \in \mathcal{M}_L,
\end{equation}
so the right hand side of~\eqref{eq:lm:interpolating_formula} can be rewritten as
\begin{equation}
\label{eq:delta_insertion}
\int \dd{x_0} \dd{x} \dd{y} \int_0^1 \dd{t} \delta_N(y - x_0 - t(x - x_0)_\torus) \,
W(0, x-x_0) \frac{(1-t)^{R-1}}{(R-1)!} (x - x_0)^{\underline{\alpha}}_\torus \, \partial_{\underline{\alpha}} F(x_0, y),
\end{equation}
where we used the fact that $W(x_0, x) = W(0, x - x_0)$. After performing the change of variables $x \mapsto x + x_0$ together with the relabelling $x \leftrightarrow y$, the integral~\eqref{eq:delta_insertion} becomes
\begin{equation}
\int \dd{x}_0 \dd{x} \dd{y} \int_0^1 \dd{t} \delta_N(x - x_0 - t y_\torus) \,
W(0, y) \frac{(1-t)^{R-1}}{(R-1)!} y^{\underline{\alpha}}_\torus \, \partial_{\underline{\alpha}} F(x_0, x)
\end{equation}
and by comparing this with~\eqref{eq:ren_kernel_eq} we deduce that $(\kren_{R, x_0}^0 W)^{\underline{\alpha}}$ is indeed given by~\eqref{eq:ren_taylor_kernel}.

The above construction can be applied to the $\loc, \ren^1$ operators as well and it can be generalized in order to deal with kernels having two external fermionic legs and one $J$ source field. By adopting the synthetic notation $\underline{x}$ for either $x_1$ or $(x_1, x_2)$, we have
\begin{align}
\begin{split}
\int \dd{x_0} \dd{\underline{x}} \loc_{\ell, x_0} [W(x_0, \underline{x}) F(x_0, \underline{x})] = \int \dd{x_0} \dd{\underline{x}} (\kloc_{\ell, x_0} W)^{\underline{\alpha}}(x_0, \underline{x}) \partial_{\underline{\alpha}} F(x_0, \underline{x})
\end{split} \\
\begin{split}
\label{eq:renormalization_kernel_action}
\int \dd{x_0} \dd{\underline{x}} \ren^a_{R, x_0} [W(x_0, \underline{x}) F(x_0, \underline{x})] = \int \dd{x_0} \dd{\underline{x}} (\kren^a_{R, x_0} W)^{\underline{\alpha}}(x_0, \underline{x})  \partial_{\underline{\alpha}} F(x_0, \underline{x})
\end{split}
\end{align}
with $a = 0, 1$ and
\begin{subequations}
\begin{align}
\begin{split}
\label{eq:loc_kernel_general}
(\kloc_{\ell, x_0} W)^{\underline{\alpha}}(x_0, \underline{x}) & \equiv \frac{1}{\ell!} \binom{\ell}{\underline{\alpha}} (\underline{x} - \underline{x}_0)^{\underline{\alpha}}_\torus \, \delta_N^{(n)}(\underline{x} - \underline{x}_0) \int \dd{\underline{y}} W(0, \underline{y}),
\end{split}
\\
\begin{split}
\label{eq:ren_taylor_kernel_general}
(\kren_{R, x_0}^0 W)^{\underline{\alpha}}(x_0, \underline{x}) & 
\equiv \binom{R}{\underline{\alpha}} \int_0^1 \dd{t} \frac{(1-t)^{R-1}}{(R-1)!} \int \dd{\underline{y}} W(0, \underline{y}) \, \underline{y}^{\underline{\alpha}}_\torus \, \delta^{(n)}_N(\underline{x} - \underline{x}_0 - t \underline{y}_\torus),
\end{split}
\\
\begin{split}
\label{eq:ren_zero_mass_kernel_general}
(\kren_{R, x_0}^1 W)^{\underline{\alpha}}(x_0, \underline{x}) & \equiv -(\mathscr{S}\kloc_{R-1, x_0} W)^{\underline{\alpha}}(x_0, \underline{x}),
\end{split}
\end{align}
\end{subequations}
where $\mathscr{S}(\, \cdot \,) \equiv (\, \cdot \,) - (\, \cdot \,) \vert_{m_N = 0}$, $\underline{x}_0 \equiv (x_0, \dots, x_0)$ and $n=1$ or $n=2$ depending on whether $W$ carries a $J$ source field or not. This time, $\underline{\alpha}$ is a multi-index of the form $\underline{\alpha} = (\underline{\alpha}_1, \dots, \underline{\alpha}_n)$ with $\underline{\alpha}_\ell \in \lbrace 0, \dots, 3 \rbrace^{\abs*{\underline{\alpha}_\ell}}$. The constraints $\sum_{\ell = 1}^n \abs*{\underline{\alpha}_\ell} = \ell, R, R-1$ are respectively required in~\eqref{eq:loc_kernel_general},~\eqref{eq:ren_taylor_kernel_general} and~\eqref{eq:ren_zero_mass_kernel_general}, whereas
\begin{equation}
\binom{R}{\underline{\alpha}} \equiv \frac{R!}{\abs*{\alpha_1}! \cdots \abs*{\alpha_n}!}, \qquad \underline{x}^{\underline{\alpha}} \equiv \prod_{\ell=1}^n \prod_{s=1}^{\abs*{\alpha_\ell}} x_\ell^{(\alpha_\ell)_s}, \qquad \partial_{\underline{\alpha}} \equiv \prod_{\ell=1}^n \prod_{s=1}^{\abs*{\alpha_\ell}} \frac{\partial}{\partial x_\ell^{(\alpha_\ell)_s}}.
\end{equation}

\begin{rmk}
As it is clear from~\eqref{eq:renormalization_kernel_action} and~\eqref{eq:ren_taylor_kernel_general}, the $\ren^0$ operator produces a gain $R$ in the scaling dimension of the kernel on which it acts (namely, if $W$ has scaling dimension $D$, then $\kren^0 W$ has scaling dimension $D - R$). Since the kernels acting on monomials with two external $\psi$ fields have scaling dimension $1$, a $\ren$ operation which extracts a maximum gain $R = 2$ would be sufficient to make $D - R$ always negative; here we choose instead to get a maximum gain $R = 3$ in order to obtain a suitable bound for the anomalous gyromagnetic factor. The extra subtraction necessary to get $R = 3$ could, in principle, require to add some extra running coupling constants. However, this is not the case, as it is apparent from Lemma~\ref{lm:localization_request}. This is true only if we define, in addition to the renormalization $\ren^0$, also the renormalization $\ren^1$. The property~\eqref{eq:localization_request} follows from the exact Lorentz invariance and the approximate chiral invariance, broken by the mass $m$ which is much smaller than $M$.
\end{rmk}

\subsection{Single scale integration}
\label{subsec:single_scale_integration}

Thanks to the localization operator defined by~\eqref{eq:localization_definition}, we can finally show how to rewrite~\eqref{eq:generating_functional_on_scale_h} by lowering the scale from $h$ to $h-1$. At first, we reabsorb the quadratic terms contained into $\loc \sing{\mathscr{V}}{h}[\psi, \src]$ inside the Gaussian measure, thus getting
\begin{equation}
\label{eq:generating_functional_RL_explicit}
W[\src] = \log \int \dgauss{\sing{(g')}{\le h}}{\psi} \, \exp( \sing{V}{h}[\psi, \src]),
\end{equation}
where the functional $\sing{V}{h}$ is defined as
\begin{equation}
\loc \sing{V}{h}[\psi, \src] \equiv \sum_{s = \pm} \int_{\mathcal{M}_L} \dd[4]{x} (Z^{J, s}_h \psi^+_s \sigma_s^\mu J_\mu \psi^-_s)(x), \qquad \ren \sing{V}{h}[\psi, \src] \equiv \ren \sing{\mathscr{V}}{h}[\psi, \src]
\end{equation}
and the \emph{renormalized propagator} is
\begin{equation}
\label{eq:propagator_splitting}
\sing{(\hat{g}')}{\le h}(k) \equiv \frac{\chi_{h-1}(k)}{i k_\mu \gamma^\mu_{h-1} + \mathsf{m}_{h-1}} + \frac{f_{h}(k)}{i k_\mu \tilde{\gamma}^\mu_{h}(k) + \tilde{\mathsf{m}}_{h}(k)} \equiv \sing{\hat{g}}{\le h-1}(k) + \sing{\hat{g}}{h}(k)
\end{equation}
with $f_h(k) \equiv \chi_h(k) - \chi_{h-1}(k)$ and $Z^s_{h-1} \equiv Z^s_h + \beta^s_h, \,\, m^s_{h-1} \equiv m^s_h + \beta^{m,s}_h$. The matrices $\tilde{\gamma}^\mu_h(k), \tilde{\mathsf{m}}_h(k)$ are defined as $\gamma^\mu_h, \mathsf{m}_h$ (see~\eqref{eq:renormalized_propagator}) with $Z^s_h + \beta^s_h \chi_h(k), \, m^s_h + \beta^{m, s}_h \chi_h(k)$ in place of $Z^s_h, m^s_h$. The renormalized potential takes the form
\begin{equation}
\label{eq:veff_kernelsr}
\ren \sing{V}{h}[\psi, \src] = \sum_{A, B} \int \dd{\underline{x}} \dd{\underline{y}} \, (\kren \sing{W}{h})(A, \underline{x}; B, \underline{y}) \Psi(A, \underline{x}) \Src(B, \underline{y}),
\end{equation}
where $\kren$ is either equal to $\id$ or to a suitable combination of $\kren^0_{R, x_0}$ and $\kren^1_{R, x_0}$, where both $R$ and $x_0$ depend on the structure of the kernel on which they act in accordance with definition~\eqref{eq:localization_definition}. We stress that there are at most three derivatives for each fermion field; indeed, the $\ren$ operator respectively produces three and two extra derivatives when acting on $W \psi^+ \psi^-$ and $W \psi^+ \partial \psi^-$ and one extra derivative when acting on $W \psi^+ \partial^2 \psi^-$. Similarly, $\ren$ respectively produces two extra derivatives and one extra derivative when acting on $J \psi^+\psi^-$ and $J \psi^+ \partial \psi^-$. In any other case, $\ren$ behaves as the identity operator.

The running coupling constants $Z^{J, s}_{h-1}, Z^s_{h-1}, m^s_{h-1}$ are entirely determined by $Z^{J, s}_h$, $Z^s_h$, $m^s_h$, $\lambda$ by means of the relations
\begin{equation}
Z^{J, s}_{h-1} = Z^{J, s}_h + \beta^{J, s}_h, \qquad Z^s_{h-1} = Z^s_h + \beta^s_h, \qquad m^s_{h-1} = m^s_h + \beta^{m, s}_h.
\end{equation}
The quantities $\beta^{J, s}_h, \beta^s_h, \beta^{m, s}_h$ are known as \emph{beta functions} on scale $h$ and they depend on $\lambda$ and on the set $\lbrace Z^{J, s}_j, Z^s_j, m^s_j \rbrace_{j = h}^N$.

Thanks to~\eqref{eq:measure_additivity}, we can write
\begin{align}
\label{eq:generating_functional_additivity}
\begin{split}
W[\src] 
& = N_h + \log \int \dgauss{\sing{g}{\le h-1}}{\psi}\int \dgauss{\sing{g}{h}}{\psi}
\exp (\sing{V}{h}[\psi, \src]) \\
& = N_h + \log \int \dgauss{\sing{g}{\le h-1}}{\psi} \exp(\sing{\mathscr{V}}{h-1}[\psi, \src])
\end{split}
\end{align}
where $N_h$ is an inessential constant and 
\begin{equation}
\label{eq:V_h_integration_definition}
\sing{\mathscr{V}}{h-1}[\psi, \src] \equiv \log \int \dgauss{\sing{g}{h}}{\psi} \, \exp ( \sing{V}{h}[\psi, \src] ).
\end{equation}
This construction shows how to unambiguously pass from $h$ to $h-1$ inside~\eqref{eq:generating_functional_on_scale_h}. In the next section, we shall explicitly prove (see~\eqref{eq:sum_over_trees_P}) that $\sing{\mathscr{V}}{h}[\psi, \src]$ has always the same structure for every scale $h$, that is, it admits an expansion like~\eqref{eq:veff_kernels} with no more than three derivatives per fermion field.

\subsection{Tree expansion}
\label{subsec:trees}

The Gallavotti-Nicolò tree expansion (or, in what follows, the \emph{tree expansion} tout-court) is a convenient method to represent the recursive structure of multiple Renormalization Group steps~\cite{Gallavotti_1995, Mastropietro_Nonpert_Renorm, Gentile_2001}. As a starting point, we express~\eqref{eq:V_h_integration_definition} as
\begin{equation}
\label{eq:truncated_expval_potential}
\sing{\mathscr{V}}{h-1}[\psi, \src] = \sum_{n = 1}^{+\infty} \frac{1}{n!} \TExp{h}(\underbrace{\sing{V}{h}[\, \cdot + \psi, \omega], \dots, \sing{V}{h}[\, \cdot + \psi, \omega]}_{n \text{ times}}),
\end{equation}
where the \emph{truncated expectation value} of any set of $n$ functionals $\lbrace \mathscr{O}_1, \dots, \mathscr{O}_n \rbrace$ is
\begin{equation}
\TExp{h}(\mathscr{O}_1, \dots, \mathscr{O}_n) \equiv \frac{\partial^n}{\partial t_1 \cdots \partial t_n} \log \int \dgauss{\sing{g}{h}}{\psi} \, \exp(t_1 \mathscr{O}_1 + \cdots + t_n \mathscr{O}_n) \eval_{t_1 = 0, \dots, t_n = 0}
\end{equation}
(in the following sections, we will often write $\TExp{h}(\lbrace \mathscr{O}_\ell \rbrace_{\ell=1}^n)$ in place of $\TExp{h}(\mathscr{O}_1, \dots, \mathscr{O}_n)$). If $h = N$, we depict~\eqref{eq:truncated_expval_potential} in the following way,
\begin{equation}
\label{eq:gallavotti_N-1}
\sing{\mathscr{V}}{N-1} = 
\begin{tikzpicture}[baseline={(0, 0.2 - \MathAxis pt)}]
\draw (-0.8, 0) -- (0, 0);
\fill[black] (0,0) circle (1.5pt);
\node (N-1) at (0, -0.3) {\tiny $N$};
\draw (0,0) -- (0.8, 0);
\fill[black] (0.8,0) circle (1.5pt);
\node (N) at (0.8, -0.3) {\tiny $N+1$};
\end{tikzpicture} 
\,\, + \,\,
\begin{tikzpicture}[baseline={(0, 0.2 - \MathAxis pt)}]
\draw (-0.8, 0) -- (0, 0);
\draw (0, 0) -- (0.8, 0.4);
\draw (0, 0) -- (0.8, -0.4);
\fill[black] (0,0) circle (1.5pt);
\fill[black] (0.8, 0.4) circle (1.5pt);
\fill[black] (0.8, -0.4) circle (1.5pt);
\node (N-1) at (0, -0.3) {\tiny $N$};
\node (N1) at (0.8, -0.7) {\tiny $N+1$};
\node (N2) at (0.8, 0.7) {\tiny $N+1$};
\end{tikzpicture} 
\,\, + \,\,
\begin{tikzpicture}[baseline={(0, 0.2 - \MathAxis pt)}]
\draw (-0.8, 0) -- (0, 0);
\draw (0, 0) -- (0.8, 0.4);
\draw (0, 0) -- (0.8, 0);
\draw (0, 0) -- (0.8, -0.4);
\fill[black] (0,0) circle (1.5pt);
\fill[black] (0.8, 0.4) circle (1.5pt);
\fill[black] (0.8, -0.4) circle (1.5pt);
\fill[black] (0.8, 0) circle (1.5pt);
\node (N-1) at (0, -0.3) {\tiny $N$};
\node (N1) at (0.8, -0.7) {\tiny $N+1$};
\node (N2) at (1.3, 0) {\tiny $N+1$};
\node (N3) at (0.8, 0.7) {\tiny $N+1$};
\end{tikzpicture}
\,\, + \,\,
\begin{tikzpicture}[baseline={(0, 0.2 - \MathAxis pt)}]
\draw (-0.8, 0) -- (0, 0);
\draw (0, 0) -- (1, 0.6);
\draw (0, 0) -- (1, 0.2);
\draw (0, 0) -- (1, -0.2);
\draw (0, 0) -- (1, -0.6);
\fill[black] (0,0) circle (1.5pt);
\fill[black] (1, 0.6) circle (1.5pt);
\fill[black] (1, 0.2) circle (1.5pt);
\fill[black] (1, -0.6) circle (1.5pt);
\fill[black] (1, -0.2) circle (1.5pt);
\node (N-1) at (0, -0.3) {\tiny $N$};
\node (N1) at (1, 0.9) {\tiny $N+1$};
\node (N2) at (1.5, 0.2) {\tiny $N+1$};
\node (N3) at (1.5, -0.2) {\tiny $N+1$};
\node (N4) at (1, -0.9) {\tiny $N+1$};
\end{tikzpicture}
\,\, + \cdots,
\end{equation}
it being understood that each dot with label $N+1$ represents a $\sing{V}{N}[\psi, \omega]$ factor. If every $\sing{V}{h}$ factor appearing at the right hand side of~\eqref{eq:truncated_expval_potential} is decomposed as $\loc \sing{V}{h} + \ren \sing{V}{h} = \loc \sing{V}{h} + \ren \sing{\mathscr{V}}{h}$, this equality can be recast into
\begin{equation}
\label{eq:truncated_expval_potential_splitting}
\sing{\mathscr{V}}{h-1} = \sum_{n = 1}^{+\infty} \frac{1}{n!} \sum_{\lbrace \mathscr{O}_1, \dots, \mathscr{O}_n \rbrace} \TExp{h}(\mathscr{O}_1, \dots, \mathscr{O}_n),
\end{equation}
where each $\mathscr{O}_\ell$ can be either equal to $\loc \sing{V}{h}[\, \cdot + \psi, \omega]$ or $\ren \sing{\mathscr{V}}{h}[\, \cdot + \psi, \omega]$. The renormalized potential $\ren \sing{\mathscr{V}}{h}[\psi, \omega]$ admits an expansion with the same structure as~\eqref{eq:truncated_expval_potential_splitting}, namely
\begin{equation}
\label{eq:truncated_expval_potential_renormalized}
\ren \sing{\mathscr{V}}{h} = \sum_{n = 1}^{+\infty} \frac{1}{n!} \sum_{\lbrace \mathscr{O}_1, \dots, \mathscr{O}_n \rbrace} \ren \TExp{h+1}(\mathscr{O}_1, \dots, \mathscr{O}_n);
\end{equation}
this time, each $\mathscr{O}_\ell$ can be either equal to $\loc \sing{V}{h+1}[\, \cdot + \psi, \omega]$ or $\ren \sing{\mathscr{V}}{h+1}[\, \cdot + \psi, \omega]$. After plugging~\eqref{eq:truncated_expval_potential_renormalized} into~\eqref{eq:truncated_expval_potential_splitting}, we obtain an expansion for $\sing{\mathscr{V}}{h-1}[\psi, \omega]$ as a function of $\loc \sing{V}{h}$, $\loc \sing{V}{h+1}$, $\ren \sing{\mathscr{V}}{h+1}$. This construction can be iterated by writing $\ren \sing{\mathscr{V}}{h+1}$ in terms of $\loc \sing{V}{h+2}, \ren \sing{\mathscr{V}}{h+2}$, then $\ren \sing{\mathscr{V}}{h+2}$ in terms of $\loc \sing{V}{h+3}, \ren \sing{\mathscr{V}}{h+3}$, and so on; ultimately, $\sing{\mathscr{V}}{h-1}$ will be expressed as a complicated sum of nested truncated expectation values of $\loc \sing{V}{h}$, $\loc \sing{V}{h+1}, \dots, \loc \sing{V}{N-1}$, $\sing{V}{N}$. With a simple generalization of the graphical representation~\eqref{eq:gallavotti_N-1}, each term of this expansion can be associated with a Gallavotti-Nicolò tree:
\begin{defn}[Gallavotti-Nicolò trees]
\label{def:trees}
Let $\tau$ be a finite, rooted, labeled tree graph with at least two vertices. $\tau$ is a Gallavotti-Nicolò tree if its root vertex is simple and if the labels $\lbrace h_v \in \Z \rbrace_{v \in \vrt(\tau)}$ satisfy $h_{v_2} = h_{v_1} + 1$ whenever $v_2$ is a child vertex of $v_1$.
\end{defn}

An example of Gallavotti-Nicolò tree is provided in figure~\ref{fig:tree}. The set of all the Gallavotti-Nicolò trees whose root label is equal to $h$ and whose endpoint labels are less or equal than $N+1$ is denoted by $\trees_{h|N}$. For every $j \in \Z$, the elements of $\trees_{j|j}$ are called \emph{trivial trees}; the set $\trees'_{h|N} \equiv \trees_{h|N} \setminus \trees_{h|h}$ only consists of nontrivial trees. Given some $\tau \in \trees_{h|N}$, the unique vertex $w_0 \in \vrt(\tau)$ with label $h+1$ will be called \emph{first nonroot vertex}. For every pair of vertices $v, w \in \tau$, we write $v \succ w$ ($w \prec v$) if $v$ is a child of $w$ and $v > w$ ($w < v$) if $h_v > h_w$. Furthermore, the notation $v \triangleleft w$ ($w \triangleright v$) is used if there exists a path $\lbrace a_1, \dots, a_n \rbrace \subseteq \vrt(\tau)$ such that $v = a_1 \prec \cdots \prec a_n = w$. For every $v \in \vrt(\tau)$, we define $\tau_v$ as the rooted subtree of $\tau$ obtained by taking $v$ itself, its parent vertex and all the vertices $\lbrace w \colon w \triangleright v \rbrace$, together with the respective edges (by construction, $\tau_v$ belongs to $\trees_{h_v - 1|N}$). Finally, $s_v$ denotes the number of children of $v$.

Each tree is recursively assigned a \emph{value} in the following way:
\begin{defn}[Value of a tree]
\label{def:tree_value}
Let $\tau \in \trees_{h|N}$ and let $w_0 \in \vrt(\tau)$ be its first nonroot vertex. We let
\begin{equation}
\val(\tau)[\psi, \omega] \equiv
\begin{cases}
\sing{V}{N}[\psi, \omega]			& \tau \text{ trivial}, \, h = N \\
\loc \sing{V}{h+1}[\psi, \omega]	& \tau \text{ trivial}, \, h < N \\[3pt]
\displaystyle \frac{1}{s_{w_0}!} \, \TExp{h_{w_0}}( \lbrace \ren^\star \val(\tau_v)[\, \cdot + \psi, \omega] \rbrace_{v \succ w_0} )		& \tau \text{ nontrivial}
\end{cases}
\end{equation}
where the operator $\ren^\star$ acts as
\begin{equation}
\ren^\star \val(\tau_v)[\, \cdot + \psi, \omega] =
\begin{cases}
\val(\tau_v)[\, \cdot + \psi, \omega]		&	\tau_v \text{ trivial} \\
\ren \val(\tau_v)[\, \cdot + \psi, \omega]	&	\tau_v \text{ nontrivial}
\end{cases}
\end{equation}
\end{defn}
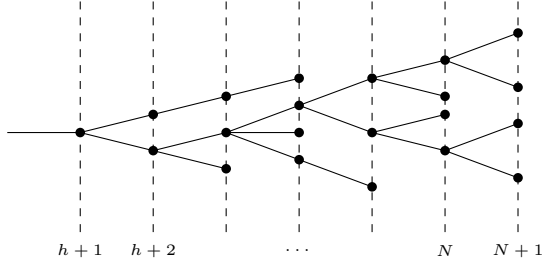
\begin{figure}[t]
\centering
\begin{tikzpicture}[scale=1.2]
\draw	(-0.8, 0) -- (0, 0)
	 	(0, 0) -- (0.8, 0.2) -- (1.6, 0.4) -- (2.4, 0.6)
		(0, 0) -- (0.8, -0.2) -- (1.6, -0.4)
		(0.8, -0.2) -- (1.6, 0)
		(1.6, 0) -- (2.4, 0.3)
		(1.6, 0) -- (2.4, 0)
		(1.6, 0) -- (2.4, -0.3) -- (3.2, -0.6)
		(2.4, 0.3) -- (3.2, 0.6)
		(2.4, 0.3) -- (3.2, 0)
		(3.2, 0.6) -- (4, 0.8)
		(3.2, 0.6) -- (4, 0.4)
		(3.2, 0) -- (4, 0.2)
		(3.2, 0) -- (4, -0.2)
		(4, 0.8) -- (4.8, 1.1)
		(4, 0.8) -- (4.8, 0.5)
		(4, -0.2) -- (4.8, 0.1)
		(4, -0.2) -- (4.8, -0.5);
\fill[black] 	(0,0) circle (1.5pt)
				(0.8, 0.2) circle (1.5pt)
				(0.8, -0.2) circle (1.5pt)
				(1.6, -0.4) circle (1.5pt)
				(1.6, 0) circle (1.5pt)
				(1.6, 0.4) circle (1.5pt)
				(2.4, 0.6) circle (1.5pt)
				(2.4, 0.3) circle (1.5pt)
				(2.4, 0) circle (1.5pt)
				(2.4, -0.3) circle (1.5pt)
				(3.2, 0.6) circle (1.5pt)
				(3.2, 0) circle (1.5pt)
				(3.2, -0.6) circle (1.5pt)
				(4, 0.8) circle (1.5pt)
				(4, 0.4) circle (1.5pt)
				(4, 0.2) circle (1.5pt)
				(4, -0.2) circle (1.5pt)
				(4.8, 1.1) circle (1.5pt)
				(4.8, 0.5) circle (1.5pt)
				(4.8, 0.1) circle (1.5pt)
				(4.8, -0.5) circle (1.5pt);
\draw[dashed]	(0, -1.1) -- (0, 1.5)
				(0.8, -1.1) -- (0.8, 1.5)
				(1.6, -1.1) -- (1.6, 1.5)
				(2.4, -1.1) -- (2.4, 1.5)
				(3.2, -1.1) -- (3.2, 1.5)
				(4, -1.1) -- (4, 1.5)
				(4.8, -1.1) -- (4.8, 1.5);
\node	(h)		at	(0, -1.3)	{\tiny $h+1$};
\node	(h1)		at	(0.8, -1.3)	{\tiny $h+2$};
\node	(dd)		at	(2.4, -1.3)	{\tiny $\dots$};
\node	(N)		at	(4, -1.3)	{\tiny $N$};
\node	(N1)		at	(4.8, -1.3)	{\tiny $N+1$};
\end{tikzpicture}
\caption{A Gallavotti-Nicolò tree. The root vertex is located at the extreme left of the graph and it is not explicitly marked. The endpoints with label $N+1$ or $j \le N$ respectively correspond to $\sing{V}{N}$ or $\loc \sing{V}{j}$ factors.}
\label{fig:tree}
\end{figure}
\noindent
According to this definition, the complicated expansion described above can be synthetically represented as
\begin{equation}
\label{eq:sum_over_trees}
\sing{\mathscr{V}}{h}[\psi, \omega] = \sum_{\tau \in \trees_{h|N}'} \val(\tau)[\psi, \omega].
\end{equation}

The tree expansion can also be defined at the level of the kernels $\sing{W}{h}$ appearing in~\eqref{eq:veff_kernels}. To do this, for every $\tau \in \trees_{h|N}, v \in \vrt(\tau)$ we define
\begin{itemize}
\item A set $P_v = P_v^+ \cup P_v^-$ whose elements are called \emph{external fields} of $v$. If $v$ is an endpoint, its external fields are equal to the fermion fields occurring into the explicit expression of $\val(\tau_v)[\psi, \omega]$; if $v$ is not an endpoint, then $P_v \subseteq \cup_{w \succ v} P_w$. A general collection $\lbrace P_v \rbrace_{v \in \vrt(\tau)}$ is denoted by $\underline{P}$.
\item A set $\Delta_v$ that contains the derivative indices falling on the external fields of $v$. As before, $\underline{\Delta}$ denotes a general collection $\lbrace \Delta_v \rbrace_{v \in \vrt(\tau)}$.
\item A set $I_v \equiv \cup_{w \succ v} P_w \setminus P_v$ whose elements are called \emph{internal fields} of $v$. We say that a vertex $v \in \vrt(\tau)$ is \emph{trivial} (with respect to a given $\underline{P}$) if $I_v = \varnothing$. Moreover, if $v \succ w$, we let $Q_v \equiv P_v \cap I_w$.
\item A set $U_v = U^J_v \cup U^{\eta, +}_v \cup U^{\eta, -}_v$ that contains all the source fields appearing inside the endpoints that are connected with $v$ by an increasing path of vertices.
\end{itemize}
Thanks to~\eqref{eq:renormalization_kernel_action}, the expansion~\eqref{eq:sum_over_trees} becomes
\begin{align}
\label{eq:sum_over_trees_P}
\begin{split}
\sing{\mathscr{V}}{h}[\psi, \omega] & = \sum_{\tau \in \trees_{h|N}'} \sum_{\underline{P}, \underline{\Delta}} \int \dd{\underline{x}(P_{w_0})} \dd{\underline{y}(U_{w_0})} \, W_{\tau, \underline{P}, \underline{\Delta}}(\underline{x}(P_{w_0}), \underline{y}(U_{w_0})) \Psi(P_{w_0}, \Delta_{w_0}) \Src(U_{w_0}), \\
\ren \sing{\mathscr{V}}{h}[\psi, \omega] & =
\begin{multlined}[t]
\sum_{\tau \in \trees_{h|N}'} \sum_{\underline{P}, \underline{\Delta}} \int \dd{\underline{x}(P_{w_0})} \dd{\underline{y}(U_{w_0})} \, (\kren W)_{\tau, \underline{P}, \underline{\Delta}}(\underline{x}(P_{w_0}), \underline{y}(U_{w_0})) \, \times \\
\hspace*{8cm} \times \Psi(P_{w_0}, \Delta'_{w_0}) \Src(U_{w_0}),
\end{multlined}
\end{split}
\end{align}
where $w_0$ is the first nonroot vertex of $\tau$, the monomials $\Psi(P_{w_0}, \Delta_{w_0}), \Src(U_{w_0})$ are constructed as in~\eqref{eq:field_monomials},~\eqref{eq:source_field_monomials} and $\underline{x}(P_{w_0}), \underline{y}(U_{w_0})$ are the spacetime points that the field monomials $\Psi(P_{w_0}, \Delta_{w_0}), \Src(U_{w_0})$ depend on. The set $\Delta'_{w_0}$ differs from $\Delta_{w_0}$ due to the extra derivatives produced the renormalization operator, as showed in~\eqref{eq:ren_kernel_eq}. The first line of~\eqref{eq:sum_over_trees_P} shows that $\sing{\mathscr{V}}{h}$ does indeed admit an expansion of the form~\eqref{eq:veff_kernels} for every scale $h$.

The kernel $W_{\tau, \underline{P}, \underline{\Delta}}$ is recursively defined as
\begin{equation}
\label{eq:W_tau_recursive}
W_{\tau, \underline{P}, \underline{\Delta}}(\underline{x}, \underline{y}) \equiv \frac{1}{s_{w_0}!} \! \int \! \prod_{v \succ w_0} \!\! \dd{\underline{x}(Q_v)} \TExp{h_{w_0}}(\lbrace \Psi(Q_v, \Delta_v') \rbrace_{v \succ w_0}) \! \prod_{v \succ w_0} \!\! \kren_v W_{\tau_v, \underline{P}, \underline{\Delta}}(\underline{x}(P_v), \underline{y}(U_v)),
\end{equation}
where $\kren_v = \kren$ or $\kren_v = \id$ depending on whether a renormalization operator acts on $v$ or not and $\Delta'_v$ is constructed in the same way as $\Delta'_{w_0}$. If $v$ is an endpoint, $W_{\tau_v, \underline{P}, \underline{\Delta}}$ is equal to one among
\begin{subequations}
\begin{align}
\begin{split}
\label{eq:J_endpoints}
\sing{W}{h; \alpha \beta \mu}_{(J)}(x, y, z) & \equiv \delta_N(x - y) \delta_N(x - z) (\gamma^\mu_{J,h})_{\alpha \beta},
\end{split} \\
\begin{split}
\label{eq:lambda_endpoints}
\sing{W}{N; \alpha \beta \sigma \rho}_{(\lambda)}(x_1, x_2, x_3, x_4) & \equiv - \frac{\lambda^2}{2} \delta_N(x_1  - x_2) \delta_N(x_3 - x_4) v(x_1 - x_3) \, (\Upsilon^\mu)_{\alpha \beta} (\Upsilon_\mu)_{\sigma \rho},
\end{split} \\
\begin{split}
\label{eq:eta_endpoints}
\sing{W}{N; \alpha \beta}_{(\eta)}(x, z) & \equiv \delta_{\alpha \beta} \delta_N(x - z),
\end{split}
\end{align}
\end{subequations}
based on the sets $P_v, \Delta_v, U_v$ and on the structure of $\tau$ (the matrix $\gamma^\mu_{J, h}$ is constructed as in~\eqref{eq:renormalized_propagator} with $Z^J_h$ in place of $Z_h$ and $\Upsilon^\mu \equiv \gamma^\mu - \kappa \gamma^\mu\gamma^5$). This suggests that endpoints can be naturally classified as $J$, $\lambda$ or $\eta$ endpoints. By construction, $J$ endpoints can occur at every energy scale, while $\lambda, \eta$ endpoints must lie on scale $N+1$; moreover, definition~\ref{def:tree_value} implies that $W_{\tau, \underline{P}, \underline{\Delta}}$ is \emph{identically vanishing} if $\tau$ contains a $J$ endpoint lying on scale $h$ that is not immediately preceded by a nontrivial vertex with label $h-1$.

\subsection{Feynman diagrams}
The truncated expectation value of a product of fermionic fields admits the representation
\begin{equation}
\label{eq:TExp_feynman}
\TExp{h}(\psi^+_{\alpha_1}(x_1) \cdots \psi^+_{\alpha_n}(x_n) \psi^-_{\beta_1}(y_1) \cdots \psi^-_{\beta_n}(y_n)) = \sum_{\phi \in \mathcal{F}} \epsilon_\phi \prod_{\ell \in \edg(\phi)} \sing{g}{h}_{\alpha_{\ell_1} \beta_{\ell_2}}(x_{\ell_1}, x_{\ell_2}),
\end{equation}
where $\mathcal{F}$ is the set of all the possible connected graphs (called \emph{Feynman diagrams}) obtained by drawing the spacetime points $x_1, \dots, x_n, y_1, \dots, y_n$, reorganizing the fields in pairs of the form $\lbrace \psi^-_{\alpha_a}(x_a), \psi^+_{\beta_b}(y_b) \rbrace$ and drawing an oriented line from $y_b$ to $x_a$ for each of such pairs ($\epsilon_\phi$ is the sign of the permutation needed to convert the product~\eqref{eq:TExp_feynman} into the set of ordered pairs associated with $\phi$). By iteratively applying~\eqref{eq:TExp_feynman} to~\eqref{eq:W_tau_recursive}, we can express $W_{\tau, \underline{P}, \underline{\Delta}}$ as a sum over Feynman diagrams with a \emph{multiscale} structure determined by the tree $\tau$. The procedure to construct a multiscale Feynman diagram associated with a tree $\tau$ goes as follows:
\begin{itemize}
\item Draw an interaction vertex for each endpoint of $\tau$. All the possible interaction vertices are listed in Figure~\ref{fig:interaction_vertices}.
\item If a family of endpoints has a common \emph{nontrivial} parent vertex $v \in \vrt(\tau)$, connect the corresponding family interaction vertices by contracting some of their external legs, respecting the directions of the arrows. A single scale propagator $\sing{g}{h_v}$ is associated with each contracted line. As a result, one obtains a \emph{cluster} for each family of endpoints sharing the same nontrivial parent vertex.
\item Iterate the above construction following the vertex structure of $\tau$. Namely, if the vertices $v_1, \dots, v_n$ associated with a family of clusters and/or interaction vertices have a common nontrivial parent vertex $w$, connect these clusters and/or interaction vertices with a set of lines carrying the single scale propagator $\sing{g}{h_w}$.
\item If the $\ren$ operator acts nontrivially on a vertex $v$, it must act on the corresponding cluster as well.
\end{itemize}
\begin{figure}[t]
\centering
\begingroup
\setlength{\tabcolsep}{0.3cm}
\begin{tabular}{ccl}
\begin{tikzpicture}[baseline={-2pt}]
\begin{feynman}
\draw[dashed, with arrow]	(-1, -1) -- (0, 0);
\draw[dashed, with arrow]	(0, 0) -- (-1, 1);
\draw[dashed, with arrow]	(2, -1) -- (1, 0);
\draw[dashed, with arrow]	(1, 0) -- (2, 1);
\draw[photon]				(0, 0) -- (1, 0);
\end{feynman}
\node	(p1)	at	(-0.15, -0.65)	{$p_1$};
\node	(p2)	at	(1.15, -0.65)	{$p_2$};
\node	(p3)	at	(-0.15, 0.65)	{$p_3$};
\node	(p4)	at	(1.15, 0.65)	{$p_4$};
\node	(i1)	at	(-1, -0.7)		{$\alpha$};
\node	(i2)	at	(-1, 0.65)		{$\beta$};
\node	(i3)	at	(2, -0.7)		{$\sigma$};
\node	(i4)	at	(2, 0.65)		{$\rho$};
\end{tikzpicture}
& $\longrightarrow$ &
$\displaystyle -\lambda^2 \, \displaystyle \frac{1}{L^4} \sum_q \hat{v}(q) \, \delta_{p_1 - p_3 - q} \, \delta_{p_2 - p_4 + q} \, (\Upsilon^\mu)_{\alpha \beta} (\Upsilon_\mu)_{\sigma \rho}$ \\[1.2cm]
\begin{tikzpicture}[baseline={-2pt}]
\begin{feynman}
\draw[dashed, with arrow]		(-1, -1) -- (0, 0);
\draw[dashed, with arrow]		(0, 0) -- (-1, 1);
\draw[photon, densely dotted]	(0, 0) -- (1, 0);
\end{feynman}
\node	(p1)	at	(-0.25, -0.7)	{$p$};
\node	(p2)	at	(-0.25, 0.75)	{$p'$};
\node	(i1)	at	(-1, -0.7)		{$\alpha$};
\node	(i2)	at	(-1, 0.65)		{$\beta$};
\node	(m)		at	(1.2, 0)		{$\mu$};
\node	(q)		at	(0.5, 0.3)		{$q$};
\end{tikzpicture}
& $\longrightarrow$ &
$\displaystyle 
\begin{pmatrix}
0							&	\sigma^\mu_+ Z^{J, +}_h \\
\sigma^\mu_- Z^{J, -}_h		&	0
\end{pmatrix}_{\alpha \beta}
\delta_{p' - p - q} \equiv (\gamma^\mu_J)_{h, \alpha \beta} \, \delta_{p' - p - q}$ \\[1.2cm]
\begin{tikzpicture}[baseline={-2pt}]
\begin{feynman}
\draw[dashed, with arrow]		(-1, 0.5) -- (1, 0.5);
\draw[dashed, with arrow]		(1, -0.5) -- (-1, -0.5);
\fill[black]	(1, 0.5)	circle	(1.5pt);
\fill[black]	(-1, -0.5)	circle	(1.5pt);
\end{feynman}
\node	(p1)		at	(0, 0.8)		{$p$};
\node	(p2)		at	(0, -0.8)		{$p$};
\node	(k1)		at	(1.2, 0.75)		{$k$};
\node	(k2)		at	(1.2, 0.35)		{$\beta$};
\node	(k3)		at	(-1.25, -0.27)	{$k$};
\node	(k4)		at	(-1.25, -0.67)	{$\beta$};
\node	(a1)		at	(1.25, -0.5)		{$\alpha$};
\node	(a2)		at	(-1.25, 0.5)	{$\alpha$};
\end{tikzpicture}
& $\longrightarrow$ &
$\delta_{\alpha \beta} \, \delta_{p - k}$
\end{tabular}
\endgroup
\caption{Interaction vertices and their values in Fourier space. These vertices are respectively associated with $\lambda$ endpoints, $J$ endpoints and $\eta$ endpoints. Note that the value of a $J$ interaction vertex depends on the scale $h$ of the first nontrivial vertex that precedes the corresponding endpoint.}
\label{fig:interaction_vertices}
\end{figure}

\begin{figure}[t]
\centering
\begin{tikzpicture}[baseline]
\draw	(-0.5, 0) -- (4, 0)
		(1.5, 0) -- (4, -1)
		(3, 0) -- (3.5, -0.25)
		(0, 0) -- (4, 0.5)
		(0, 0) -- (4, 1);
\fill[black]		(0, 0)			circle (1.5pt)
				(1.5, 0)			circle (1.5pt)
				(3, 0)			circle (1.5pt)
				(3.5, -0.25)		circle (1.5pt)
				(4, 0)			circle (1.5pt)
				(4, -1)			circle (1.5pt)
				(4, 0.5)			circle (1.5pt)
				(4, 1)			circle (1.5pt);
\draw[dashed]	(0, -1.2) -- (0, 1.2)
				(1.5, -1.2) -- (1.5, 1.2)
				(3, -1.2) -- (3, 1.2)
				(4, -1.2) -- (4, 1.2);
\node	(hs)		at	(0, -1.4)		{\tiny $h^\star$};
\node	(h2)		at	(1.5, -1.4)		{\tiny $h_2$};
\node	(h1)		at	(3, -1.4)		{\tiny $h_1$};
\node	(ep)		at	(4, -1.4)		{\tiny $N+1$};
\node	(l1)		at	(4.2, 0)		{\tiny $\lambda$};
\node	(l2)		at	(4.2, -1)		{\tiny $\lambda$};
\node	(J)		at	(3.7, -0.25)		{\tiny $J$};
\node	(n1)		at	(4.2, 0.5)		{\tiny $\eta$};
\node	(n2)		at	(4.2, 1)		{\tiny $\eta$};
\end{tikzpicture}
\qquad \qquad \qquad \quad
\begin{tikzpicture}[baseline]
\begin{feynman}
\draw[with arrow]	(0, 0) -- (-1, 1);
\draw[with arrow]	(-1, 1) -- (-2, 2);
\draw[with arrow]	(-2, 2) -- (-2.6, 2.6);
\draw[with arrow]	(-2.6, -2.6) -- (-2, -2);
\draw[with arrow]	(-2, -2) -- (-1, -1);
\draw[with arrow]	(-1, -1) -- (0, 0);
\draw[photon]		(-1, -1) -- (-1, 1)
					(-2, -2) -- (-2, 2);
\draw[photon, densely dotted]		(0, 0) -- (0.8, 0);
\end{feynman}
\draw[black, thick]	(-1.15, -1.2) rectangle (0.2, 1.2);
\draw[black, thick]	(-2.15, -2.2) rectangle (0.35, 2.2);
\draw[black, thick]	(-2.75, -2.75) rectangle (0.5, 2.75);
\fill[black]		(-2.6, -2.6) circle (1.5pt)
				(-2.6, 2.6) circle (1.5pt);
\node	(hs1)	at	(-2.15, -2.55)		{\tiny $h^\star$};
\node	(hs2)	at	(-2.15, 2.55)		{\tiny $h^\star$};
\node	(h21)	at	(-1.3, -1.7)		{\tiny $h_2$};
\node	(h22)	at	(-1.3, 1.7)			{\tiny $h_2$};
\node	(h11)	at	(-0.3, -0.7)		{\tiny $h_1$};
\node	(h12)	at	(-0.3, 0.7)			{\tiny $h_1$};
\end{tikzpicture}
\caption{A multiscale Feynman diagram (on the right) together with the corresponding Gallavotti-Nicolò tree (on the left).}
\label{fig:feynman_example}
\end{figure}
\noindent
While computing the value of a diagram, one needs to sum over all the momenta flowing within its internal propagators; in the $L \to +\infty$ limit, these sums are replaced by integrals and the Kronecker deltas $\delta_{p - k}, \delta_{p_1 - p_3 - q}, \cdots$ are replaced by Dirac deltas.

To see how renormalization acts at the level of Feynman diagrams, it is convenient to work directly in Fourier space. As an example, let us consider a kernel $W$ acting on a field monomial of the form $\psi^+_\alpha \psi^-_\beta J_\mu$. Recalling that the highest order localizations occurring in~\eqref{eq:localization_definition} do not contribute to $\loc \sing{\mathscr{V}}{h}[\psi, \omega]$ (see~\eqref{eq:localization_request}), a straightforward application of~\eqref{eq:localization_definition} yields
\begin{equation}
\ren \int W^{\alpha \beta \mu} \psi^+_\alpha \psi^-_\beta J_\mu = \frac{1}{L^8} \sum_{p', p \in (2\pi Z/L)^4} [\hat{W}^{\alpha \beta \mu}(p', p) - \hat{W}^{\alpha \beta \mu}(0, 0)] \hat{\psi}^+_{\alpha p'} \hat{\psi}^-_{\beta p} \hat{J}_{\mu, p' - p}.
\end{equation}
Hence, renormalizing a multiscale Feynman diagram with two external fermionic legs and one $J$ source leg simply amounts to subtract its zero Fourier mode. Similarly, one can see that the renormalization operator acts on a two-legged multiscale Feynman diagram by subtracting its first-order Taylor polynomial centered around zero. Some examples of this will be provided in Section~\ref{sec:g-2_calculation}.

We will use the Feynman diagram representation to compute $\rgyr$ at lowest order in $\lambda^2$. However, to obtain bounds on $\rgyr$ at all orders it is convenient to represent the truncated expectation values in a different way, as discussed below. This is due to the fact that Feynman diagrams have bad combinatorial properties.

\section{Bounds on the renormalized expansion}
\label{sec:renormalized_expansion_bounds}

In order to bound the kernels $W_{\tau, \underline{P}, \underline{\Delta}}$ occurring in~\eqref{eq:sum_over_trees_P}, it is convenient to introduce a suitable \emph{weighted} $L^1$ norm~\cite[Section 4.2.1]{Giuliani_Mastropietro_Rychkov_2021}. For any finite subset $A \subseteq \mathcal{M}_L$ and for every integer $h \le N$, we define the weight function as $\weight_h(A) \equiv \exp (c \sqrt{2^h \, \mathrm{diam}(A)})$, where the notion of diameter is defined with respect to the distance $\abs*{\, \cdot \,}_L$ introduced in Section~\ref{subsec:model_definition} (if $A = \lbrace x, y \rbrace$, $\mathrm{diam}(A)$ reduces to $\abs*{x - y}_L$) and $c$ is some positive constant such that $\int \dd[4]{x} \abs*{\sing{g}{h}(x)} \weight_h(x), \int \dd[4]{x} \abs*{\delta_N(x)} \weight_{N+1}(x) \le +\infty$. It will be clear from Lemma~\ref{lm:propagator_bounds} that this constant exists and it is determined by the choice of the cutoff function $\chi_0$.
\begin{defn}[Weighted norm]
\label{def:norm}
Let $h \in \Z, h \le N$. The $h$-weighted norm of a kernel $\sing{W}{h}(A, B)$ is
\begin{equation}
\label{eq:weighted_norm_alternative}
\norm*{\sing{W}{h}(A, B)}_{h} \equiv \int \dd_0{\underline{x}} \dd{\underline{y}} \abs*{\sing{W}{h}(A, \underline{x}; B, \underline{y})} \, \weight_h( \underline{x}, \underline{y}),
\end{equation}
where the notation $\int \dd_0{\underline{x}} \dd{\underline{y}}$ means that we must integrate over all the spacetime points belonging to $(\underline{x}, \underline{y})$ except one, which is held fixed to zero.
\end{defn}

The effect of the renormalization on kernels is described by the following lemma.
\begin{lm}[Renormalized norm bounds]
\label{lm:taylor_renormalization_gain}
Let $\sing{W}{h}(x_0, \dots, x_n) \equiv \sing{W}{h}(x_0, \underline{x})$ be a smooth, periodic, translationally invariant kernel (that is, $\sing{W}{h}(x_0, x_1, \dots, x_n) = \sing{W}{h}(0, x_1-x_0, \dots, x_n-x_0)$). If $\norm*{\sing{W}{h}}_h < +\infty$, then
\begin{subequations}
\begin{align}
\begin{split}
\label{eq:renormalization_gain_kernels}
\norm*{(\kren_{R, x_0}^0 \sing{W}{h})^{\underline{\alpha}}}_{h'} & \le C_\ren \cdot 2^{-(h+1)R} \norm*{\sing{W}{h}}_h
\end{split}\\
\begin{split}
\label{eq:localization_gain_kernels}
\norm*{(\kloc_{\ell, x_0} \sing{W}{h})^{\underline{\alpha}}}_{h'} & \le C_\ren \cdot 2^{-(h+1)\ell} \norm*{\sing{W}{h}}_h
\end{split}
\end{align}
\end{subequations}
for \emph{every} scale $h' < h$, where $C_\ren$ is a positive constant that does not depend on $h, h'$.
\end{lm}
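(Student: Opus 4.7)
The plan is to substitute the explicit formulas~\eqref{eq:loc_kernel_general} and~\eqref{eq:ren_taylor_kernel_general} into Definition~\ref{def:norm}, exploit translation invariance to reduce the weighted norm to a single $\underline{y}$-integral against $W^{(h)}(0, \underline{y})$, and then extract the advertised factor $2^{-(h+1)R}$ (resp.\ $2^{-(h+1)\ell}$) from the polynomial weight created by the $R$ extra derivatives (resp.\ by the Taylor expansion). The gain will ultimately stem from the elementary inequality
\begin{equation*}
\abs{\underline{y}}_L^R \, \weight_{h'}(0, \underline{y}) \le C_R \cdot 2^{-hR} \, \weight_h(0, \underline{y}) \qquad \text{for every } h' < h,
\end{equation*}
whose validity crucially requires the \emph{square root} appearing in the exponent of $\weight_h$.

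Concentrating on~\eqref{eq:renormalization_gain_kernels}, I would use translation invariance of $W^{(h)}$ to fix $x_0 = 0$ in the norm. Substituting~\eqref{eq:ren_taylor_kernel_general} and taking absolute values under all integrals yields
\begin{equation*}
\norm*{(\kren^0_{R, x_0} W^{(h)})^{\underline{\alpha}}}_{h'} \le \binom{R}{\underline{\alpha}} \int_0^1 \!\! \dd{t} \int \! \dd{\underline{y}} \abs{W^{(h)}(0, \underline{y})} \abs{\underline{y}^{\underline{\alpha}}_\torus} \int \! \dd{\underline{x}} \abs{\delta_N^{(n)}(\underline{x} - t\underline{y}_\torus)} \weight_{h'}(0, \underline{x}).
\end{equation*}
The change of variables $\underline{x} \mapsto \underline{x} + t\underline{y}_\torus$ strips the shift from $\delta_N^{(n)}$, and the sub-multiplicative property $\weight_{h'}(A \cup B) \le \weight_{h'}(A) \weight_{h'}(B)$, valid whenever $A$ and $B$ share a common point as an immediate consequence of the triangle inequality for $\abs{\,\cdot\,}_L$ together with the subadditivity of the square root, then factorizes $\weight_{h'}(0, \underline{x} + t\underline{y}_\torus)$ into $\weight_{h'}(0, \underline{x}) \, \weight_{h'}(0, t\underline{y}_\torus)$. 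The remaining integral $\int \dd{\underline{x}}\, |\delta_N^{(n)}(\underline{x})|\, \weight_{h'}(0, \underline{x})$ is bounded uniformly in $N$ and in $h' \le N$: this is precisely what the Gevrey class~$2$ assumption on $\chi_0$ is designed to provide, via superpolynomial decay of $\delta_N$ at scale $2^{-N}$, and it is the mechanism that constrains the choice of the constant $c$ entering $\weight_h$.

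After bounding $\abs{(y)_\torus}_L \le \abs{y}_L$ and $\weight_{h'}(0, t\underline{y}_\torus) \le \weight_{h'}(0, \underline{y})$, what remains is the key inequality above, which follows from the one-variable bound $s^{2R} \le C_R\, a^{-2R} e^{as}$ applied with $s = \sqrt{2^h \abs{\underline{y}}_L}$ and $a = c(1 - 2^{-1/2})$ (the worst case being $h' = h-1$); the $\underline{y}$-integral then collapses to $C_R\, 2^{-hR} \norm*{W^{(h)}}_h$. The bound~\eqref{eq:localization_gain_kernels} is handled by the same chain of manipulations: the $t$-interpolation is absent, the polynomial factor $\abs{(\underline{x} - \underline{x}_0)^{\underline{\alpha}}_\torus}$ sits on the $\underline{x}$-side rather than on the $\underline{y}$-side, but the change of variables and the same polynomial-exponential trade produce the desired $2^{-(h+1)\ell}$. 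The only genuinely delicate point, and the one I regard as the main obstacle, is the need for the weight to be \emph{sub-exponential}, so that a polynomial of arbitrary degree $R$ in $\abs{\underline{y}}_L$ can be absorbed without $c$ degrading with $R$; a purely linear exponent $\exp(c\, 2^h\, \mathrm{diam})$ would not admit such a trade and would invalidate the multiscale control of the renormalized expansion carried out in the rest of Section~\ref{sec:renormalized_expansion_bounds}.
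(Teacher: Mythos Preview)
Your proposal is correct and follows essentially the same route as the paper: the same change of variables $\underline{x}\mapsto\underline{x}+t\underline{y}_\torus$, the same diameter/triangle-inequality splitting of the weight (the paper writes it as $\mathrm{diam}(0,\underline{x}+t\underline{y}_\torus)\le\mathrm{diam}(0,\underline{y})+\sum_\ell|x_\ell|_L$, which is a minor variant of your factorization), the same uniform bound on $\norm{\delta_N^{(1)}}_{h'}$ via the Gevrey-$2$ hypothesis, and the same polynomial-to-stretched-exponential trade to extract $2^{-(h+1)R}$. The only imprecision is that your ``sub-multiplicativity of $\weight_{h'}(A\cup B)$'' is phrased for unions while what you actually need (and what your triangle-inequality argument actually gives) is the translated-set bound $\weight_{h'}(0,\underline{x}+t\underline{y}_\torus)\le\weight_{h'}(0,\underline{x})\,\weight_{h'}(0,t\underline{y}_\torus)$; this is harmless.
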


\begin{proof}
We limit to prove~\eqref{eq:renormalization_gain_kernels} (the proof of~\eqref{eq:localization_gain_kernels} can be done in an entirely similar way). According to~\eqref{eq:ren_taylor_kernel_general}, we have
\begin{multline}
\norm*{(\kren_{R, x_0}^0 W)^{\underline{\alpha}}}_h \le \int \dd{\underline{x}} \binom{R}{\underline{\alpha}} \int_0^1 \dd{t} \frac{(1-t)^{R-1}}{(R-1)!} \int \dd{\underline{y}} \abs*{W(0, \underline{y}) \, \underline{y}^{\underline{\alpha}}_\torus} \, \times \\
\times \abs*{\delta^{(n)}_N(\underline{x} - t \underline{y}_\torus)} \, e^{c \sqrt{2^{h'} \mathrm{diam}(0, \underline{x})}}.
\end{multline}
After performing the change of variables $\underline{x} \mapsto \underline{x} + t \underline{y}_\torus$, the can be rewritten as
\begin{equation}
\label{eq:lm:renormalization_gain_quasi_final}
\norm*{(\kren_{R, x_0}^0 \sing{W}{h})^{\underline{\alpha}}}_{h'} \le C \int_0^1 \dd{t} \int \dd{\underline{x}} \dd{\underline{y}} \abs*{W(0, \underline{y}) \, \underline{y}^{\underline{\alpha}}_\torus \, \delta^{(n)}_N(\underline{x})} \, e^{c \sqrt{2^{h'}\mathrm{diam}(0, \underline{x} + t \underline{y}_\torus)}},
\end{equation}
where $C > 0$ is some $(h, h')$-independent constant. We now use the inequality
\begin{equation}
\label{eq:diameter_decomposition}
\mathrm{diam}(0, \underline{x} + t \underline{y}_\torus) \le \mathrm{diam}(0, t \underline{y}_\torus) + \sum_{\ell = 1}^n \abs*{x_\ell}_L \le \mathrm{diam}(0, \underline{y}) + \sum_{\ell = 1}^n \abs*{x_\ell}_L.
\end{equation}
To prove it, suppose that the maximum distance between any two points among $0, x_1 + t (y_1)_\torus, \dots, x_n + t (y_n)_\torus$ is realized on the pair $\lbrace x_j + t (y_j)_\torus, x_k + t (y_k)_\torus \rbrace$. Then
\begin{align}
\label{eq:diameter_decomposition_intermediate}
\begin{split}
\mathrm{diam}(0, \underline{x} + t \underline{y}_\torus) = \abs*{x_j + t (y_j)_\torus - x_k - t (y_k)_\torus}_L
& \le \abs*{t(y_j)_\torus - t(y_k)_\torus}_L + \abs*{x_j - x_k}_L \\
& \le \abs*{t(y_j)_\torus - t(y_k)_\torus}_L + \sum_{\ell = 1}^n \abs*{x_\ell}_L.
\end{split}
\end{align}
Now, by exploiting the fact that 
\begin{align}
\begin{split}
\abs{ \, \sin \! \left[ \frac{\pi}{L} \left( \frac{tL}{2\pi} \sin \left( \frac{2\pi u}{L} \right) - \frac{tL}{2\pi} \sin \left( \frac{2\pi v}{L} \right) \right) \right] }
& = \abs{ \, \sin \! \left[ t \sin \left( \frac{\pi (u-v)}{L} \right) \cos \left( \frac{\pi (u+v)}{L} \right) \right] } \\
& \le \abs{\, \sin \left( \frac{\pi(u-v)}{L} \right) }
\end{split}
\end{align}
for every $u, v \in \R$ and for every $t \in [0, 1]$, we can write $\abs*{t(y_j)_\torus - t(y_k)_\torus}_L \le \abs{y_j - y_k}_L \le \mathrm{diam}(\underline{y}) \le \mathrm{diam}(0, \underline{y})$, where the last two inequalities follow from the definition of diameter. After plugging this into~\eqref{eq:diameter_decomposition_intermediate}, formula~\eqref{eq:diameter_decomposition} follows immediately; a similar argument applies if the maximum is realized on a pair of the form $\lbrace 0, x_\ell + t (y_\ell)_\torus \rbrace$.

Recalling that $\delta^{(n)}_N(\underline{x}) \equiv \delta^{(1)}_N(x_1) \cdots \delta^{(1)}_N(x_n)$, by using the bound~\eqref{eq:diameter_decomposition} inside~\eqref{eq:lm:renormalization_gain_quasi_final} we obtain
\begin{equation}
\label{eq:lm:renormalization_gain_penultimate}
\norm*{(\kren_{R, x_0}^0 \sing{W}{h})^{\underline{\alpha}}}_{h'} \le C' \left(\norm*{\delta^{(1)}_N}_{h'}\right)^n \int \dd{\underline{y}} \abs*{W(0, \underline{y}) \, \underline{y}^{\underline{\alpha}}_\torus} \, e^{c \sqrt{2^{h'}\mathrm{diam}(0, \underline{y})}}.
\end{equation}
We notice that $\norm*{\delta^{(1)}_N}_{h'} \le C''$ for some $(h', N)$-independent constant $C''$. In fact, $\delta^{(1)}_N$ is equal to the \emph{periodization} of the Schwartz function
\begin{equation}
\label{eq:infinite_volume_delta}
\phi_N(x) \equiv \int \frac{\dd[4]{k}}{(2\pi)^4} \, \chi_{N+1}(k) \, e^{i k \cdot x},
\end{equation}
in the sense that $\delta^{(1)}_N(x) = \sum_{n \in \Z^4} \phi_N(x + nL)$; therefore,
\begin{equation}
\label{eq:delta_estimate}
\begin{split}
\norm*{\delta^{(1)}_N}_{h'} \le \norm*{\delta^{(1)}_N}_{N+1} 
& = \int_{\mathcal{M}_L} \dd[4]{x} \abs{\sum_{n \in \Z^4} \phi_N(x + nL)} \, e^{c\sqrt{2^{N+1} \abs{x}_L}} \\
& \le \sum_{n \in \Z^4} \int_{\mathcal{M}_L} \dd[4]{x} \abs{\phi_N(x + nL)} \, e^{c\sqrt{2^{N+1} \abs{x}_L}} \\
& = \int_{\R^4} \dd[4]{x} \, \abs{\phi_N(x)} \, e^{c\sqrt{2^{N+1} \abs{x}}},
\end{split}
\end{equation}
where in the last line we performed the change of variables $x \mapsto x - nL$ and subsequently exploited the fact that $\abs{x-nL}_L = \abs{x}_L \le \abs{x}$ for every $n \in \Z^4$. It is evident from~\eqref{eq:infinite_volume_delta} that $\phi_N(x) = 2^{4(N+1)} \phi_0(2^{N+1} x)$, so
\begin{equation}
\norm*{\delta^{(1)}_N}_{h'} \le 2^{4(N+1)} \int_{\R^4} \dd[4]{x} \, \abs{\phi_0(2^{N+1} x)} \, e^{c\sqrt{2^{N+1} \abs{x}}}.
\end{equation}
The change of variables $x \mapsto 2^{-{N+1}} x$ then yields
\begin{equation}
\norm*{\delta^{(1)}_N}_{h'} \le 2^{4(N+1)} \cdot 2^{-4(N+1)} \int_{\R^4} \dd[4]{x} \, \abs{\phi_0(x)} \, e^{c\sqrt{\abs{x}}} = \int_{\R^4} \dd[4]{x} \, \abs{\phi_0(x)} \, e^{c\sqrt{\abs{x}}};
\end{equation}
finally, the fact that $\chi_0$ has Gevrey class $2$ implies that the last integral, which is manifestly independent from $h', N$, is finite (more details on this can be found in the proof of Lemma~\ref{lm:propagator_bounds}).

We are left to estimate the $h'$-weighted norm of the function $W(0, \underline{y}) \, \underline{y}^{\underline{\alpha}}_\torus$. Thanks to the bound $\xi \le 5e^{b \, \xi^{ \, 1/2}}/(9b^2)$ (holding for every $\xi, b > 0$), we have
\begin{multline}
\label{eq:lm:renormalization_gain_last_step}
\abs*{\underline{y}^{\underline{\alpha}}_\torus} \le \left[ \mathrm{diam}(0, \underline{y}) \right]^R \le \frac{e^{Rc(2^{h/2} - 2^{h'/2}) R^{-1} \sqrt{\mathrm{diam}(0, \underline{y})}}}{[c(2^{h/2} - 2^{h'/2})R^{-1}]^{2R} \cdot (9/5)^R} \, \le \\
\le 2^{-(h+1)R} \left(\frac{13R^2}{c^2} \right)^R e^{c(2^{h/2} - 2^{h'/2}) \sqrt{\mathrm{diam}(0, \underline{y})}}.
\end{multline}
The first inequality follows from the fact that, by construction, every component $(y_\ell)_\torus^\mu$ satisfies $\abs*{(y_\ell)_\torus^\mu} \le \abs*{y_\ell}_L \le \mathrm{diam}(0, \underline{y})$, whereas the last inequality is due to the elementary bound $(1 - 2^{(h' - h)/2})^{-2} \cdot (10/9) \le 13$. Formula~\eqref{eq:renormalization_gain_kernels} is obtained by plugging~\eqref{eq:lm:renormalization_gain_last_step} into~\eqref{eq:lm:renormalization_gain_penultimate}.
\end{proof}
\begin{rmk}
\label{rmk:constant}
We stress that the constant $(13R^2/c^2)^R$, which is potentially very large, depends on the shape of $\chi_0$. If $\chi_0$ is chosen so that its Gevrey class is close to $1$ and $c$ is sufficiently large, this constant becomes of order $1$. 
\end{rmk}

\begin{lm}[Propagator bounds]
\label{lm:propagator_bounds}
Let $h_0 \in \lbrace h^\star, \dots, N \rbrace$ and let $\mathscr{S}(\, \cdot \,) \equiv (\, \cdot \,) - (\, \cdot \,) \vert_{m_N = 0}$. Suppose that
\begin{equation}
\label{eq:lm:beta_bounds}
\abs*{\beta^s_h}, \, \abs*{2^{-h^\star} \beta^{m, s}_h} \le c \frac{\lambda^2 \cutoff^2}{M^2} 2^{h - N}, \qquad
\abs*{\mathscr{S} \beta^{J, s}_h}, \, \abs*{\mathscr{S} \beta^s_h} \le c \frac{\lambda^2 \cutoff^2}{M^2} 2^{h - N} 2^{h^\star - h}
\end{equation}
for every $h \in \lbrace h_0 + 1, \dots, N \rbrace$ and for some $h$-independent constant $c > 0$. If $c\lambda^2 \cutoff^2/M^2 < \epsilon$ for a sufficiently small $\epsilon \in (0, 1)$, there exists a constant $C_H > 0$ depending on $Z^s_N, m^s_N, \epsilon, \chi_0$ and two functions $\sing{g}{h}_{1, \ell}, \sing{g}{h}_{2, \ell}$ such that $\sing{g}{h}_{1, \ell} \star \sing{g}{h}_{2, \ell} \equiv \partial^\ell \sing{g}{h}$ (where $\star$ denotes the convolution operator) and
\begin{gather}
\begin{split}
\label{eq:propagator_weighted_norm}
\norm*{\partial^\ell \sing{g}{h}}_h \le C_H \cdot 2^{-h + \ell}, \qquad
\norm*{\mathscr{S} \partial^\ell \sing{g}{h}}_h \le C_H \cdot 2^{-h + \ell} \cdot 2^{h^\star - h},
\end{split} \\
\begin{split}
\label{eq:propagator_L_infty_norm}
\norm*{\sing{g}{h}_{1, \ell}}_{\infty} \cdot \norm*{\sing{g}{h}_{2, \ell}}_{\infty} \le C_H \cdot 2^{3h + \ell}, \qquad
\norm*{\mathscr{S} \sing{g}{h}_{1, \ell}}_{\infty} \cdot \norm*{\mathscr{S} \sing{g}{h}_{2, \ell}}_{\infty} \le C_H \cdot 2^{3h + \ell} \cdot 2^{h^\star - h}
\end{split}
\end{gather}
for every $\ell \colon \abs{\ell} \le 3$ and for every $h \in \lbrace h_0 + 1, \dots, N \rbrace$.
\end{lm}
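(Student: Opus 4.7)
The plan is to derive both bounds from the Fourier representation of $\sing{g}{h}$, exploiting the shell support of $f_h$, the Gevrey-$2$ regularity of $\chi_0$, and the smallness of the running constants ensured by~\eqref{eq:lm:beta_bounds}. As a preliminary step, I iterate the recursion $Z^s_{h-1} = Z^s_h + \beta^s_h$ (and similarly for $m^s_h$) from $N$ down to $h+1$; the bounds~\eqref{eq:lm:beta_bounds} then sum telescopically as a geometric series in $2^{j-N}$ and yield $\abs*{Z^s_h - Z^s_N} \le c'\lambda^2\cutoff^2/M^2$ and $2^{-h^\star}\abs*{m^s_h - m^s_N} \le c'\lambda^2\cutoff^2/M^2$, with an additional $2^{h^\star - h}$ suppression for $\mathscr{S}Z^s_h$ and $\mathscr{S}\beta^{J,s}_h$. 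Under the smallness assumption $c\lambda^2\cutoff^2/M^2 < \epsilon$, this forces $\tilde\gamma^\mu_h(k), \tilde{\mathsf{m}}_h(k)$ to remain uniformly close to $\gamma^\mu_N, \mathsf{m}_N$ on $\supp f_h$; since $\abs*{k}\sim 2^h \gg 2^{h^\star}$ there, $ik_\mu\tilde\gamma^\mu_h(k) + \tilde{\mathsf{m}}_h(k)$ is uniformly invertible with inverse of operator norm $O(2^{-h})$.

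For the weighted $L^1$ bound~\eqref{eq:propagator_weighted_norm}, I pass to Fourier space and rescale $k = 2^h q$, exhibiting $\partial^\ell \sing{g}{h}(x)$ in the self-similar form $2^{(3+\abs*{\ell})h}\Phi_\ell(2^h x)$. Here $\Phi_\ell$ has Fourier transform $(iq)^\ell f_0(q)/[iq_\mu\tilde\gamma^\mu_h(2^h q) + 2^{-h}\tilde{\mathsf{m}}_h(2^h q)]$, which is of Gevrey class $2$ (thanks to the analogous property of $\chi_0$ and the uniform invertibility of the denominator established above) and compactly supported in the annulus $\{1/2 \le \abs*{q} \le 2\}$. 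The standard Paley-Wiener-Gevrey estimate then yields $\abs*{\Phi_\ell(y)} \le C\,e^{-c_0\sqrt{\abs*{y}}}$ for some $c_0 > 0$ depending only on $\chi_0$. Choosing $c < c_0$ in Definition~\ref{def:norm} and changing variables $y = 2^h x$, one obtains
\begin{equation*}
\norm*{\partial^\ell \sing{g}{h}}_h \,\le\, C\, 2^{(\abs*{\ell}-1)h} \int \dd[4]{y}\, e^{-(c_0 - c)\sqrt{\abs*{y}}} \,\le\, C_H\cdot 2^{(\abs*{\ell}-1)h},
\end{equation*}
matching the stated bound. The $\mathscr{S}$-version follows from the resolvent identity applied to $\sing{\hat g}{h}$, which extracts a factor $\sim m_N/\abs*{k} = O(2^{h^\star - h})$ on $\supp f_h$.

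For the $L^\infty$ factorization~\eqref{eq:propagator_L_infty_norm}, I split $\hat{\sing{g}{h}}(k) = \hat{\sing{g}{h}}_{1,\ell}(k)\,\hat{\sing{g}{h}}_{2,\ell}(k)$ with
\begin{equation*}
\hat{\sing{g}{h}}_{1,\ell}(k) \equiv (ik)^{\ell_1}\sqrt{f_h(k)}\,[ik_\mu\tilde\gamma^\mu_h(k) + \tilde{\mathsf{m}}_h(k)]^{-1}, \qquad \hat{\sing{g}{h}}_{2,\ell}(k) \equiv (ik)^{\ell_2}\sqrt{f_h(k)},
\end{equation*}
for a suitable split $\ell = \ell_1 + \ell_2$. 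The non-overlap structure of $\chi_h,\chi_{h-1}$ together with the Gevrey-$2$ regularity of $\chi_0$ ensure that $\sqrt{f_h}$ is itself of Gevrey class $2$; repeating the rescaling and Paley-Wiener analysis of the preceding step on each factor controls $\norm*{\sing{g}{h}_{i,\ell}}_\infty$ by the dimensional power of $2^h$ needed to obtain the stated product bound. The $\mathscr{S}$-version is again supplied by the resolvent identity.

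The main obstacle is the Paley-Wiener-Gevrey decay: one must establish that the constant $c$ in Definition~\ref{def:norm} can be taken \emph{strictly} smaller than the Gevrey decay rate $c_0$ of the rescaled propagator, so that $\weight_h$ remains integrable against $e^{-c_0\sqrt{\abs*{y}}}$ after cancellation. This is the origin of the dependence of $C_H$ on $\chi_0$ in the statement and of the sensitivity of the constants to $c$ emphasized in Remark~\ref{rmk:constant}. A related delicate point is verifying that $\sqrt{f_h}$ inherits Gevrey class $2$ from $\chi_0$, which requires $f_h$ to vanish to sufficient Gevrey order at the boundary of its support.
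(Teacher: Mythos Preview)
Your proposal is correct and follows essentially the same route as the paper: rescaling $k\to 2^h q$ to exhibit self-similarity, Gevrey-$2$ regularity of $\chi_0$ (propagated through the uniformly invertible denominator) for the stretched-exponential decay underlying the weighted $L^1$ bound, a $\sqrt{f_h}$-based factorization for the Gram estimate, and a resolvent-type subtraction for $\mathscr{S}$. The only minor variation is your splitting $\sqrt{f_h}\,[ik\cdot\tilde\gamma+\tilde{\mathsf m}]^{-1}\cdot\sqrt{f_h}$ versus the paper's $\sqrt{f_h}\,[ik\cdot\tilde\gamma+\tilde{\mathsf m}]^{-2}\cdot\sqrt{f_h}\,[ik\cdot\tilde\gamma+\tilde{\mathsf m}]$; both give the same dimensional product, and your caveat about the Gevrey class of $\sqrt{f_h}$ applies equally to the paper's argument.
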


\begin{proof}
We prove the statement for $\ell = 0$, as the other cases can be easily derived from this one. The function $\sing{g}{h}$ is equal to the periodization of
\begin{equation}
\label{eq:propagator_periodization}
\int \frac{\dd[4]{k}}{(2\pi)^4} \, \frac{f_h(k) \, e^{i k \cdot x}}{i \tilde{\gamma}^\mu_h(k) k_\mu + \tilde{\mathsf{m}}_h(k)} = 2^{3h} \int \frac{\dd[4]{k}}{(2\pi)^4} \, \frac{f_0(k) \, e^{i k \cdot 2^h x}}{i \tilde{\gamma}^\mu_h(2^h k) k_\mu + 2^{-h}\tilde{\mathsf{m}}_h(2^h k)} \equiv 2^{3h} \phi_h(2^h x).
\end{equation}
Now let $\ell, s \in \N$ with $\ell$ even, and choose some $x \in \R^4$ such that $x_\mu \ne 0$ for a certain component $x_\mu$. A repeated integration by parts gives
\begin{equation}
\label{eq:lm:integration_by_parts}
\frac{(2^h x_\mu)^s (c^2 2^h x_\mu)^{\ell/2}}{\ell!} \phi_h(2^h x) = \frac{i^{\ell/2s} \, c^\ell}{\ell!} \int \frac{\dd[4]{k}}{(2\pi)^4} \, \partial_{k_\mu}^{\ell/2 + s}[ f_0(k) \cdot u^{-1}(k) ] e^{i k \cdot 2^h x},
\end{equation}
where $u(k) \equiv i \tilde{\gamma}^\mu_h(2^h k) k_\mu + 2^{-h}\tilde{\mathsf{m}}_h(2^h k)$. The functions $\tilde{\gamma}^\mu_h(\, \cdot \,)$, $\tilde{\mathsf{m}}^\mu_h(\, \cdot \,)$ have Gevrey class $2$, because they are defined in terms of $\chi_0$. This, together with the bounds~\eqref{eq:lm:beta_bounds}, implies that $\norm*{\partial^j_{k_\mu} u(k)} \le c_2 C^j (j!)^2$ for every $k \in \mathrm{supp}(f_0)$, where the constants $c_2, C$ only depend on $Z^s_N, m^s_N, \chi_0, \epsilon$. Moreover, we have
\begin{equation}
\label{eq:lm:u-1}
u^{-1}(k) = [-i \tilde{\gamma}^\mu_{h^\star}(2^{h^\star} k) k_\mu + 2^{-h^\star}\tilde{\mathsf{m}}_{h^\star}(2^{h^\star} k)] \cdot [k^2 \tilde{Z}^+_{h^\star}(2^{h^\star} k) \cdot \tilde{Z}^-_{h^\star}(2^{h^\star}k) + 2^{-2h^\star}\tilde{\mathsf{m}}^2_{h^\star}(2^{h^\star} k)]^{-1}.
\end{equation} 
The first factor can be bounded with some scale-independent constant $c_3$. The second factor is a diagonal matrix, so its operator norm is bounded by the maximum among its diagonal entries, which is itself dominated by $\max_{k \in \supp(f_0)} \abs*{k^2 \tilde{Z}^+_h(2^h k) \cdot \tilde{Z}^-_h(2^h k)}^{-1} \le c_4$ for a suitable constant $c_4$. In summary, we have $\norm*{u^{-1}(k)} \le c_3c_4 \equiv c_1$ for all $k \in \supp(f_0)$. According to the above considerations, starting from the elementary identity
\begin{equation}
0 = \partial^j_{k_\mu} (u \cdot u^{-1}) \Rightarrow \partial^j_{k_\mu} (u^{-1}) = - u^{-1} \sum_{n = 1}^j \binom{j}{n} \, \partial^n_{k_\mu} u \cdot \partial^{j - n}_{k_\mu} (u^{-1})
\end{equation}
it is easy to prove that $\norm*{\partial^j_{k_\mu} (u^{-1})(k)} \le c_1 (c_1 c_2 C)^j (j!)^2$ by induction on $j$. Since $f_0$ has Gevrey class $2$, we also have $\norm*{\partial^j_{k_\mu} f_0(k)} \le C_1^{j+1} (j!)^2$ for a suitable constant $C_1$; therefore, the derivatives appearing on the right hand side of~\eqref{eq:lm:integration_by_parts} are bounded by $C_2^{\ell/2 + s + 1} ((\ell/2+s)!)^2$ for every $k \in \mathrm{supp}(f_0)$, where $C_2$ is another constant. As a result, we have
\begin{equation}
\label{eq:gevrey_estimate}
\norm{(2^h x_\mu)^s \cdot \frac{(c^2 2^h x_\mu)^{\ell/2}}{\ell!} \phi_h(2^h x)} \le  2^{3h} \frac{c^\ell C_3^{\ell/2 + s + 1}}{\ell!} \cdot ((s + \ell/2)!)^2.
\end{equation} 
where $C_3$ is another suitable constant. A similar procedure can be followed if $\ell$ is odd. Eventually, if the constant $c$ is properly chosen, the right hand side of~\eqref{eq:gevrey_estimate} can be summed over $\ell$. On the left hand side, this sum reconstructs the stretched exponential $e^{c \sqrt{2^h \abs*{x_\mu}}}$, so we conclude that for every $s \in \N$ there exists a constant $C_s$ such that $e^{c \sqrt{2^h \abs*{x_\mu}}}\norm*{\phi_h(2^h x)} \le (2^{3h} C_s)(2^h \abs*{x_\mu})^{-s}$. A trivial extension of this argument gives the bound
\begin{equation}
\label{eq:xi_bound}
\norm*{2^{3h} \phi_h(2^h x)} \le \frac{2^{3h} C_s}{1 + (2^h \abs*{x})^s} \, e^{-c \sqrt{2^h \abs*{x}}} \qquad \forall x \in \R^4, \forall s \in \N.
\end{equation}
Finally, the desired bound on the weighted norm of $\sing{g}{h}$ follows by combining~\eqref{eq:xi_bound} with the inequality
\begin{equation}
\label{eq:L1_bound}
\int_{\mathcal{M}_L} \dd[4]{x} \abs*{\sing{g}{h}(x)} \weight_h(x) \le C' \int_\R \dd[4]{x} \abs{\phi_h(x)} \, e^{c\sqrt{2^h \abs{x}}},
\end{equation}
which has already been discussed in~\eqref{eq:delta_estimate}.

For what concerns~\eqref{eq:propagator_L_infty_norm}, one writes $\phi_h = \phi_{h, 1} \star \phi_{h, 2}$, where
\begin{align}
\label{eq:propagator_splitting_bounds}
\begin{split}
\phi_{h, 1}(x) 
& \equiv \int \frac{\dd[4]{k}}{(2\pi)^4} \, \frac{ \sqrt{f_h(k)} \, e^{i k \cdot x}}{[i \tilde{\gamma}^\mu_h(k) k_\mu + \tilde{\mathsf{m}}_h(k)]^2}, \\
\quad \phi_{h, 2}(x) 
& \equiv \int \frac{\dd[4]{k}}{(2\pi)^4} \, \sqrt{f_h(k)} \, [i \tilde{\gamma}^\mu_h(k) k_\mu + \tilde{\mathsf{m}}_h(k)] \, e^{i k \cdot x}. 
\end{split}
\end{align}
The same construction presented above can be used to show that $\phi_{h, 1}$ and $\phi_{h, 2}$ satisfy~\eqref{eq:xi_bound} with $2^h$ and $2^{4h}$ in place of $2^{3h}$. The bounds~\eqref{eq:propagator_L_infty_norm} are then obtained by defining $\sing{g}{h}_1$ and $\sing{g}{h}_2$ as the periodizations of $\phi_{h, 1}, \phi_{h, 2}$.

We are left to deal with $\mathscr{S} \sing{g}{h}$ (or $\mathscr{S} \sing{g}{h}_1, \mathscr{S} \sing{g}{h}_2$, which are entirely similar). Recalling that $m^s_h \vert_{m_N = 0} = 0$, it is sufficient to write
\begin{equation}
\mathscr{S} \sing{\hat{g}}{h}(k) = (\sing{\hat{g}}{h} - \sing{\hat{g}}{h} \vert_{m_N = 0})(k) = f_h(k) \frac{[i k_\mu (\tilde{\gamma}^\mu_h \vert_{m_N = 0} - \tilde{\gamma}^\mu_h)(k) + \tilde{\mathsf{m}}_h(k)] \cdot [i k_\mu \tilde{\gamma}^\mu_h \vert_{m_N = 0}(k)]^{-1}}{i \tilde{\gamma}^\mu_h(k) k_\mu + \tilde{\mathsf{m}}_h(k)}
\end{equation}
and proceed as before. The only difference consists in the fact that the numerator carries either a $\abs*{Z^s_h - Z^s_h \vert_{m_N = 0}}$ factor coming from $ik_\mu(\tilde{\gamma}^\mu_h \vert_{m_N = 0} - \tilde{\gamma}^\mu_h)[i k_\mu \tilde{\gamma}^\mu_h \vert_{m_N = 0}(k)]^{-1}$ or a $\abs*{m^s_h}/\abs*{k}$ factor coming from $\tilde{\mathsf{m}}_h(k) [i k_\mu \tilde{\gamma}^\mu_h \vert_{m_N = 0}(k)]^{-1}$. In both cases, the bounds~\eqref{eq:lm:beta_bounds} together with the constraint $\abs*{k} \in [2^{h-1}, 2^{h+1}]$ imply that there is an additional $(c \lambda^2 \cutoff^2/M^2) \cdot 2^{h^\star - h}$ gain, in accordance with~\eqref{eq:propagator_weighted_norm},~\eqref{eq:propagator_L_infty_norm}.
\end{proof}

Lemma~\ref{lm:taylor_renormalization_gain} is the key to prove that the renormalization operator improves the na\"ive dimensional estimate of the kernel on which it acts. Before getting into the details of this, we need to introduce a few more notations. Given tree $\tau \in \trees_{h|N}$ together with a suitable choice of the sets $\underline{P}, \underline{\Delta}$, consider any nontrivial vertex $v \in \vrt(\tau)$. Then
\begin{itemize}
\item $D_v \equiv 4 - 3(\abs*{P_v} + n^\eta_v)/2 - n^J_v - \mathfrak{d}_v$ and $\bar{D}_v \equiv D_v + 2 n_v + n^\eta_v$, where are respectively equal to the number of $\eta, J, \lambda$ endpoints that are connected with $v$ by an increasing path of vertices and $\mathfrak{d}_v$ is the number of derivatives falling on the external fields of $v$ \emph{before} the action of the renormalization operator.
\item $h'_v$ denotes the scale of the first nontrivial vertex that precedes $v$. If such vertex does not exist, $h'_v \equiv h$.
\item If $v$ is either an endpoint or the first nonroot vertex, we let $R_v \equiv 0$. In any other case,
\begin{equation}
R_v \equiv
\begin{cases}
3	&	\quad \abs*{P_v} = 2, \mathfrak{d}_v = 0, n^J_v = 0, n^\eta_v = 0 \\
2 	&	\quad \abs*{P_v} = 2, \mathfrak{d}_v = 1, n^J_v = 0, n^\eta_v = 0 \\
2	&	\quad \abs*{P_v} = 2, \mathfrak{d}_v = 0, n^J_v = 1, n^\eta_v = 0 \\
0	&	\quad \text{otherwise}
\end{cases}
\end{equation}
\end{itemize}

\begin{theorem}[Bounds on the renormalized tree expansion]
\label{th:renormalization_bound}
There exists a pair of constants $C, c > 0$ such that, if $c \lambda^2 \cutoff^2/M^2$ is sufficiently small, then
\begin{align}
\begin{split}
\label{eq:boundedness_beta_functions}
\abs*{\beta^{J, s}_h}, \, \abs*{\beta^s_h}, \, \abs*{2^{-h^\star} \beta^{m, s}_h} & \le c \frac{\lambda^2 \cutoff^2}{M^2} 2^{h - N}
\end{split} \\
\begin{split}
\label{eq:boundedness_beta_functions_zero_mass}
\abs*{\mathscr{S} \beta^{J, s}_h}, \, \abs*{\mathscr{S} \beta^s_h} & \le c \frac{\lambda^2 \cutoff^2}{M^2} 2^{h^\star - h} \, 2^{h - N}
\end{split}
\end{align}
for all $h \in \lbrace h^\star, \dots, N \rbrace$, and
\begin{align}
\begin{split}
\label{eq:renormalization_bound}
\norm*{W_{\tau, \underline{P}, \underline{\Delta}}}_h & \le C^k \cdot \left( \frac{\lambda^2 \cutoff^2}{M^2} \right)^n 2^{hD} \, \prod_v^\circ 2^{(h_v - h'_v)(D_v - R_v)} \prod^\lambda_v 2^{2(h'_v - N)} \prod^\eta_v 2^{-h'_v},
\end{split} \\
\begin{split}
\label{eq:renormalization_bound_zero_mass}
\norm*{\mathscr{S} W_{\tau, \underline{P}, \underline{\Delta}}}_h & \le 2^{h^\star - h_{w_0}} C^k \cdot \left( \frac{\lambda^2 \cutoff^2}{M^2} \right)^n 2^{hD} \, \prod_v^\circ 2^{(h_v - h'_v)(D_v - R_v)} \prod^\lambda_v 2^{2(h'_v - N)} \prod^\eta_v 2^{-h'_v}
\end{split}
\end{align}
for every $\tau \in \trees_{h|N}'$; the bound~\eqref{eq:renormalization_bound} also holds for $h = h^\star - 1, \tau \in \trees_{h^\star - 1|N}$ (see Section~\ref{subsec:last_scale} below). The products $\prod_v^\circ, \prod^\lambda, \prod^\eta$ respectively run over all the nontrivial vertices, the $\lambda$ endpoints and the $\eta$ endpoints of $\tau$, $D$ is equal to $D_{w_0}$ ($w_0$ being the first nonroot vertex of $\tau$) and $k, n$ denote the number of endpoints and the number of $\lambda$ endpoints of $\tau$. Finally, the $\mathscr{S}$ operator acts as $\mathscr{S}(\, \cdot \,) \equiv (\, \cdot \,) - (\, \cdot \,) \vert_{m_N = 0}$.
\end{theorem}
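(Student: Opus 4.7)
I would prove the four displayed bounds simultaneously by a downward induction on the scale $h$, starting at $h=N$ and proceeding to $h^\star$. The induction hypothesis at scale $h$ is that (\ref{eq:boundedness_beta_functions})--(\ref{eq:boundedness_beta_functions_zero_mass}) hold for scales $h+1,\dots,N$ and that (\ref{eq:renormalization_bound})--(\ref{eq:renormalization_bound_zero_mass}) hold for every $\tau\in\trees'_{h'|N}$ with $h'>h$. The base case is immediate: the only nontrivial tree with $h_{w_0}=N+1$ reduces to a single endpoint whose kernel is given explicitly by (\ref{eq:J_endpoints})--(\ref{eq:eta_endpoints}), and the claimed bounds follow from $\lambda^2\hat v(q)=\mathcal{O}(\lambda^2/M^2)$ on the support of $\chi_N$. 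At the inductive step one first proves the kernel bound at scale $h$ and then extracts the beta functions: by Lemma~\ref{lm:localization_request}, $\beta^{J,s}_h,\beta^s_h,\beta^{m,s}_h$ are specific matrix components of $\loc\sing{\mathscr{V}}{h}$, and (\ref{eq:sum_over_trees_P}) together with (\ref{eq:renormalization_bound})--(\ref{eq:renormalization_bound_zero_mass}) specialized to $|P_{w_0}|=2$, $n_{w_0}^J\in\{0,1\}$ yields (\ref{eq:boundedness_beta_functions})--(\ref{eq:boundedness_beta_functions_zero_mass}) at scale $h$, because in these sectors $D\in\{0,1\}$ and the renormalization gain $R_v\ge 2$ leaves $D_v-R_v\le -2$, which combines with the single-$\lambda$-endpoint-pair factor $(\lambda^2\cutoff^2/M^2)\,2^{2(h'_v-N)}$ to produce precisely the announced scaling.

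The heart of the argument is the kernel bound. Starting from the recursion (\ref{eq:W_tau_recursive}), I estimate the truncated expectation $\TExp{h_{w_0}}(\{\Psi(Q_v,\Delta'_v)\}_{v\succ w_0})$ by the Brydges--Battle--Federbush cluster expansion, which writes it as a sum over spanning trees $T$ connecting the children of $w_0$, each term being a product of propagators $\sing{g}{h_{w_0}}$ along the edges of $T$ times a Gram determinant over the remaining contracted half-lines. The weighted $L^1$ bound (\ref{eq:propagator_weighted_norm}) of Lemma~\ref{lm:propagator_bounds} controls the tree edges, the stretched-exponential weights composing correctly along any spanning tree into the global weight of Definition~\ref{def:norm}. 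The determinant is estimated by the Hadamard inequality, using the factorization $\sing{g}{h_{w_0}}=\sing{g}{h_{w_0}}_1\star\sing{g}{h_{w_0}}_2$ and the $L^\infty$ bounds (\ref{eq:propagator_L_infty_norm}) to exhibit an explicit Gram representation of the covariance; each contracted half-line then contributes $\mathcal{O}(2^{(3h_{w_0}+|\ell|)/2})$. The renormalization operators $\kren_v$ are finally handled by Lemma~\ref{lm:taylor_renormalization_gain}, which extracts the gain $2^{-(h_v+1)R_v}$ at every nontrivial vertex where renormalization is active.

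Combining these three estimates and collecting scale factors per vertex, a textbook power-counting reorganization (using $\bar D_v=D_v+2n_v+n_v^\eta$ to absorb endpoint contributions) rewrites the product of $2^{h_v}$'s as $2^{hD}\prod_v^\circ 2^{(h_v-h'_v)(D_v-R_v)}\prod_v^\lambda 2^{2(h'_v-N)}\prod_v^\eta 2^{-h'_v}$. The sum over $\underline P,\underline\Delta$ and over spanning trees within each cluster produces the combinatorial factor $C^k$, while the sum over $\tau\in\trees'_{h|N}$ is geometric and convergent because $D_v-R_v\le -2$ in all six sectors listed in (\ref{eq:localization_definition}) and $D_v-R_v\le -2$ also in the non-renormalized irrelevant sectors (where $R_v=0$ but $|P_v|+n_v^\eta\ge 4$ or derivatives are present). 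The smallness condition $c\lambda^2\cutoff^2/M^2<\epsilon$ is used exactly to sum the geometric series on $n$ and to invoke Lemma~\ref{lm:propagator_bounds}.

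For the $\mathscr{S}$ variant (\ref{eq:renormalization_bound_zero_mass}) I apply the Leibniz rule to $\mathscr{S}$ acting on the product in (\ref{eq:W_tau_recursive}): each resulting term carries a single $\mathscr{S}$ on either one tree-propagator, one determinant-propagator, or one subkernel at some scale $h_v\ge h_{w_0}$. In every case the second inequalities of (\ref{eq:propagator_weighted_norm})--(\ref{eq:propagator_L_infty_norm}), together with the inductive $\mathscr{S}$-bound on the subkernel, provide an extra $2^{h^\star-h_v}$ gain; monotonicity $h^\star-h_v\le h^\star-h_{w_0}$ yields the overall factor $2^{h^\star-h_{w_0}}$. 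The main obstacle of the proof, in my view, is precisely this bookkeeping step: when $\mathscr{S}$ lands on a determinant entry, one has to re-establish the Hadamard inequality for the perturbed covariance, which is done by applying the splittings $\sing{g}{h}=\sing{g}{h}_1\star\sing{g}{h}_2$ and $\mathscr{S}\sing{g}{h}=\mathscr{S}\sing{g}{h}_1\star\mathscr{S}\sing{g}{h}_2$ separately, as guaranteed by the last inequality in (\ref{eq:propagator_L_infty_norm}). The closure of the induction, and in particular the stability of the constant $c$ appearing in (\ref{eq:boundedness_beta_functions})--(\ref{eq:boundedness_beta_functions_zero_mass}), is then standard.
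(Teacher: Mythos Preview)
Your outline follows the paper's proof: downward induction on the scale, the Brydges--Battle--Federbush formula plus Gram--Hadamard for the truncated expectation, Lemma~\ref{lm:taylor_renormalization_gain} for the renormalization gain, power counting reorganized through $\bar D_v$, and a Leibniz splitting for the $\mathscr{S}$ bound. Three points need correction.

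First, the step you flag as the main obstacle does fail as written: the identity $\mathscr{S}g=\mathscr{S}g_1\star\mathscr{S}g_2$ is false, since $\mathscr{S}(g_1\star g_2)=g_1\star g_2-(g_1\vert_{m_N=0})\star(g_2\vert_{m_N=0})$ is not the convolution of the differences. What Lemma~\ref{lm:propagator_bounds} is meant to supply, and what the paper uses (via the reference to \cite{Giuliani_2005}), is a \emph{separate} Gram factorization of $\mathscr{S}g$ whose two factors have $L^\infty$-product bounded by $C_H\,2^{3h}\,2^{h^\star-h}$; with this in hand, telescoping over the columns of $\det G$ yields determinants that are still Gram, and Hadamard gives the extra $2^{h^\star-h_{w_0}}$. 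Second, the bound~(\ref{eq:renormalization_bound}) is for \emph{fixed} $\tau,\underline P,\underline\Delta$; the constant $C^k$ collects the BBF spanning-tree count and the propagator/endpoint constants, not a sum over $\underline P,\underline\Delta$. Those sums appear only when deriving~(\ref{eq:boundedness_beta_functions}), and there the decay $2^{h-N}$ does not drop out directly: the factor $2^{2(h'_v-N)}$ attached to a $\lambda$ endpoint lives at scale $h'_v$, possibly close to $N$, and must be transported down to $w_0$ by telescoping along a path of nontrivial vertices using $D_v-R_v\le-2$ (the short-memory argument with $\theta=1$). Third, the bound on $2^{-h^\star}\beta^{m,s}_h$ requires the identity $\beta^{m,s}_h=\mathscr{S}\beta^{m,s}_h$ from Lemma~\ref{lm:localization_request} combined with~(\ref{eq:renormalization_bound_zero_mass}); the plain bound~(\ref{eq:renormalization_bound}) gives only a prefactor $2^h$, not $2^{h^\star}$.
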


\begin{proof}
We prove the theorem by induction on the scale $h$ that appears in~\eqref{eq:boundedness_beta_functions},~\eqref{eq:boundedness_beta_functions_zero_mass},%
~\eqref{eq:renormalization_bound},~\eqref{eq:renormalization_bound_zero_mass}. If~\eqref{eq:boundedness_beta_functions},~\eqref{eq:boundedness_beta_functions_zero_mass} hold up to scale $h+2$ and $c \lambda^2 \cutoff^2/M^2 < \epsilon$ for some small $\epsilon$, Lemma~\ref{lm:propagator_bounds} ensures that the propagator norms up to scale $h+1$ are bounded as in~\eqref{eq:propagator_weighted_norm},~\eqref{eq:propagator_L_infty_norm} with a constant $C_H$ that only depends on $Z^s_N, m^s_N, \epsilon$ and on the shape of the function $\chi_0$. Also, thanks to the same inductive assumption, there exists a constant $C_E$ that only depends on $Z^{J, s}_N, \epsilon, \chi_0$ such that
\begin{alignat}{3}
\label{eq:th:bounds_endpoints}
& \norm*{\sing{W}{j}_{(J)}}_j \le C_E,
&& \quad \norm*{\sing{W}{N}_{(\lambda)}}_N \le C_E \frac{\lambda^2}{M^2},
&& \quad \norm*{\sing{W}{N}_{(\eta)}}_N \le C_E, \\
\label{eq:th:bounds_endpoints_zero_mass}
& \norm*{\mathscr{S} \sing{W}{j}_{(J)}}_j \le C_E 2^{h^\star - j},
&& \quad \norm*{\mathscr{S} \sing{W}{N}_{(\lambda)}}_N \le C_E 2^{h^\star - N} \frac{\lambda^2}{M^2},
&& \quad \norm*{\mathscr{S} \sing{W}{N}_{(\eta)}}_N \le C_E 2^{h^\star - N}
\end{alignat}
for every $j \ge h+1$.

For technical reasons, we prove the stronger bounds
\begin{subequations}
\begin{gather}
\begin{split}
\label{eq:th:r_bound}
\norm*{W_{\tau, \underline{P}, \underline{\Delta}}}_h \le \frac{C_E^k \lambda^{2n}}{M^{2n}} 2^{h\bar{D}_{w_0}} \, \prod_v^\circ 2^{(h_v - h'_v)(\bar{D}_v - R_v)} \cdot \mathcal{C}_v \mathcal{C}^{\textup{ren}}_v
\end{split} \\
\begin{split}
\label{eq:th:p_bound}
\norm*{\mathscr{S} W_{\tau, \underline{P}, \underline{\Delta}}}_h \le 2^{h^\star - h_{w_0}} \frac{C_E^k \lambda^{2n}}{M^{2n}} 2^{h\bar{D}_{w_0}} \, \prod_v^\circ 2^{(h_v - h'_v)(\bar{D}_v - R_v)} \cdot \mathcal{C}_v \mathcal{C}^{\textup{ren}}_v
\end{split}
\end{gather}
\end{subequations}
in place of~\eqref{eq:renormalization_bound},~\eqref{eq:renormalization_bound_zero_mass}, where
\begin{align*}
\mathcal{C}_v & \equiv (6C_H)^{(\sum_{w \succ v} \abs*{P_w} - \abs{P_v})/2} 2^{s_v}, \qquad \quad
\mathcal{C}^{\textup{ren}}_v \equiv
\begin{cases}
2 C_\ren  	&	\textup{if } \ren \textup{ acts nontrivially on } v	\\
1			&	\textup{otherwise}
\end{cases}
\end{align*}
Here, $C_\ren$ is the same constant introduced in Lemma~\ref{lm:taylor_renormalization_gain} and $C_H, C_E$ are the same constants introduced at the beginning of the proof.

Given any $h \in \lbrace h^\star-1, \dots, N - 2 \rbrace$, consider the induction step
\begin{equation}
\label{eq:induction_step}
\begin{cases}
\eqref{eq:boundedness_beta_functions},\eqref{eq:boundedness_beta_functions_zero_mass} \text{ hold up to scale } h+2 \\
\eqref{eq:th:r_bound},\eqref{eq:th:p_bound} \text{ hold up to scale } h+1 \\
c \lambda^2 \cutoff^2/M^2 < \epsilon
\end{cases}
\Rightarrow
\begin{cases}
\eqref{eq:boundedness_beta_functions},\eqref{eq:boundedness_beta_functions_zero_mass} \text{ hold up to scale } h + 1 \\
\eqref{eq:th:r_bound},\eqref{eq:th:p_bound} \text{ hold up to scale } h \\
c \lambda^2 \cutoff^2/M^2 < \epsilon
\end{cases}
\end{equation}
where $\epsilon \in (0, 1)$ and $c$ is a positive constant whose value will be fixed later. From now on, we inductively assume that the left-hand side of this relation is true for every scale $j \ge h$. Our goal is to prove that the right-hand side holds as well.

Any unnecessary spinor or spacetime index will be suppressed in order to lighten the notation. Such indices assume a \emph{finite} set of values for each field variable, so summing over them yields an unharmful constant that can always be reabsorbed by properly rescaling $C_H, C_E, C_\ren$. Moreover, we restrict our attention to trees that contain at least one nontrivial vertex: the reader may easily check that, due to~\eqref{eq:th:bounds_endpoints} and~\eqref{eq:th:bounds_endpoints_zero_mass}, both~\eqref{eq:th:r_bound} and~\eqref{eq:th:p_bound} are manifestly true if every vertex of $\tau$ is trivial. Finally, since trivial vertices that are different from the first nonroot vertex will not play any active role in the following estimates, we can safely erase them.

\paragraph{Proof of~\eqref{eq:th:r_bound} on scale $h$.} Let $\tau \in \trees'_{h|N}$ and let $w_0 \in \vrt(\tau)$ be its first nonroot vertex. The norm of $W_{\tau, \underline{P}, \underline{\Delta}}$ can be bounded as 
\begin{multline}
\label{eq:th:norm_bound_initial}
\norm*{W_{\tau, \underline{P}, \underline{\Delta}}}_h \le \frac{1}{s_{w_0}!} \int_{x_0 = 0} \dd{\underline{y}(U_{w_0})} \prod_{v \succ w_0} \dd{\underline{x}(Q_v)} \sum_{\lbrace a_v \rbrace} \biggl[ \, \abs*{\TExp{h_{w_0}}( \lbrace \Psi(Q_v, \Delta_{v, \underline{a}}) \rbrace_{v \succ w_0} )} \, \times \\
\times \prod_{v \succ w_0} \abs{\kren_v^{a_v} W_{\tau_v, \underline{P}, \underline{\Delta}}(\underline{x}(P_v), \underline{y}(U_v))} \, \weight_h(\underline{x}(P_{w_0}), \underline{y}(U_{w_0})) \biggr],
\end{multline}
where the $x_0 = 0$ subscript reminds that one spacetime point among $\lbrace \underline{x}(Q_v) \rbrace_{v \succ w_0}, \underline{y}(U_{w_0})$ must be kept fixed to zero. The symbol $\sum_{\lbrace a_v \rbrace}$ denotes a sum over $a_v = 0, 1$ for every $v \succ w_0$ constrained by various requirements (for instance, the $a_v$'s must be compatible with the choice of $\Delta_{w_0}$) and $\Delta_{v, \underline{a}}$ is the set of derivative indices falling on the elements of $Q_v$ after a suitable choice of the quantities $\lbrace a_v \rbrace_{v \succ w_0} \equiv \underline{a}$. Starting from~\eqref{eq:th:norm_bound_initial}, we shall find an explicit bound on the truncated expectation value and to estimate the norms of the renormalized kernels $\lbrace \kren_v^{a_v} W_{\tau_v, \underline{P}, \underline{\Delta}} \rbrace_{v \succ w_0}$ with a combined usage of Lemma~\ref{lm:taylor_renormalization_gain} and the induction hypoteses. 

If $w_0$ is trivial, the truncated expectation value appearing in~\eqref{eq:th:norm_bound_initial} degenerates into $\TExp{h_{w_0}}(\varnothing) = 1$. However, it is easy to see that all the following results still hold in this degenerate case, so we will assume that $w_0$ is not trivial for sake of simplicity. Thanks to the Brydges-Battle-Federbush formula~\cite{Battle_1984, Gawedzki_1985, Lesniewski_1987}, the truncated expectation value can be expanded as
\begin{equation}
\label{eq:BBF}
\TExp{h_{w_0}}( \lbrace \Psi(Q_v, \Delta_{v, \underline{a}}) \rbrace_{v \succ w_0} ) = \sum_T \prod_{\ell \in \edg(T)} \sing{\bar{g}}{h_{w_0}}_\ell \int \dd{\mu}_T(\underline{t}) \det \bar{G}^{h_{w_0}}_T(\underline{t}),
\end{equation}
where the sum $\sum_T$ runs over all the trees anchored to the sets $\lbrace Q_v \rbrace_{v \succ w_0}$, $\dd{\mu}_T(\underline{t})$ is a probability measure on $[0, 1]^{p^2}$ (with $p \equiv \sum_{v \succ w_0} (\abs*{P_v} - \abs*{P_{w_0}})/2 - s_{w_0} + 1$) supported on the set of all the vectors $\underline{t} \equiv (t_{ab})_{a, b = 1, \dots, p}$ such that $t_{ab} = \mathbf{u}_a \cdot \mathbf{u}_b$ for some pair of unit vectors $\mathbf{u}_a, \mathbf{u}_b \in \R^p$. Finally, the $p \times p$ matrix $\bar{G}^{h_{w_0}}_T(\underline{t})$ is defined as
\begin{equation}
[\bar{G}^{h_{w_0}}_T(\underline{t})]_{ab} = t_{ab} \cdot (\sing{\bar{g}}{h_{w_0}})_{\alpha_a \beta_b}(x_a - x_b)
\end{equation}
and a bar over $\sing{\bar{g}}{h_{w_0}}_\ell, \bar{G}^{h_{w_0}}_T(\underline{t})$ reminds that some of the propagators may carry up to three derivatives. A complete proof of~\eqref{eq:BBF} can be found in~\cite[Appendix A.3.2]{Gentile_2001}). Crucially, $\bar{G}^{h_{w_0}}_T(\underline{t})$ is a \emph{Gram matrix} within the support of $\dd{\mu}_T(\underline{t})$, because $\partial^\ell \sing{g}{h_{w_0}}$ is equal to a convolution of the form $\sing{g}{h_{w_0}}_{1, \ell} \star \sing{g}{h_{w_0}}_{2, \ell}$ (see Lemma~\ref{lm:propagator_bounds}) and $t_{ab} = \mathbf{u}_a \cdot \mathbf{u}_b$; therefore, after combining the bounds~\eqref{eq:propagator_weighted_norm},~\eqref{eq:propagator_L_infty_norm} with Gram-Hadamard's inequality~\cite[Theorem A.1]{Gentile_2001}, we have
\begin{equation}
\label{eq:gram_hadamard}
\abs{\int \dd{\mu}_T(\underline{t}) \det \bar{G}^{h_{w_0}}_T(\underline{t})} \le  C_H^p \cdot 2^{h_{w_0}(3p + \delta'_{w_0, \underline{a}, T})},
\end{equation}
where $\delta'_{w_0, \underline{a}, T}$ is the number of derivatives that do not fall along the tree $T$. The constant $C_H$ appears because we know by hypotesis that the bounds~\eqref{eq:boundedness_beta_functions},~\eqref{eq:boundedness_beta_functions_zero_mass} hold up to scale $h+2$ and $c \lambda^2 \cutoff^2/M^2 < \epsilon$, so Lemma~\ref{lm:propagator_bounds} applies up to scale $h+1$.

For every tree $T$ anchored on $\lbrace Q_v \rbrace_{v \succ w_0}$, the basic properties of the diameter imply that
\begin{equation}
\weight_h(\underline{x}(P_{w_0}), \underline{y}(U_{w_0})) \le \prod_{v \succ w_0} \weight_h(\underline{x}(P_v), \underline{y}(U_v)) \prod_{\ell \in \edg(T)} \weight_h(\ell),
\end{equation}
so~\eqref{eq:th:norm_bound_initial} becomes
\begin{multline}
\label{eq:th:norm_estimate_BBF}
\norm*{W_{\tau, \underline{P}, \underline{\Delta}}}_h \le C_H^p \cdot 2^{h_{w_0}(3p + \delta'_{w_0, \underline{a}, T})} \, \sum_T \frac{1}{s_{w_0}!} \int_{x_0 = 0} \dd{\underline{y}(U_{w_0})} \prod_{v \succ w_0} \dd{\underline{x}(Q_v)} \, \times \\
\times \sum_{\lbrace a_v \rbrace} \Biggl( \, \prod_{\ell \in \edg(T)} \abs*{\sing{\bar{g}}{h_{w_0}}_\ell \weight_h(\ell)} \prod_{v \succ w_0} \abs{\kren_v^{a_v} W_{\tau_v, \underline{P}, \underline{\Delta}}(\underline{x}(P_v), \underline{y}(U_v))} \weight_h(\underline{x}(P_v), \underline{y}(U_v)) \Biggr).
\end{multline}
Now consider a leaf of the anchored tree $T$, consisting of some kernel $\kren_v^{a_v} W_{\tau_v, \underline{P}, \underline{\Delta}}$ together with the corresponding weight function, and integrate over all its arguments except for the spacetime point (say, $z$) that connects $\kren_v^{a_v} W_{\tau_v, \underline{P}, \underline{\Delta}}$ with some propagator lying along $T$. This integral produces a $\norm*{\kren_v^{a_v} W_{\tau_v, \underline{P}, \underline{\Delta}}}_h$ factor. When the integral over $z$ is performed, we instead obtain a $\norm*{\sing{\bar{g}}{h_{w_0}}}_h$ factor. After repeating this construction with the remaining leaves of $T$, we are ultimately left with a $\norm*{\kren_v^{a_v} W_{\tau_v, \underline{P}, \underline{\Delta}}}_h$ factor for each $v \succ w_0$ and a $\norm*{\sing{\bar{g}}{h_{w_0}}}_h$ factor for each $\ell \in \vrt(T)$. Since one spacetime point is kept fixed to zero, no overall $L^4$ factors appear. If the weighted norms of the propagators lying along $T$ are bounded as in~\eqref{eq:propagator_weighted_norm}, we eventually obtain
\begin{equation}
\label{eq:th:bound_after_BBF}
\norm*{W_{\tau, \underline{P}, \underline{\Delta}}}_h \le C_H^{(\sum_{v \succ w_0} \abs*{P_v} - \abs*{P_{w_0}})/2} \cdot \left( \sum_T \frac{1}{s_{w_0}!} \right) \sum_{\lbrace a_v \rbrace} \prod_{v \succ w_0} 2^{h_{w_0}\mathscr{D}_{w_0, \underline{a}}} \norm*{\kren_v^{a_v} W_{\tau_v, \underline{P}, \underline{\Delta}}}_h,
\end{equation}
where $\mathscr{D}_{w_0, \underline{a}} \equiv -4(s_{w_0} - 1) + 3(\sum_{v \succ w_0} \abs*{P_v} - \abs*{P_{w_0}})/2 + \delta_{w_0, \underline{a}}$ and $\delta_{w_0,\underline{a}}$ is the number of derivatives falling on the internal propagators of $w_0$. The number of trees anchored to the sets $\lbrace Q_v \rbrace_{v \succ w_0}$ is bounded by $2^{(\sum_{v \succ w_0} \abs*{P_v} - \abs*{P_{w_0}})/2} s_{w_0}!$~\cite[Lemma 2.4]{Mastropietro_Nonpert_Renorm}; since $(2C_H)^{(\sum_{v \succ w_0} \abs*{P_v} - \abs*{P_{w_0}})/2} \le \mathcal{C}_{w_0}$ and $\mathcal{C}^{\textup{ren}}_{w_0} = 1$, we get
\begin{equation}
\label{eq:th:bound_after_BBF_refined}
\norm*{W_{\tau, \underline{P}, \underline{\Delta}}}_h \le (\mathcal{C}_{w_0} \mathcal{C}^{\textup{ren}}_{w_0}) \sum_{\lbrace a_v \rbrace} \prod_{v \succ w_0} 2^{h_{w_0}\mathscr{D}_{w_0, \underline{a}}} \norm*{\kren_v^{a_v} W_{\tau_v, \underline{P}, \underline{\Delta}}}_h.
\end{equation}
By means of Lemma~\ref{lm:taylor_renormalization_gain}, the norms of $\kren_v^{a_v} W_{\tau_v, \underline{P}, \underline{\Delta}}$ can be estimated in terms of the norms of $W_{\tau_v, \underline{P}, \underline{\Delta}}$ and $\mathscr{S} W_{\tau_v, \underline{P}, \underline{\Delta}}$; afterwards, the latter can be bounded using the induction hypoteses~\eqref{eq:th:r_bound},~\eqref{eq:th:p_bound}. The result is
\begin{multline}
\label{eq:th:bound_after_BBF_quasi_final}
\norm*{W_{\tau, \underline{P}, \underline{\Delta}}}_h \le \frac{C^k_E \lambda^{2n}}{M^{2n}} (\mathcal{C}_{w_0} \mathcal{C}_{w_0}^{\textup{ren}}) \sum_{\lbrace a_v \rbrace} \prod_{v \succ w_0} 2^{h_{w_0}\mathscr{D}_{w_0, \underline{a}}} \, (C_\ren^v \mathcal{C}_v) \, 2^{h_v (\bar{D}_v - r^{a_v}_v)} 2^{a_v(h^\star - h_v)} \, \times \\
\times \prod_{z > v}^\circ 2^{(h_z - h'_z)(\bar{D}_z - R_z)} \mathcal{C}_z \mathcal{C}^{\textup{ren}}_z,
\end{multline}
where $r_v^{a_v} = R_v - a_v$ is the number of derivatives produced by the renormalization operator on $v$ and $C_\ren^v$ is either equal to $1$ or to $C_\ren$ depending on whether $\ren$ acts identically on $v$ or not. By construction, $\delta_{w_0, \underline{a}}$ is obtained by taking the number of derivatives falling on the external fields of its children vertices ($\sum_{v \succ w_0} \mathfrak{d}_v$), \emph{plus} the number of extra derivatives produced by the action of $\ren$ on them ($\sum_{v \succ w_0} r^{a_v}_v$), \emph{minus} the number of derivatives falling on the external fields of $w_0$ ($\mathfrak{d}_{w_0}$); synthetically, $\delta_{w_0, \underline{a}} = \sum_{v \succ w_0} (\mathfrak{d}_v + r^{a_v}_v) - \mathfrak{d}_{w_0}$. This easily yields
\begin{equation}
\label{eq:th:D_w_0_sum}
\mathscr{D}_{w_0, \underline{a}} + \sum_{v \succ w_0} (\bar{D}_v - r^{a_v}_v) = \bar{D}_{w_0}.
\end{equation}
Thanks to~\eqref{eq:th:D_w_0_sum} and recalling that $h'_v = h_{w_0} \, \forall v \colon v \succ w_0$, we can reorganize the right hand side of~\eqref{eq:th:bound_after_BBF_quasi_final} as
\begin{multline}
\label{eq:th:bound_quasi_final}
\norm*{W_{\tau, \underline{P}, \underline{\Delta}}}_h \le \frac{C_E^k \lambda^{2n}}{M^{2n}} (\mathcal{C}_{w_0} \mathcal{C}_{w_0}^{\textup{ren}}) \, 2^{h \bar{D}_{w_0}} 2^{(h_{w_0} - h) \bar{D}_{w_0}} \, \times \\
\times \sum_{\lbrace a_v \rbrace} \prod_{v \succ w_0} (C_\ren^v \mathcal{C}_v) \, 2^{(h_v - h'_v)(\bar{D}_v - r_v^{a_v})} 2^{a_v(h^\star - h_v)} \prod_{z > v}^\circ 2^{(h_z - h'_z)(\bar{D}_z - R_z)} \mathcal{C}_z \mathcal{C}^{\textup{ren}}_z.
\end{multline}
Finally,~\eqref{eq:th:r_bound} follows from the fact that $2^{(h_v - h'_v)(\bar{D}_v - r_v^{a_v})} 2^{a_v(h^\star - h_v)} \le 2^{(h_v - h'_v)(\bar{D}_v - R_v)}$ and $\sum_{a_v} C^v_\ren \le \mathcal{C}^{\textup{ren}}_v$. $\square$

\paragraph{Proof of~\eqref{eq:th:p_bound} on scale $h$.} Let $\tau \in \trees'_{h|N}$ as before. We start by writing
\begin{multline}
\label{eq:P_action_explicit}
\mathscr{S} W_{\tau, \underline{P}, \underline{\Delta}} = \int \prod_{v \succ w_0} \dd{\underline{x}(Q_v)} \, \Biggl[\mathscr{S} \left( \TExp{h_{w_0}}(\, \cdots) \right) \prod_{v \succ w_0} \kren_v W_{\tau_v, \underline{P}, \underline{\Delta}}(\underline{x}(P_v), \underline{y}(U_v)) \, + \\
+ \TExp{h_{w_0}}(\, \cdots) \eval_{m_N = 0} \, \mathscr{S} \left( \prod_{v \succ w_0} \kren_v W_{\tau_v, \underline{P}, \underline{\Delta}}(\underline{x}(P_v), \underline{y}(U_v)) \right) \Biggr].
\end{multline}
A careful analysis (see for instance~\cite[Appendix A4.3]{Giuliani_2005}) shows that the action of $\mathscr{S}$ converts $\TExp{h_{w_0}}(\, \cdots)$ into a sum of at most $(\sum_{v \succ w_0} \abs*{P_v} - \abs*{P_{w_0}})^2/4$ terms, each of which has the same structure of $\TExp{h_{w_0}}$ except for containing one $\mathscr{S} \sing{g}{h_{w_0}}$ propagator. According to Lemma~\ref{lm:propagator_bounds}, the occurrence of $\mathscr{S} \sing{g}{h_{w_0}}$ yields an extra $2^{h^\star - h_{w_0}}$ gain, so the first row of~\eqref{eq:P_action_explicit} satisfies the bound~\eqref{eq:th:bound_after_BBF} times an additional $2^{h^\star - h_{w_0}} (\sum_{v \succ w_0} \abs*{P_v} - \abs*{P_{w_0}})^2/4$ factor.

The second row of~\eqref{eq:P_action_explicit} contains a truncated expectation value evaluated at $m_N = 0$ multiplied by a sum of $s \equiv s_{w_0}$ terms of the form $(\kren_{v_1}^{a_{v_1}} W_{\tau_{v_1}}) \cdots (\mathscr{S} \kren_u^{a_u} W_{\tau_u}) \cdots (\kren_{v_s}^{a_{v_s}} W_{\tau_{v_s}})$, where some of the factors $\kren_{v_1}^{a_{v_1}} W_{\tau_{v_1}}, \dots, \kren_{v_s}^{a_{v_s}} W_{\tau_{v_s}}$ may be evaluated at $m_N = 0$ and the indices $a_{v_1}, \dots, a_{v_s}$ can be either equal to $0$ or to $1$. Lemma~\ref{lm:propagator_bounds} can be generalized with little effort in order to deal with the zero-mass single scale propagator $\sing{g}{h} \vert_{m_N = 0}$, so the truncated expectation value evaluated at $m_N = 0$ satisfies the estimates discussed during the proof of~\eqref{eq:th:r_bound}. 

The kernels that are \emph{not} evaluated at $m_N = 0$ satisfy the estimates presented in~\eqref{eq:th:bound_after_BBF_quasi_final}, while
\begin{equation}
\kren_v^{a_v} W_{\tau_v} \vert_{m_N = 0} =
\begin{cases}
\kren_v^0 W_{\tau_v} - \mathscr{S} \kren_v^0 W_{\tau_v} & \qquad \text{if } a_v = 0 \\
0														& \qquad \text{if } a_v = 1
\end{cases}
\end{equation}
It follows from Lemma~\ref{lm:taylor_renormalization_gain} and from the induction hypoteses~\eqref{eq:th:r_bound},~\eqref{eq:th:p_bound} that
\begin{align*}
\norm*{\kren_v^{a_v} W_{\tau_v} \vert_{m_N = 0}}_h 
& \le \id_{a_v = 0} (1 + 2^{h^\star - h_v}) \frac{C_E^k \lambda^{2n}}{M^{2n}} 2^{h_v(\bar{D}_v - R_v)} \, (\mathcal{C}_v C^v_\ren) \! \prod_{z > v}^\circ 2^{(h_z - h'_z)(\bar{D}_z - R_z)} \mathcal{C}_z \mathcal{C}^{\textup{ren}}_z \\
& \le \id_{a_v = 0} \cdot 2 \cdot \frac{C_E^k \lambda^{2n}}{M^{2n}} 2^{h_v(\bar{D}_v - R_v)} \, (\mathcal{C}_v C^v_\ren) \prod_{z > v}^\circ 2^{(h_z - h'_z)(\bar{D}_z - R_z)}\mathcal{C}_z \mathcal{C}^{\textup{ren}}_z.
\end{align*}
Since $\sum_{a_v} 2C_\ren^v \id_{a_v = 0} \le \mathcal{C}^\textup{ren}_v$, the factor $\norm*{\kren^{a_v}_v W_{\tau_v} \vert_{m_N = 0}}_h$ contributes to $\norm*{\mathscr{S} W_{\tau, \underline{P}, \underline{\Delta}}}$ as it were equal to $\norm*{\kren^{a_v}_v W_{\tau_v}}_h$.

It remains to analyze the contribution coming from $\mathscr{S} \kren^{a_u}_u W_{\tau_u}$. The key observation is that $\mathscr{S}^2 = \mathscr{S} \Rightarrow \mathscr{S} \kren^1 = \kren^1$, so
\begin{equation}
\mathscr{S} \kren^{a_u}_u W_{\tau_u} =
\begin{cases}
\mathscr{S} \kren^0_u W_{\tau_u} 											& \qquad \text{if } a_u = 0 \\
\kren^1_u W_{\tau_u} = - \mathscr{S} \kloc_{R_u - 1} W_{\tau_u}		& \qquad \text{if } a_u = 1
\end{cases}
\end{equation}
and due to Lemma~\ref{lm:taylor_renormalization_gain} and the induction hypoteses, it is
\begin{align}
\label{eq:th:zero_mass_quasi_final}
\begin{split}
\norm*{\mathscr{S} \kren^{a_u}_u W_{\tau_u}}_h 
& \le \frac{C_E^k \lambda^{2n}}{M^{2n}} 2^{(h^\star - h_u)} 2^{h_u(\bar{D}_u - r^{a_u}_u)} (\mathcal{C}_u C_\ren^u) \prod_{z > u}^\circ 2^{(h_z - h'_z)(\bar{D}_z - R_z)} \cdot (\mathcal{C}_z \mathcal{C}^\textup{ren}_z) \\
& = 2^{h^\star - h_{w_0}} \frac{C_E^k \lambda^{2n}}{M^{2n}} 2^{(h'_u - h_u)} 2^{h_u(\bar{D}_u - r^{a_u}_u)} (\mathcal{C}_u C_\ren^u) \prod_{z > u}^\circ 2^{(h_z - h'_z)(\bar{D}_z - R_z)} \cdot (\mathcal{C}_z \mathcal{C}^\textup{ren}_z).
\end{split}
\end{align}
We recognize that~\eqref{eq:th:zero_mass_quasi_final} has the same structure of the factors appearing in~\eqref{eq:th:bound_after_BBF_quasi_final}, except for the presence of an extra $2^{h^\star - h_{w_0}}$ gain and a $2^{h'_u - h_u}$ factor in place of $2^{a_u(h^\star - h_u)}$. The same analysis performed after formula~\eqref{eq:th:bound_after_BBF_quasi_final} shows that $\mathscr{S} \kren_u^{a_u} W_{\tau_u}$ contributes to $\norm*{\mathscr{S} W_{\tau, \underline{P}, \underline{\Delta}}}_h$ as it were equal to $2^{h^\star - h_{w_0}} \kren^{a_u}_u W_{\tau_u}$.

In summary, the above considerations imply that $\norm*{\mathscr{S} W_{\tau, \underline{P}, \underline{\Delta}}}_h$ satisfies the same bound as the right hand side of~\eqref{eq:th:bound_after_BBF} multiplied by an extra $2^{h^\star - h_{w_0}}$ gain (due to the presence of the $\mathscr{S}$ operator) and an extra $s_{w_0} + (\sum_{v \succ w_0} \abs*{P_v} - \abs*{P_{w_0}})^2/4$ factor (which takes into account the total number of terms arising from the decomposition~\eqref{eq:P_action_explicit}). All these factors can be reabsorbed inside the constant $\mathcal{C}_{w_0}$, namely
\begin{equation}
(2C_H)^{(\sum_{v \succ w_0} \abs*{P_v} - \abs*{P_{w_0}})/2} \left[ s_{w_0} + \frac{1}{4} \left( \sum_{v \succ w_0} \abs*{P_v} - \abs*{P_{w_0}}  \right)^2 \right] \le \mathcal{C}_{w_0} = \mathcal{C}_{w_0} \mathcal{C}_{w_0}^{\textup{ren}},
\end{equation}
so formula~\eqref{eq:th:p_bound} holds. $\square$

\paragraph{The bounds~\eqref{eq:th:r_bound},~\eqref{eq:th:p_bound} imply~\eqref{eq:renormalization_bound},~\eqref{eq:renormalization_bound_zero_mass}.} Let $b$ be the number of nontrivial vertices occurring in $\tau \in \trees'_{h|N}$. Then
\begin{equation}
\prod^\circ_v \mathcal{C}_v \mathcal{C}^{\textup{ren}}_v \le (\max \lbrace 1, 2 C_\ren \rbrace)^b \cdot (6C_H)^{\sum_v^\circ (\sum_{w \succ v} \abs* {P_w} - \abs*{P_v})} \cdot 2^{\sum^\circ_v s_v}.
\end{equation}
It is easy to see that $\sum^\circ_v s_v = b + k$ and $\sum_v^\circ (\sum_{w \succ v} \abs* {P_w} - \abs*{P_v}) \le \sum^E_v \abs*{P_v}$, where $\sum^E_v$ runs over the endpoints of $\tau$. Each endpoint can have at most four external fields, so $\sum^E_v \abs*{P_v} \le 4k$. Since every nontrivial vertex cannot contain less than $2$ internal fields, $b$ cannot exceed $\sum^E_v \abs*{P_v}/2 \le 2k$. By putting all together, we can write $C_E^k \prod^\circ_v \mathcal{C}_v = C^k$ for a suitable ``universal'' constant $C$ that only depends on $Z^{J, s}_N, Z^s_N, m^s_N, \epsilon, \chi_0$. Finally, the product~\eqref{eq:th:r_bound} can be recast in terms of $D_v$ by applying the basic identity $\sum_v^\circ (h_v - h'_v) n^{\lambda, \eta}_v = -h n^{\lambda, \eta}_{w_0} + \sum_v^{\lambda, \eta} h'_v$ (see for instance~\cite[Section III]{Mastropietro_2024}): the result is precisely~\eqref{eq:renormalization_bound}. A similar manipulation can be done with~\eqref{eq:th:p_bound} as well.

\paragraph{Proof of~\eqref{eq:boundedness_beta_functions},~\eqref{eq:boundedness_beta_functions_zero_mass} on scale $h+1$.} Based on the definition of $\beta^{J, s}_{h+1}$, it is easy to see that
\begin{equation}
\abs*{\beta^{J, s}_{h+1}} \le \sum_\tau \sum_{\underline{P}, \underline{\Delta}} \norm*{W_{\tau, \underline{P}, \underline{\Delta}}}_{h+1},
\end{equation}
where the sum $\sum_\tau \sum_{\underline{P}, \underline{\Delta}}$ is constrained by the fact that $\tau \in \trees_{h|N}'$ is nontrivial and its first nonroot vertex is nontrivial; $\tau$ has one $J$ endpoint, no $\eta$ endpoints and at least one $\lambda$ endpoint; finally, $\underline{P}, \underline{\Delta}$ are chosen so that $W_{\tau, \underline{P}, \underline{\Delta}}$ has two external fermionic legs without any derivatives falling on them. We have just shown that~\eqref{eq:renormalization_bound} holds for every $\tau \in \trees'_{h|N}$, so
\begin{equation}
\label{eq:th:beta_function_trees_bound}
\abs*{\beta^{J, s}_{h+1}} \le \sum_\tau \sum_{\underline{P}, \underline{\Delta}} C^k \frac{\lambda^{2n} \cutoff^{2n}}{M^{2n}} \prod^\circ_v 2^{(h_v - h'_v)(D_v - R_v)} \prod_v^\lambda 2^{2(h'_v - N)}.
\end{equation}
This sum can be estimated by means of the \emph{short memory property}~\cite{Mastropietro_2007, Giuliani_Mastropietro_Porta_2019}. Let $w_r \in \vrt(\tau)$ be a nontrivial vertex that precedes some $\lambda$ endpoint and let $\mathscr{P} \subseteq \vrt(\tau)$ be a path of nontrivial vertices that connects the first nonroot vertex with $w_r$. Given any $\theta \in [0, 2]$, we define a function $\Theta \colon \vrt(\tau) \to \lbrace 0, \theta \rbrace$ such that $\Theta(\mathscr{P}) = \theta$, $\Theta(\vrt(\tau) \setminus \mathscr{P}) = 0$ and subsequently rewrite~\eqref{eq:th:beta_function_trees_bound} as
\begin{equation}
\abs*{\beta^{J, s}_{h+1}} \le \sum_\tau \sum_{\underline{P}, \underline{\Delta}} C^k \frac{\lambda^{2n} \cutoff^{2n}}{M^{2n}} \prod^\circ_v 2^{(h_v - h'_v)(D_v - R_v + \Theta_v)} 2^{(h'_v - h_v) \Theta_v} \prod_v^\lambda 2^{2(h'_v - N)}.
\end{equation}
Since $\prod_v^\circ 2^{(h'_v - h_v) \Theta_v} = 2^{\theta(h+1-h_{w_r})}$ and $\prod_v^\lambda 2^{2(h'_v - N)} \le 2^{\theta(h_{w_r} - N)}$, we have
\begin{equation}
\label{eq:th:beta_tree_theta}
\abs*{\beta^{J, s}_{h+1}} \le 2^{\theta (h+1-N)} \sum_\tau \sum_{\underline{P}, \underline{\Delta}} C^k \frac{\lambda^{2n} \cutoff^{2n}}{M^{2n}} \prod^\circ_v 2^{(h_v - h'_v)(D_v - R_v + \Theta_v)}.
\end{equation}
Let us choose $\theta = 1$. In this case, the combination $D_v - R_v + \Theta_v$ is always strictly negative and it becomes \emph{arbitrarily} negative as $\abs*{P_v}$ increases. In fact, if we write
\begin{align*}
\notag
D_v - R_v + \Theta_v 
& = \left( D_v - R_v + \Theta_v + \frac{\abs*{P_v}}{4} \chi(\abs*{P_v} \ge 8) \right) - \frac{\abs*{P_v}}{4} \chi(\abs*{P_v} \ge 8) \\
& \equiv a_v - \frac{\abs*{P_v}}{4} \chi(\abs*{P_v} \ge 8),
\end{align*} 
then $a_v = - \abs*{a_v} \le -1$ for every choice of $P_v$. 

It can be shown that summing over $\underline{P}$ is essentially the same as summing over $\abs*{P_v} \in \N$ for every nontrivial vertex $v$, apart from some ineffective combinatorial factors (a detailed proof of this can be found in~\cite[Appendix A.6.1]{Gentile_2001}). The sum over $\underline{\Delta}$ is simply controlled by $C_1^n$ for a suitable $C_1 > 0$, because the total number of derivatives per field variable cannot be greater than three. In order to sum over $\tau$, we first sum over all the \emph{unlabelled} trees with $n$ endpoints of type $\lambda$ and one endpoint of type $J$ (since there can be at most $2n$ nontrivial vertices, the number of such trees is bounded by $C_2^n$ for some constant $C_2 > 0$, see~\cite[Lemma 2.1]{Mastropietro_Nonpert_Renorm}), then we sum over all the possible values of the scale differences $\lbrace b_v \equiv h_v - h'_v \rbrace_v$ and we eventually sum over $n$. Recalling that $D_v - R_v + \Theta_v \le -1 - (\abs*{P_v}/4)\chi(\abs*{P_v} \ge 8)$ and every tree carries at most $2n$ nontrivial vertices,~\eqref{eq:th:beta_tree_theta} becomes
\begin{equation}
\label{eq:th:sum_over_P_Delta}
\abs*{\beta^{J, s}_{h+1}} \le 2^{h+1-N} \sum_{n \ge 1} (c')^n \frac{\lambda^{2n} \cutoff^{2n}}{M^{2n}} \left(\sum_{b_v = 1}^{N - h^\star} 2^{- b_v} \sum_{\abs*{P_v} \ge 8} 2^{-\abs*{P_v}/4} \right)^{2n},
\end{equation}
where $c' > 0$ is a multiple of $C$. The sums are now elementary and they can be explicitly perfomed: if $c' \lambda^2 \cutoff^2/M^2 < 1$, the final bound agrees with~\eqref{eq:boundedness_beta_functions}.

The same argument applies to $\beta^s_{h+1}, \mathscr{S} \beta^{J, s}_{h+1}, \mathscr{S} \beta^s_{h+1}$. In the case of $\beta^{m, s}_{h+1}$, we have instead
\begin{equation}
\abs*{\beta^{m, s}_{h+1}} = \abs*{\mathscr{S} \beta^{m, s}_{h+1}} \le \sum_{\underline{P}, \underline{\Delta}} \sum_\tau 2^{h^\star - (h+1)} C^k \frac{\lambda^{2n} \cutoff^{2n}}{M^{2n}} 2^{h+1} \prod^\circ_v 2^{(h_v - h'_v)(D_v - R_v)} \prod_v^\lambda 2^{2(h'_v - N)},
\end{equation}
where we exploited the fact that $\beta^{m, s}_{h+1} \vert_{m_N = 0} = 0$ and we used~\eqref{eq:renormalization_bound_zero_mass} to deal with the action of the $\mathscr{S}$ operator. Here, the $2^{h^\star - (h+1)}$ gain produced by $\mathscr{S}$ combines with the dimensional factor $2^{h+1}$ and the short-memory factor $2^{h+1 -N}$ to give $\abs*{\beta_{h+1}^{m, s}} \le 2^{h^\star + h+1 - N} \cdot (c \lambda^2 \cutoff^2/M^2)$, in agreement with~\eqref{eq:boundedness_beta_functions}.

We conclude that the estimates~\eqref{eq:boundedness_beta_functions},~\eqref{eq:boundedness_beta_functions_zero_mass} hold on scale $h+1$ with a constant $c'$ that only depends on $C$, so the induction step works if we choose $c \equiv c'$ at the beginning of the inductive procedure.

\paragraph{Validity of the induction hypotesis for $h = N-2$.} To complete the proof, we are left to show that the left hand side of~\eqref{eq:induction_step} holds when $h = N - 2$. In first place, we notice that $\beta^{m, s}_N = \beta^s_N = 0$, so the single scale propagator $\sing{g}{N}$ satisfies the bounds~\eqref{eq:propagator_L_infty_norm},~\eqref{eq:propagator_weighted_norm} with the same constant $C_H$ defined at the beginning of the proof. In addition, the kernels $\sing{W}{N}_{(\lambda)}, \sing{W}{N}_{(J)}, \sing{W}{N}_{(\eta)}$ are clearly bounded as in~\eqref{eq:th:bounds_endpoints},~\eqref{eq:th:bounds_endpoints_zero_mass} with the same constant $C_E$ introduced at the beginning of the proof. The same constructions used to prove~\eqref{eq:th:r_bound} and~\eqref{eq:th:p_bound} then imply that~\eqref{eq:renormalization_bound} and~\eqref{eq:renormalization_bound_zero_mass} hold for every $\tau \in \trees'_{N-1|N}$. Since $\beta^{J, s}_N$ can be expanded in terms of nontrivial trees with root on scale $N-1$, the estimates shown in the previous paragraph imply that $\beta^{J, s}_N$ satisfies~\eqref{eq:boundedness_beta_functions},~\eqref{eq:boundedness_beta_functions_zero_mass} with the constant $c' = c$ introduced above. In conclusion, the left hand side of~\eqref{eq:induction_step} holds when $h = N - 2$.
\end{proof}

\begin{rmk}
\label{rmk:gain}
The renormalization operator acts nontrivially \emph{precisely} when $D_v \ge 0$. If we are interested in kernels arising from trees with at most one $J$ endpoint and whose $\eta$ endpoints are contracted on scale $h^\star$, we have the remarkable bound $D_v -R_v \le -2$.
\end{rmk}

\subsection{The flow of the running coupling constants}
\label{subsec:rcc}

\begin{theorem}
\label{th:bare_constants}
If $\lambda^2 \cutoff^2/M^2$ is sufficiently small, it is possible to choose the bare parameters $Z^{J, s}_N, Z^s_N, m^s_N$ so that
\begin{equation}
\label{eq:rcc_fixing}
Z^{J, s}_{h^\star} = Z^s_{h^\star-1} = 1, \qquad m^s_{h^\star-1} = m \qquad \forall s = \pm,
\end{equation}
where $m > 0$ is the physical mass of the fermion. With this choice, the running coupling constants satisfy
\begin{equation}
\label{eq:rcc_bounds}
\abs*{Z^{J, s}_h - 1} \le \frac{c \lambda^2 \cutoff^2}{M^2} 2^{h - N}, \quad \abs*{Z^s_h - 1} \le \frac{c \lambda^2 \cutoff^2}{M^2} 2^{h - N}, \quad \abs*{m^s_h - m} \le m\frac{c \lambda^2 \cutoff^2}{M^2} 2^{h - N}.
\end{equation}
\end{theorem}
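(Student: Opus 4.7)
The plan is to treat the choice of bare parameters as a fixed point problem. Writing the beta function recursions backward from scale $h^\star-1$ (or $h^\star$ for the charge), the renormalization conditions~\eqref{eq:rcc_fixing} translate into integral equations
\begin{equation}
Z^{J,s}_N = 1 - \sum_{h=h^\star+1}^{N} \beta^{J,s}_h, \qquad Z^s_N = 1 - \sum_{h=h^\star}^{N} \beta^s_h, \qquad m^s_N = m - \sum_{h=h^\star}^{N} \beta^{m,s}_h,
\end{equation}
where each $\beta_h$ is a functional of the whole sequence $\underline{v} \equiv \{Z^{J,s'}_j, Z^{s'}_j, m^{s'}_j\}_{j \ge h, s'=\pm}$ through the tree expansion of Section~\ref{subsec:trees}. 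More generally, running these relations from $h^\star$ (or $h^\star-1$) upward defines a map $T$ that sends a sequence $\underline{v}$ to a new sequence $\underline{v}' = T(\underline{v})$ with $\underline{v}'_{h^\star}$ equal to the prescribed boundary values. The statement reduces to showing that $T$ has a fixed point on the closed ball $\mathcal{B}$ of sequences satisfying the bounds~\eqref{eq:rcc_bounds}.

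First I would verify that $T$ maps $\mathcal{B}$ into itself. Given $\underline{v} \in \mathcal{B}$, the hypothesis~\eqref{eq:lm:beta_bounds} of Lemma~\ref{lm:propagator_bounds} is satisfied (by Theorem~\ref{th:renormalization_bound} the same $c$ works, up to enlarging the constant by a harmless factor), so for every $h$
\begin{equation}
\abs*{Z^{J,s}_h - 1} \le \sum_{j=h+1}^{N} \abs*{\beta^{J,s}_j} \le c \frac{\lambda^2 \cutoff^2}{M^2} \sum_{j=h+1}^{N} 2^{j-N} \le 2c \frac{\lambda^2 \cutoff^2}{M^2} 2^{h-N},
\end{equation}
and similarly for the other running couplings; for the mass one uses $\abs*{\beta^{m,s}_h} \le c\, m \, \lambda^2\cutoff^2/M^2 \cdot 2^{h-N}$, which follows from $\abs*{2^{-h^\star}\beta^{m,s}_h} \le c \lambda^2\cutoff^2/M^2 \cdot 2^{h-N}$ together with $2^{h^\star} \le 2m$. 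Hence, after possibly redefining the constant $c$ in~\eqref{eq:rcc_bounds} by a factor of $2$, $T(\mathcal{B}) \subseteq \mathcal{B}$.

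Next I would show that $T$ is a contraction on $\mathcal{B}$ in the norm $\norm{\underline{v}}_* \equiv \sup_{h, s}\{ |Z^{J,s}_h| + |Z^s_h| + m^{-1} |m^s_h| \} \cdot 2^{N-h}$. Given $\underline{v}, \underline{w} \in \mathcal{B}$, a direct inspection of the tree expansion~\eqref{eq:sum_over_trees_P} shows that each $\beta$-function is a polynomial of degree at most one in each RCC appearing in the propagators, so $\beta_h(\underline{v}) - \beta_h(\underline{w})$ is, tree by tree, obtained by replacing a single $\sing{g}{j}(\underline{v})$ with $\sing{g}{j}(\underline{v}) - \sing{g}{j}(\underline{w})$. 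The latter difference obeys the same bounds as $\mathscr{S}\sing{g}{j}$ in Lemma~\ref{lm:propagator_bounds}, with $2^{h^\star-j}$ replaced by $\norm{\underline{v} - \underline{w}}_* \cdot 2^{j-N}$. Repeating the argument leading to~\eqref{eq:boundedness_beta_functions} with this modification (namely reusing the tree bound of Theorem~\ref{th:renormalization_bound} but with the extra factor from the propagator difference) produces $\abs{\beta_h(\underline{v}) - \beta_h(\underline{w})} \le c' \lambda^2 \cutoff^2/M^2 \cdot 2^{h-N} \norm{\underline{v}-\underline{w}}_*$, and summing over $h$ gives $\norm{T(\underline{v}) - T(\underline{w})}_* \le C \lambda^2 \cutoff^2/M^2 \cdot \norm{\underline{v}-\underline{w}}_*$, which is a contraction for $\lambda^2\cutoff^2/M^2$ sufficiently small. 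The Banach fixed point theorem then yields a unique $\underline{v}^\star \in \mathcal{B}$ with $T(\underline{v}^\star) = \underline{v}^\star$; the corresponding $(Z^{J,s}_N, Z^s_N, m^s_N)$ are the desired bare parameters.

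The main obstacle is the contraction estimate: the beta functions are defined by the infinite tree expansion, so one must argue that the substitution $\underline{v} \mapsto \underline{w}$ in all propagators produces an effect that is still summable over the trees with a small prefactor. The correct strategy is to re-derive Theorem~\ref{th:renormalization_bound} carrying a derivative with respect to one RCC along the flow, exploiting the fact that such a derivative produces the same type of $2^{h^\star-j}$-like gain (now $2^{j-N} \norm{\underline{v}-\underline{w}}_*$) that was used in the analysis of $\mathscr{S}$; this is where the smallness condition on $\lambda^2\cutoff^2/M^2$ is consumed.
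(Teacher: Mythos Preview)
Your proposal is correct and follows essentially the same route as the paper: recast~\eqref{eq:rcc_fixing} as a fixed point equation for the sequence of running couplings, use the beta function bounds from Theorem~\ref{th:renormalization_bound} to show the map preserves a small ball, and obtain the contraction estimate by bounding $\beta_h(\underline{v})-\beta_h(\underline{w})$ via propagator differences, exactly analogous to the $\mathscr{S}$ analysis. The only noteworthy differences are cosmetic---the paper uses an unweighted $\ell^1$ norm rather than your weighted sup norm, and it explicitly drops the $\ren^1$ part of the renormalization (working with $R_v\le 2$ and short-memory exponent $\theta<1$) so that the beta functions are genuine functions of the \emph{independent} variables $\{Z^{J,s}_j,Z^s_j,m^s_j\}_{j>h}$ rather than of the recursively defined flow; you should flag this point, since with $\ren^1$ present ``evaluating at $m_N=0$'' couples the scales in a way that makes the independence less transparent.
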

\begin{proof}
For the sake of this proof, we exploit the fact that the highest order localization operators occurring in~\eqref{eq:localization_definition} do \emph{not} contribute to $\loc \sing{\mathscr{V}}{h}$, as shown in~\eqref{eq:localization_request}. Thus, we can treat $\ren$ as if it were a purely Taylor renormalization at the price of worsening the estimates~\eqref{eq:th:r_bound},~\eqref{eq:th:p_bound} by letting $R_v \in \lbrace 2, 1, 0 \rbrace$ instead of $R_v \in \lbrace 3, 2, 0 \rbrace$.

In absence of zero-mass renormalizations, the tree expansion naturally allows to treat $\beta_h^s, \beta^{J, s}_h, 2^{-h} \beta^{m, s}_h$ as functions of $\lambda$ and of the finite sequence of \emph{independent variables} $\lbrace Z^s_j, Z^{J, s}_j, m^s_j \rbrace_{j =h+1}^N$. We therefore introduce the vectors
\begin{align*}
\vec{v}_h
& \equiv (Z^+_h, Z^-_h, Z^{J, +}_h, Z^{J, -}_h, 2^{-h^\star} m^+_h, 2^{-h^\star} m^-_h), \\
\vec{\beta}_h & \equiv (\beta^+_h, \beta^-_h, \beta^{J, +}_h, \beta^{J, -}_h, 2^{-h^\star} \beta^{m, +}_h, 2^{-h^\star} \beta^{m, -}_h),
\end{align*}
so that, given any $h = h^\star, \dots, N$, the RG equations read
\begin{equation}
\label{eq:rg_equations}
\vec{v}_{h^\star-1} = \vec{v}_h + \sum_{j = h^\star}^h \vec{\beta}_j(\lambda, \underline{\vec{v}} \,),
\end{equation}
where $\underline{\vec{a}} \equiv ( \vec{a}_h )_{h=h^\star}^N \in \R^{N - h^\star + 1} \otimes \R^6$ and it is understood that $\beta^{J, s}_{h^\star} = 0$. The condition~\eqref{eq:rcc_fixing} can be expressed as $\mathsf{T} \underline{\vec{v}} = \underline{\vec{v}}$, where the map $\mathsf{T}$ acts as
\begin{equation}
(\mathsf{T} \underline{\vec{v}} \,)_h \equiv \vec{w}_h - \! \sum_{j = h^\star}^h \vec{\beta}_j(\lambda, \underline{\vec{v}} \,)
\end{equation}
with $\vec{w}_h \equiv (1, 1, 1, 1, 2^{-h^\star} m, 2^{-h^\star} m) \,\, \forall h$. Note that, by construction, \emph{any} vector $\underline{\vec{v}}$ that satisfies $\mathsf{T} \underline{\vec{v}} = \underline{\vec{v}}$ must also satisfy $\underline{\vec{v}}_{h-1} = \underline{\vec{v}}_h + \vec{\beta}_h(\lambda, \underline{\vec{v}} \,)$.

Let $\mathcal{U} \subseteq \R^{N - h^\star + 1} \otimes \R^6$ be a closed ball of radius $\epsilon > 0$ centered around $\underline{\vec{w}}$, constructed with respect to the norm $\norm*{\underline{\vec{a}}} \equiv \sum_{j = h^\star}^N \sum_{q = 1}^6 \abs*{a_j^q}$. If the maximum gain produced by $\ren$ is equal to $2$, the beta functions can still be bounded by means of the short memory property, but we must choose $\theta < 1$ in order to have $a_v \equiv D_v - R_v + \Theta_v + (\abs*{P_v}/4) \chi(\abs*{P_v} \ge 8) < 0$. In particular, if both $\lambda^2 \cutoff^2/M^2$ and $\norm*{\underline{\vec{v}} - \underline{\vec{w}}}$ are sufficiently small, the same construction adopted in the proof of Theorem~\ref{th:renormalization_bound} yields $\abs*{\beta^q_j(\lambda, \underline{\vec{v}})} \le (\mathrm{const}) \cdot (\lambda^2 \cutoff^2/M^2) \cdot 2^{(h-N)/2}$ for every $q = 1, \dots, 6$ and for every $j \in h^\star, \dots, N$. Provided that $\lambda^2 \cutoff^2/M^2$ is small enough, the map $\mathsf{T}$ sends $\mathcal{U}$ into itself, because
\begin{equation}
\underline{\vec{v}} \in \mathcal{U} \Rightarrow \norm*{\mathsf{T} \underline{\vec{v}} - \underline{\vec{w}}} \le \sum_{j = h^\star}^N \sum_{q = 1}^6 \abs*{\beta^q_j(\lambda, \underline{\vec{v}})} \le 6c \frac{\lambda^2 \cutoff^2}{M^2} \sum_{j = h^\star}^N 2^{(j-N)/2} \le 6C \frac{\lambda^2 \cutoff^2}{M^2} \le \epsilon.
\end{equation}
Moreover, it can be shown that
\begin{equation}
\label{eq:th:contraction}
\norm*{\mathsf{T} \underline{\vec{v}}_{(1)} - \mathsf{T} \underline{\vec{v}}_{(2)}} \le \sum_{j = h^\star}^N \sum_{q = 1}^6 \abs*{\beta^q_j(\lambda, \underline{\vec{v}}_{(1)}) - \beta^q_j(\lambda, \underline{\vec{v}}_{(2)})} \le \norm*{\underline{\vec{v}}_{(1)} - \underline{\vec{v}}_{(2)}} \frac{c\lambda^2 \cutoff^2}{M^2} \sum_{j = h^\star}^N 2^{(j - N)/2},
\end{equation}
see for instance~\cite[Appendix A5]{Giuliani_2005}. The idea is simple: the difference $\beta^q_j(\lambda, \underline{\vec{v}}_{(1)}) - \beta^q_j(\lambda, \underline{\vec{v}}_{(2)})$ can be written in terms of kernels of the form $\mathscr{T}_{1, 2} W_{\tau, \underline{P}, \underline{\Delta}}$, where $\mathscr{T}_{1, 2} (\, \cdot \,) \equiv (\, \cdot \,) \vert_{\underline{\vec{v}}_{(1)}} - (\, \cdot \,)\vert_{\underline{\vec{v}}_{(2)}}$. The $\mathscr{T}_{1, 2}$ operator is a generalization of $\mathscr{S}$, so the whole argument that starts from the decomposition~\eqref{eq:P_action_explicit} can be easily adapted to~\eqref{eq:th:contraction} as well. 

After summing over $j$ in~\eqref{eq:th:contraction}, we obtain $\norm*{\mathsf{T} \underline{\vec{v}}_{(1)} - \mathsf{T} \underline{\vec{v}}_{(2)}} \le c' \lambda^2 \cutoff^2/M^2$, so $\mathsf{T} \vert_{\mathcal{U}}$ is a contraction on $\mathcal{U}$ if $\lambda^2 \cutoff^2/M^2$ is suffciently small. Banach-Caccioppoli's theorem then guarantees the existence of a unique fixed point $\underline{\vec{u}} \colon \mathsf{T} \underline{\vec{u}} = \underline{\vec{u}}$. By construction, the sequence $\lbrace \vec{u}_h \rbrace_{h = h^\star}^N$ represents a RG flow such that $\vec{u}_{h^\star-1} \equiv \vec{u}_{h^\star} + \vec{\beta}_{h^\star}(\lambda, \underline{\vec{u}}) = (1, 1, 1, 1, 2^{-h^\star} m, 2^{-h^\star} m)$: recalling that $\beta^{J, s}_{h^\star} = 0 \Rightarrow Z^{J, s}_{h^\star-1} = Z^{J, s}_{h^\star}$, the renormalization conditions~\eqref{eq:rcc_fixing} are satisfied if we choose the ultraviolet parameters as $(Z^+_N, Z^-_N, \dots, 2^{-h^\star} m^-_N) \equiv (u_N^1, u_N^2, \dots, u_N^6)$. Finally, the bounds~\eqref{eq:rcc_bounds} are an immediate consequence of~\eqref{eq:boundedness_beta_functions} and~\eqref{eq:rg_equations} combined with~\eqref{eq:rcc_fixing}.
\end{proof}

\subsection{Integration of the lowest scale}
\label{subsec:last_scale}
Thanks to the presence of a nonvanishing fermion mass, the norms of the renormalized propagator $\sing{(g')}{\le h^\star}$ (defined as in Section~\ref{subsec:single_scale_integration}) satisfy the bounds~\eqref{eq:propagator_weighted_norm},~\eqref{eq:propagator_L_infty_norm}, so it is possible to interrupt the Renormalization Group flow on scale $h^\star - 1$. In fact, $\sing{(g')}{\le h^\star}$ is equal to the periodization of the function displayed in~\eqref{eq:propagator_periodization} with $h = h^\star$ and with $\chi_0$ in place of $f_0$. Following the same argument used in the proof of Lemma~\ref{lm:propagator_bounds}, it can be easily seen that the ``denominator'' $u(k) = i \tilde{\gamma}^\mu_{h^\star}(2^{h^\star}k) k_\mu + 2^{-h^\star}\tilde{\mathsf{m}}_{h^\star}(2^{h^\star}k)$ satisfies the bound $\norm*{\partial_{k_\mu}^j u} \le c_2 C^j (j!)^2$ for every $j \in \N, k \in \supp(\chi_0)$, as it happens for $\sing{g}{h}$. Moreover, the first factor appearing in~\eqref{eq:lm:u-1} is manifestly controlled by some scale-independent constant $c_3$, while
\begin{equation}
\label{eq:mass_dominance_over_kinetic_term}
\sup_{k \in \supp(\chi_0)}\abs{k^2 \tilde{Z}^+_{h^\star}(2^{h^\star} k) \cdot \tilde{Z}^-_{h^\star}(2^{h^\star}k) + 2^{-2h^\star}[\tilde{m}^s_{h^\star}(2^{h^\star}k)]^2}^{-1} \le \abs{2^{-2h^\star}[\tilde{m}^s_{h^\star}(2^{h^\star}k)]^2}^{-1} \le c_4,
\end{equation}
where $c_4$ is another scale-independent constant. We therefore have $\norm*{u^{-1}(k)} \le c_3 c_4 \equiv c_1$ and the proof of Lemma~\ref{lm:propagator_bounds} can be followed with no further variations.

Once the integration with respect to $\sing{(g')}{\le h^\star}$ is explicitly performed, we are left with the expansion
\begin{equation}
W[\src] = \sum_{\tau \in \trees'_{h^\star - 1|N}} \val(\tau)[\src],
\end{equation}
it being understood that the propagator associated with the lowest scale is $\sing{g}{h^\star} \equiv \sing{(g')}{\le h^\star}$. Note that $\val(\tau)$ does not depend on $\psi$, because every $\psi$ field has been integrated away during the last Renormalization Group step. Since $\sing{g}{h^\star}$ behaves as a single scale propagator, the bound~\eqref{eq:renormalization_bound} still applies to every tree $\tau \in \trees'_{h^\star - 1|N}$.

\subsection{Infinite volume limit}
\label{subsec:thermodynamic_limit}

The bounds obtained in Theorem~\ref{th:renormalization_bound} are uniform with respect to the spacetime volume. Indeed, it can be shown that the sequence of finite-volume kernels $\lbrace W_{\tau, \underline{P}, \underline{\Delta}, L} \rbrace_{L \in \N}$ converges in the $L \to +\infty$ limit with respect to the uniform norm on every \emph{fixed} compact subset of $\R^{4(\abs*{P_{w_0}} + n^J_{w_0} + n^\eta_{w_0})}$, where $w_0$ is the first nonroot vertex of $\tau$. A complete proof of this statement is rather long, so we shall not discuss it here; a detailed treatment can be found in~\cite[Appendix D]{Giuliani_Mastropietro_2010}. Although our model differs from the one analyzed in~\cite{Giuliani_Mastropietro_2010}, the proof can be adapted to the present case with little effort. The main idea consists in noticing that the infinite volume limits of the functions $\delta_N, W_{(\eta)}^N, W_{(J)}^N, W_{(\lambda)}^N, \sing{g}{N}$ exist and they are equal to the inverse Fourier transforms of $\chi_{N+1}(k), \hat{W}_{(\eta)}^N(k), \hat{W}_{(J)}^N(p, p'), \hat{W}_{(\lambda)}^N(k_1, k_2, k_3), \sing{\hat{g}}{N}(k)$, thought as functions of $\R^4$ vectors. As a consequence, the kernels $W_{\tau, \underline{P}, \Delta, L}$ can be defined in the infinite volume limit for every $\tau \in \trees'_{N-1|N}$ (this is realized by taking~\eqref{eq:W_tau_recursive} and replacing every function $\delta_N, W_{(\eta)}^N, \dots$ with its infinite volume limit). The difference between $W_{\tau, \underline{P}, \Delta, L}$ and $W_{\tau, \underline{P}, \Delta}$ can be shown to vanish as $L \to +\infty$ (see~\cite[Equations (D.7), (D.8) et seq.]{Giuliani_Mastropietro_2010}). The existence of the infinite volume limits of the above kernels implies the existence of the infinite volume limits of the running coupling constants $m^s_{N-1}, Z^s_{N-1}, Z^{J, s}_{N-1}$ as well as those of the functions $W_{(\eta)}^{N-1}, W_{(J)}^{N-1}, W_{(\lambda)}^{N-1}, \sing{g}{N-1}$. Relying on the recursive definition~\ref{eq:sum_over_trees_P}, we can iteratively apply the above construction to every kernel $W_{\tau, \underline{P}, \underline{\Delta}, L}$ for every $\tau \in \trees'_{j|N}$, with $j = h^\star-1, \dots, N-1$.

From now on, we will always assume that the infinite volume limit has been taken.

\section{Evaluation of the anomalous gyromagnetic factor}
\label{sec:g-2_calculation}

Based on the results discussed in Section~\ref{sec:RG_analysis} and~\ref{sec:renormalized_expansion_bounds}, we can finally address the problem of evaluating $\rgyr$, which is expressed by (19) in terms of 
$\hat{\Gamma}^\mu(p', p)$ and its derivatives computed at $p=p'=0$.
The bounds~\eqref{eq:renormalization_bound},~\eqref{eq:renormalization_bound_zero_mass} yield to the conclusion that the tree expansions for the correlation functions 
$\hat{\Gamma}^\mu(p', p)$ are absolutely convergent if $\lambda \le \BigO(M/\cutoff)$ and bounded at any order $n$ by $\BigO((\lambda M/\cutoff)^{2n})$.
This bound is however not sufficient to prove our main result, which estabilishes that $\rgyr$ is bounded above and below by a $\BigO(m^2/M^2)$ constant.
In order to get this result we note that, based on Theorem~\ref{th:renormalization_bound} and Theorem~\ref{th:bare_constants}, it is 
\begin{align}
\begin{split}
\label{eq:convergent_expansion_vertex}
\hat{\Gamma}^\mu(p', p) 
& = \gamma^\mu + \sum_{n = 1}^{+\infty} \lambda^{2n} \hat{\Gamma}^\mu_n(p', p; \lambda),
\end{split} \\
\begin{split}
\label{eq:convergent_expansion_S}
\hat{S}(k)
& = \sum_{n = 0}^{+\infty} \lambda^{2n} \hat{S}_n(k; \lambda) \equiv \frac{1}{i \slashed{k} + m} \left(1 + \sum_{n = 1}^{+\infty} \lambda^{2n} \hat{G}_n(k; \lambda) \right), 
\end{split}
\end{align}
where each contribution of order $n$ inlcudes the sum over all the possible trees with $n$ endpoints of type $\lambda$; in particular, the first of the above expressions coincides with~\eqref{eq:vertex_splitting}. The simple form of the $n = 0$ terms is due to the fact that we have imposed the renormalization conditions via Theorem~\ref{th:bare_constants}. 

In this section, we shall prove three properties:
\begin{itemize}
\item If $\abs*{p'}, \abs{p} \le m/2$ and $n \ge 1$, the bound for $\lambda^{2n} \hat{\Gamma}^\mu_n(p', p; \lambda)$ can be improved by a factor $m^2/M^2$, provided that the value of $\lambda$ is sligthly decreased. This is due to the presence of symmetries implemented by our choice of the $\ren$ operator defined in Section~\ref{subsec:localization}.
\item This $m^2/M^2$ improvement is not present in the $n = 0$ term of the expansion~\eqref{eq:convergent_expansion_vertex}; however, the contribution to $\rgyr$ coming from this term is vanishing.
\item We separate the $n = 1$ term from the rest and we show that it gives the same value of the Jackiw-Weinberg formula~\eqref{eq:JW_result} up to subdominant corrections.
\end{itemize}

In what follows, the symbol $\BigO(\lambda f)$ is used to denote any function of $\lambda$ that satisfies $\abs{\BigO(\lambda f)} \le C \cdot \abs{\lambda f}$ for some constant $C > 0$ that \emph{does not depend on $m, M, \cutoff, \lambda$}.

\subsection{Bounds for \texorpdfstring{$\hat{\Gamma}^\mu(p', p)$}{G} and \texorpdfstring{$\hat{S}(k)$}{S}}

\label{subsec:bounds_on_propagator_and_vertex}

By combining Theorem~\ref{th:renormalization_bound} and Theorem~\ref{th:bare_constants}, we exhibit a bound on the two-point function and the amputated vertex function. In particular, we prove that the subdominant part of $\hat{\Gamma}^\mu(p', p)$ is suppressed by a factor $m^2/M^2$.

\begin{theorem}
\label{th:convergent_expansions}
If $\lambda^2 \cutoff^2/M^2 \cdot \log^2(\cutoff/m)$ is sufficiently small, then~\eqref{eq:convergent_expansion_vertex},~\eqref{eq:convergent_expansion_S} hold with
\begin{equation}
\label{eq:bounds_on_S_and_Gamma}
\norm*{m^\ell \partial^\ell \hat{G}_n(k; \lambda)}, \,\, \norm*{m^\ell \partial^\ell \hat{\Gamma}^\mu_n(p', p; \lambda)} \le \frac{m^2}{M^2} \cdot \left(\frac{C_\ell \cutoff^2}{M^2} \right)^{n - 1} \log^{2n} \left( \frac{M}{m} \right) \log^{2n} \left( \frac{\cutoff}{M} \right)
\end{equation}
for every $k, p', p$ such that $\abs{k}, \abs*{p}, \abs*{p'} \le m/2$ and for every $\ell$, where $C_\ell$ is a suitable $\ell$-dependent constant.
\end{theorem}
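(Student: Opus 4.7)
The strategy combines the tree expansion of Section~\ref{sec:RG_analysis}, the renormalized bounds of Theorem~\ref{th:renormalization_bound} and the crucial improvement noted in Remark~\ref{rmk:gain}. Each coefficient $\hat{\Gamma}_n^\mu(p', p; \lambda)$ (resp.\ $\hat{G}_n(k; \lambda)$) is represented as a sum over trees $\tau \in \trees'_{h^\star - 1 | N}$ with $n$ $\lambda$-endpoints, two $\eta$-endpoints, and one (resp.\ zero) $J$-endpoints. Since the external momenta satisfy $\abs{p}, \abs{p'}, \abs{k} \le m/2$, the external fermion legs live on the infrared scale $h^\star$, so both $\eta$-endpoints are contracted at scale $h^\star$ (i.e.\ $h'_v = h^\star$ for every $\eta$-endpoint $v$). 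The hypotheses of Remark~\ref{rmk:gain} are then met and every nontrivial vertex $v$ satisfies $D_v - R_v \le -2$.

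Inserting this improvement into~\eqref{eq:renormalization_bound} with $D_{w_0} = 0$ for $\hat{\Gamma}_n^\mu$ (and $D_{w_0} = 1$ for $\hat{S}_n$), together with the explicit factor $\prod_v^\eta 2^{-h'_v} = 2^{-2h^\star}$, yields a per-tree bound in which $\prod_v^\circ 2^{-2(h_v - h'_v)}$ gives exponential suppression along scale chains. The weighted $L^1$ norm dominates the Fourier transform at momenta $\abs{p}, \abs{p'}, \abs{k} \le m/2$ up to universal constants, and amputation from $\hat{\mathfrak{I}}^\mu$ to $\hat{\Gamma}^\mu$ (resp.\ from $\hat{S}_n$ to $\hat{G}_n = (i \slashed{k} + m) \hat{S}_n$) contributes extra factors $\hat{S}^{-1}(p) = \BigO(m)$, as guaranteed by the renormalization conditions~\eqref{eq:conditions_correlations}. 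Connectedness of the underlying Feynman diagram forces at least one $\lambda$-endpoint to contract directly with an $\eta$-endpoint at scale $h^\star$, so it has $h'_v = h^\star$ and contributes $2^{2(h^\star - N)} = m^2/\cutoff^2$. Combining this with the $2^{-2h^\star}$ from the $\eta$-endpoints, the $\BigO(m^2)$ from amputation and the $2^{(h^\star - 1) D}$ from dimensional scaling yields an extra suppression $m^2/\cutoff^2$ relative to the naive per-tree bound $(\lambda^2 \cutoff^2/M^2)^n$, thereby producing the claimed scaling $(m^2/M^2) \cdot (\cutoff^2/M^2)^{n-1}$ for every $n \ge 1$.

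Summation over unlabelled tree structures is then carried out via the short-memory property as in the proof of Theorem~\ref{th:renormalization_bound}, while the sums over scale assignments naturally split at the boson mass scale $h_M \approx \log_2 M$: the available scales have cardinality $\log_2(M/m)$ below $h_M$ and $\log_2(\cutoff/M)$ above. The combinatorics of scale sums for the $2n$ independent internal lines of the tree produces the factor $\log^{2n}(M/m) \log^{2n}(\cutoff/M)$, and the smallness assumption $\lambda^2 \cutoff^2 M^{-2} \log^2(\cutoff/m) \ll 1$ guarantees absolute convergence of the series~\eqref{eq:convergent_expansion_vertex},~\eqref{eq:convergent_expansion_S}. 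Derivatives with respect to external momenta are controlled through the identity $\partial^\ell \hat{W}(p) = \int (-ix)^\ell W(x) e^{-ip \cdot x} \, dx$ and the Gevrey-style bound $\abs{x}^\ell \le C_\ell m^{-\ell} \weight_{h^\star}(x)$ (already exploited in the proof of Lemma~\ref{lm:taylor_renormalization_gain}), whose cost $m^{-\ell}$ is precisely compensated by the prefactor $m^\ell$. The hardest step is the combinatorial bookkeeping: identifying the ``infrared'' $\lambda$-endpoint in each tree and controlling the scale sums through the intermediate scale $h_M$ so that the $\cutoff/M$ dependence appears only logarithmically rather than as a power, which is exactly what the improvement $D_v - R_v \le -2$ of Remark~\ref{rmk:gain} makes possible and what prevents the smallness regime from collapsing to the physically unacceptable $\lambda \ll \BigO(m/\cutoff)$.
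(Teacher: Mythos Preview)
The overall strategy is right, but two key steps are misidentified.

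\textbf{First gap.} You claim that ``connectedness forces at least one $\lambda$-endpoint to have $h'_v = h^\star$ and contribute $2^{2(h^\star - N)}$''. This is false in general: $h'_v$ for a $\lambda$-endpoint is the scale of the first nontrivial \emph{tree} vertex preceding it, which is determined by the cluster hierarchy, not by which Feynman-diagram propagator touches the corresponding interaction vertex. For instance, in the tree of Figure~\ref{fig:fourth_order_diagram} both $\lambda$-endpoints have $h'_v \in \{h_1, h_2\} > h^\star$. The paper extracts $2^{2(h^\star - N)}$ by the short memory property with $\theta = 2$: one fixes a path $\mathscr{P}$ of nontrivial vertices from $w_0$ (at scale $h^\star$) to a vertex $w_r$ immediately preceding some $\lambda$-endpoint and, for each $v \in \mathscr{P}$, writes $2^{(h_v - h'_v)(D_v - R_v)} = 2^{(h_v - h'_v)(D_v - R_v + 2)} \cdot 2^{2(h'_v - h_v)}$. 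The second factors telescope along $\mathscr{P}$ to (a constant times) $2^{2(h^\star - h_{w_r})}$, which combines with the factor $2^{2(h_{w_r} - N)}$ coming from that $\lambda$-endpoint in $\prod_v^\lambda 2^{2(h'_v - N)}$ to give $2^{2(h^\star - N)} \simeq m^2/\cutoff^2$. The improvement $D_v - R_v \le -2$ of Remark~\ref{rmk:gain} is precisely what allows one to ``borrow'' $\theta = 2$ along the path while keeping the first factors bounded by $1$.

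\textbf{Second gap.} Your origin for the logarithms is wrong. There is no splitting at the boson scale $h_M$ in this proof; that device appears only later, in Lemma~\ref{lm:sum_over_scales}, for a different purpose. The factor $\log^{2n}(\cutoff/m)$ arises because along $\mathscr{P}$ the residual exponent $D_v - R_v + 2$ can equal $0$, so the sum over scale differences $b_v = h_v - h'_v$ is \emph{undamped} there and contributes $N - h^\star \simeq \log(\cutoff/m)$ per vertex; since $\abs*{\mathscr{P}} \le 2n$, one gets at most $\log^{2n}(\cutoff/m)$. The product form $\log^{2n}(M/m)\log^{2n}(\cutoff/M)$ then follows from the elementary inequality $\log(\cutoff/m) \le 2\log(\cutoff/M)\log(M/m)$. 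Off the path the exponents remain strictly negative and those scale sums converge geometrically, contributing only to the constant $C_\ell$.

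Your treatment of derivatives and of the passage from $\hat{\mathfrak{I}}^\mu$ to $\hat{\Gamma}^\mu$ via the Cauchy product with the inverse two-point function is correct and matches the paper.
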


\begin{proof}
The tree expansion for $\hat{\mathfrak{I}}^\mu(p', p)$ reads
\begin{equation}
\label{eq:full_vertex_tree_expansion}
\hat{\mathfrak{I}}^\mu(p', p) = \sum_{n \ge 0} \, \sum_{\tau_n} \, \sum_{\underline{P}, \underline{\Delta}} \hat{\mathfrak{I}}^\mu_{\tau_n, \underline{P}, \underline{\Delta}}(p', p) \equiv \sum_{n \ge 0} \lambda^{2n} \hat{\mathfrak{I}}^\mu_n(p', p; \lambda),
\end{equation}
where the sum $\sum_{\tau_n}$ is extended over trees $\tau_n \in \trees'_{h^\star-1|N}$ with $n$ endpoints of type $\lambda$, one $J$ endpoint and two $\eta$ endpoints; the sets $\underline{P}, \underline{\Delta}$ are chosen so that the kernel $\hat{\mathfrak{I}}^\mu_{\tau, \underline{P}, \underline{\Delta}}(p', p)$ has no external $\psi$ fields. Due to the compact support of the single scale propagators, the condition $\abs*{p}, \abs*{p'} \le m/2$ implies that the $\eta$ endpoints of any tree contributing to~\eqref{eq:full_vertex_tree_expansion} must be contracted on scale $h^\star$.

If $n \ge 1$, the short memory property can be used to extract an overall $m^2/M^2$ factor from $\hat{\mathfrak{I}}^\mu_{\tau_n, \underline{P}, \underline{\Delta}}(p', p)$. According to Theorem~\ref{th:renormalization_bound}, we have
\begin{equation}
\norm*{\hat{\mathfrak{I}}^\mu_{\tau_n, \underline{P}, \underline{\Delta}}(p', p)} \le \frac{C^n \lambda^{2n} \cutoff^{2n}}{M^{2n}} 2^{-2h^\star} \prod^\circ_v 2^{(h_v - h'_v)(D_v - R_v + \Theta_v)} 2^{(h'_v - h_v)\Theta_v} \prod_v^\lambda 2^{2(h'_v - N)},
\end{equation}
where the function $\Theta \colon \vrt(\tau) \to \lbrace 0, 2 \rbrace$ satisfies $\Theta(\mathscr{P}) = \theta \equiv 2, \,\, \Theta(\vrt(\tau_n) \setminus \mathscr{P}) = 0$ and the path $\mathscr{P}$ connects the first nonroot vertex (which, according to the above considerations, lies on scale $h^\star$) with a nontrivial vertex that precedes some $\lambda$ endpoint. Following the same steps discussed in the proof of Theorem~\ref{th:renormalization_bound}, we obtain
\begin{equation}
\norm*{\hat{\mathfrak{I}}^\mu_{\tau_n, \underline{P}, \underline{\Delta}}(p', p)} \le \frac{C^n \lambda^{2n} \cutoff^{2n}}{M^{2n}} 2^{2(h^\star - N)} \cdot 2^{-2h^\star} \prod^\circ_v 2^{(h_v - h'_v) a_v} \cdot 2^{-(\abs*{P_v}/6) \chi(\abs*{P_v} \ge 8)},
\end{equation}
where $a_v \equiv D_v - R_v + \Theta_v + (\abs*{P_v}/6) \chi(\abs*{P_v} \ge 8)$. This quantity is strictly negative only if $v$ lies outside the path $\mathscr{P}$. If $v \in \mathscr{P}$, $a_v$ may \emph{vanish}, because we are choosing $\theta = 2$ instead of $\theta = 1$. Recalling that $2^{2(h^\star - N)} \lesssim m^2/\cutoff^2 = (m^2/M^2) \cdot (M^2/\cutoff^2)$, we have
\begin{equation}
\norm*{\hat{\mathfrak{I}}^\mu_{\tau_n, \underline{P}, \underline{\Delta}}(p', p)} \le  2^{-2h^\star} \frac{m^2}{M^2} C^n \lambda^{2n} \left(\frac{\cutoff^{2}}{M^2} \right)^{n-1} \prod^\circ_v 2^{-(\abs*{P_v}/6) \chi(\abs*{P_v} \ge 8)} \prod^\circ_{v \notin \mathscr{P}} 2^{-(h_v - h'_v)} \prod^\circ_{v \in \mathscr{P}} (1).
\end{equation}
The sum over $\tau_n, \underline{P}, \underline{\Delta}$ can be performed as shown during the proof of Theorem~\ref{th:renormalization_bound}, namely 
\begin{multline}
\norm*{\hat{\mathfrak{I}}^\mu_n(p', p; \lambda)} \le \lambda^{-2n} \sum_{\tau_n, \underline{P}, \underline{\Delta}} \norm*{\hat{\mathfrak{I}}^\mu_{\tau_n, \underline{P}, \underline{\Delta}}(p', p)} \le  2^{-2h^\star} \frac{m^2}{M^2} \left(\frac{C \cutoff^{2}}{M^2} \right)^{n-1}\, \times \\
\times \prod^\circ_v \sum_{\abs*{P_v} \ge 8} 2^{-\abs*{P_v}/6}\prod^\circ_{v \notin \mathscr{P}} \sum_{b_v = 1}^{N-h^\star} 2^{-b_v} \! \prod^\circ_{v \in \mathscr{P}} \sum_{b_v=1}^{N-h^\star} (1),
\end{multline}
where the constant $C$ has been properly redefined. The first two sums occurring in the second line are bounded by $1$, while 
\begin{equation}
\prod^\circ_{v \in \mathscr{P}} \sum_{b_v=1}^{N-h^\star} (1) = \abs*{N - h^\star -1}^{\abs*{\mathscr{P}}} \le \abs*{N - h^\star -1}^{2n} \lesssim \log^{2n} \left( \frac{\cutoff}{m} \right).
\end{equation}
Since $M > m$ and $\cutoff > M$, we can write $\log(\cutoff/m) \le 2\log(\cutoff/M) \log(M/m)$, so by further redefining the constant $C$ we obtain
\begin{equation}
\label{eq:th:I_estimate}
\norm*{\hat{\mathfrak{I}}^\mu_n(p', p; \lambda)} \le  2^{-2h^\star} \frac{m^2}{M^2} \left(\frac{C \cutoff^{2}}{M^2} \right)^{n-1} \log^{2n} \left( \frac{M}{m} \right) \log^{2n} \left( \frac{\cutoff}{M} \right), \qquad n \ge 1.
\end{equation}
A similar estimate holds for $m^\ell \partial^\ell \hat{\mathfrak{I}}^\mu_n(p', p; \lambda)$ as well. In fact, the function $m^\ell \partial^\ell \hat{\mathfrak{I}}^\mu_n(p', p; \lambda)$ behaves as $m^\ell (\kloc_\ell \, \hat{\mathfrak{I}}^\mu_n)(x, y, z; \lambda)$ in position space, so its bound contains an additional $m^\ell 2^{-h^\star \ell}$ factor (see Lemma~\ref{lm:taylor_renormalization_gain}); since $m \simeq 2^{h^\star}$, this factor is of order $1$.

The argument described above can also be applied to the two-point function, yielding
\begin{equation}
\label{eq:th:S_estimate}
\norm*{\hat{G}_n(k; \lambda)} \le \frac{m^2}{M^2} \left(\frac{C \cutoff^{2}}{M^2} \right)^{n-1} \log^{2n} \left( \frac{M}{m} \right) \log^{2n} \left( \frac{\cutoff}{M} \right), \qquad n \ge 1.
\end{equation}
This, combined with the fact that $\hat{S}_0(k; \lambda) = \sing{g}{h^\star}(k) = (i \slashed{k}+ m)^{-1}$ as long as $\abs{k} \le m/2$ (see Section~\ref{subsec:tree_expansions_calculation} for further details), proves~\eqref{eq:convergent_expansion_S} and~\eqref{eq:bounds_on_S_and_Gamma} for $\hat{S}(k)$. If $\lambda^2 \cutoff^2/M^2 \log^2(\cutoff/m)$ is small enough to make the expansions for $\hat{\mathfrak{I}}^\mu(p', p)$, $\hat{S}(k)$ convergent, we also have
\begin{equation}
\label{eq:S_inverse_expansion}
\hat{S}^{-1}(k) = \sum_{n \ge 0} \lambda^{2n} (\hat{S}^{-1})_n(k; \lambda) \equiv (i \slashed{k} + m) \left( 1 + \sum_{n \ge 1} \lambda^{2n} \hat{B}_n(k; \lambda) \right),
\end{equation}
where the coefficients $\hat{B}_n(k; \lambda)$ are bounded as in~\eqref{eq:th:S_estimate}. Since $\hat{\Gamma}^\mu(p', p) = \hat{S}^{-1}(p') \cdot \hat{\mathfrak{I}}^\mu(p', p) \cdot \hat{S}^{-1}(p)$, a convergent expansion for $\hat{\Gamma}^\mu(p', p)$ is obtained as a Cauchy product; namely, we have $\hat{\Gamma}^\mu(p', p) = \sum_{n \ge 0} \lambda^{2n} \hat{\Gamma}^\mu(p', p; \lambda)$, with
\begin{equation}
\label{eq:amp_vertex_cauchy product}
\hat{\Gamma}^\mu_n(p', p; \lambda) = \sum_{\substack{n_1 + n_2 + n_3 = n \\ n_1, n_2, n_3 \ge 0}} (\hat{S}^{-1})_{n_1} (p'; \lambda) \, \hat{\mathfrak{I}}_{n_2}^\mu(p', p; \lambda) \, (\hat{S}^{-1})_{n_3}(p, \lambda).
\end{equation}
The estimates~\eqref{eq:th:I_estimate},~\eqref{eq:th:S_estimate} then imply that $\hat{\Gamma}^\mu_n(p', p; \lambda)$ satisfies~\eqref{eq:bounds_on_S_and_Gamma} for every $n \ge 1$. If $n = 0$, we have instead
\begin{equation}
\hat{\Gamma}^\mu_0(p', p; \lambda) = (\hat{S}^{-1})_0(p'; \lambda) \cdot \hat{\mathfrak{I}}^\mu_0(p', p; \lambda) \cdot (\hat{S}^{-1})_0(p; \lambda) = \gamma^\mu,
\end{equation}
so $\hat{\Gamma}^\mu(p', p)$ satisfies~\eqref{eq:convergent_expansion_vertex}. Again, the whole argument still works if $\ell \ne 0$.
\end{proof}

\subsection{Tree expansions at lowest order}
\label{subsec:tree_expansions_calculation}

As we are ultimately interested in evaluating the derivatives of $\mathcal{A}(z)$ at $z = 0$, it is sufficient to determine the functions $\hat{S}(k), \hat{\mathfrak{I}}^\mu(p', p)$ within the region $\abs{k}, \abs*{p'}, \abs{p} \le m/2$. Due to the compact support properties of the single scale propagators, the only relevant trees are
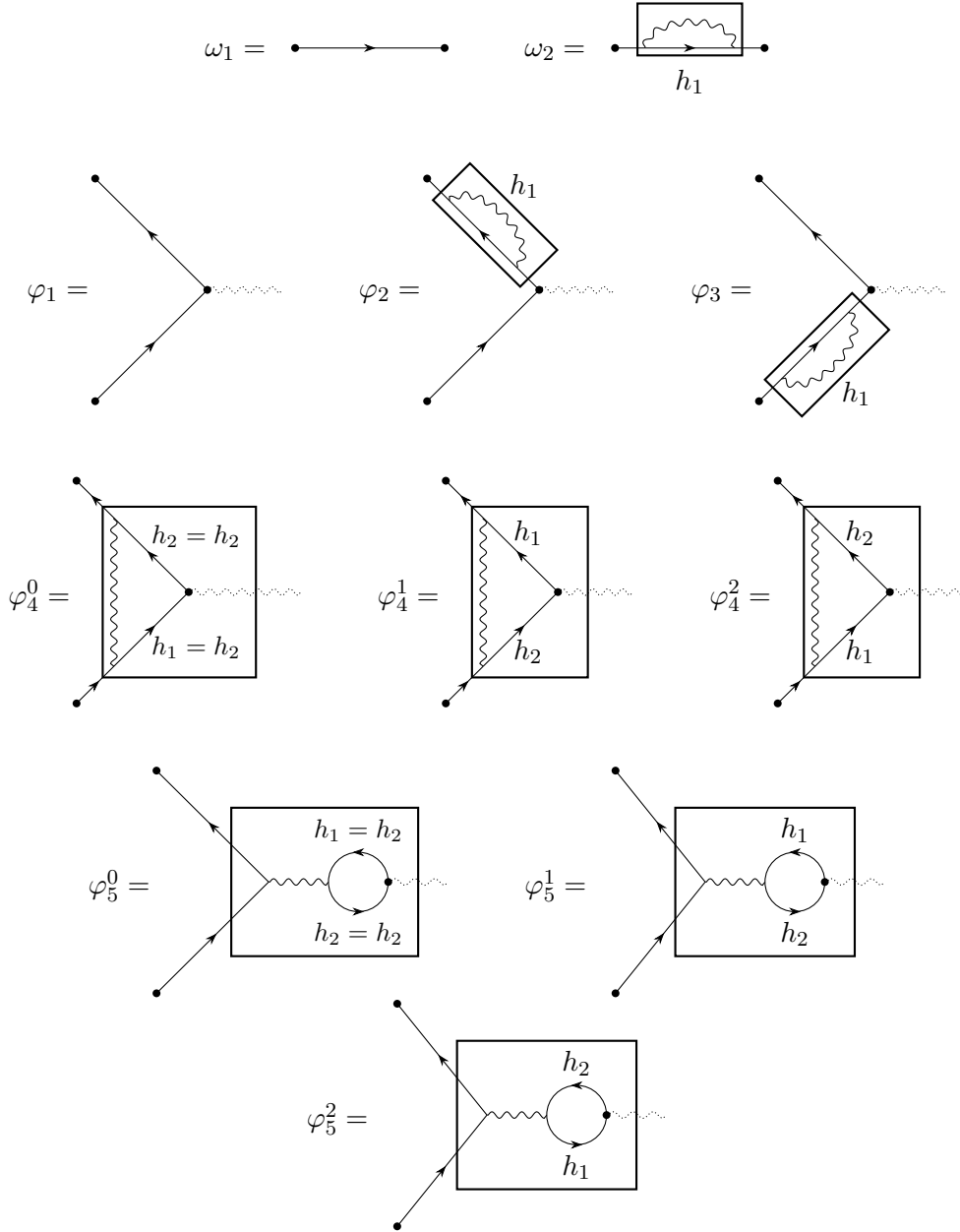
\begin{figure}[p]
\begin{center}
\begin{tikzpicture}[baseline]
\begin{feynman}
\draw[with arrow]	(-1, 0)		--	(1, 0);
\fill[black]		(-1, 0)		circle	(1.5pt)
				(1, 0)		circle	(1.5pt);
\end{feynman}
\node	(ll)	at	(-1.8, -0.05)	{$\omega_1 = $};
\end{tikzpicture}
\qquad
\begin{tikzpicture}[baseline]
\begin{feynman}
\draw[black]	(-1, 0) -- (-0.6, 0)
				(0.6, 0) -- (1, 0);
\draw[with arrow] (-0.6, 0) -- (0.6, 0);
\draw[photon]	(-0.6, 0)	to[out=85, in=85]	(0.6, 0);
\draw[black, thick]	(-0.7, -0.1)	rectangle	(0.7, 0.6);
\end{feynman}
\fill[black]	(-1, 0)	circle (1.5pt)
				(1, 0)	circle	(1.5pt);
\node	(h1)	at	(0, -0.45)	{$h_1$};	
\node	(ll)	at	(-1.8, -0.05)	{$\omega_2 = $};
\end{tikzpicture}
\\[2em]
\begin{tikzpicture}[baseline]
\begin{feynman}
\draw[photon, densely dotted]	(0, 0)	--	(1, 0);
\draw[with arrow]	(0, 0)		--	(-1.5, 1.5);
\draw[with arrow]	(-1.5, -1.5)	--	(0, 0);
\fill[black]	(0, 0)		circle	(1.5pt)
			(-1.5, -1.5)	circle	(1.5pt)
			(-1.5, 1.5)	circle	(1.5pt);
\end{feynman}
\node	(l)		at	(-2, -0.05)	{$\phi_1 = $};
\end{tikzpicture}
\qquad
\begin{tikzpicture}[baseline]
\begin{feynman}
\draw[photon, densely dotted]		(0, 0)	--	(1, 0);
\draw[photon]	(-0.3, 0.3) to [out=45, in=45] (-1.2, 1.2);
\draw[black, thick, rotate around={45:(-0.75,0.75)}] 
	(-0.9, -0.1) rectangle (-0.2 , 1.55);
\draw[with arrow]	(0, 0)		--	(-1.5, 1.5);
\draw[with arrow]	(-1.5, -1.5)	--	(0, 0);
\fill[black]		(0, 0)		circle	(1.5pt)
				(-1.5, -1.5)	circle	(1.5pt)
				(-1.5, 1.5)	circle	(1.5pt);
\end{feynman}
\node	(l)		at	(-2, -0.05)	{$\phi_2 = $};
\node	(h2)		at	(-0.2, 1.4)	{$h_1$};
\end{tikzpicture}
\qquad
\begin{tikzpicture}[baseline]
\begin{feynman}
\draw[photon, densely dotted]	(0, 0)	--	(1, 0);
\draw[photon]	(-0.3, -0.3) to [out=-45, in=-45] (-1.2, -1.2);
\draw[black, thick, rotate around={-45:(-0.75,-0.75)}] 
	(-0.9, 0.1) rectangle (-0.2 , -1.55);
\draw[with arrow]	(0, 0)		--	(-1.5, 1.5);
\draw[with arrow]	(-1.5, -1.5)	--	(0, 0);
\fill[black]		(0, 0)		circle	(1.5pt)
				(-1.5, -1.5)	circle	(1.5pt)
				(-1.5, 1.5)	circle	(1.5pt);
\end{feynman}
\node	(l)		at	(-2, -0.05)		{$\phi_3 = $};
\node	(h2)		at	(-0.2, -1.4)		{$h_1$};
\end{tikzpicture}
\\[2em]
\begin{tikzpicture}[baseline]
\begin{feynman}
\draw[photon, densely dotted]	(0, 0)	--	(1.5, 0);
\draw[photon]		(-1, 1)	--	(-1, -1);
\draw[with arrow]	(0, 0)	--	(-1, 1);
\draw[with arrow]	(-1, -1)	--	(0, 0);
\draw[with arrow]	(-1, 1)	--	(-1.5, 1.5);
\draw[with arrow]	(-1.5, -1.5)	--	(-1, -1);
\fill[black]		(0, 0)		circle	(1.5pt)
				(-1.5, -1.5)	circle	(1.5pt)
				(-1.5, 1.5)	circle	(1.5pt);
\draw[black, thick] (-1.15, -1.15) rectangle (0.9, 1.15);
\end{feynman}
\node	(l)		at	(-2, -0.05)	{$\phi^0_4 = $};
\node	(h1)		at	(0.1, -0.75)	{\small $h_1 = h_2$};
\node	(h2)		at	(0.1, 0.75)	{\small $h_2 = h_2$};
\end{tikzpicture}
\qquad
\begin{tikzpicture}[baseline]
\begin{feynman}
\draw[photon, densely dotted]	(0, 0)	--	(1, 0);
\draw[photon]				(-1, 1)	--	(-1, -1);
\draw[with arrow]			(0, 0)	--	(-1, 1);
\draw[with arrow]			(-1, -1)	--	(0, 0);
\draw[with arrow]			(-1, 1)	--	(-1.5, 1.5);
\draw[with arrow]			(-1.5, -1.5)	--	(-1, -1);
\fill[black]		(0, 0)		circle	(1.5pt)
				(-1.5, -1.5)	circle	(1.5pt)
				(-1.5, 1.5)	circle	(1.5pt);
\draw[black, thick] (-1.15, -1.15) rectangle (0.40, 1.15);
\end{feynman}
\node	(l)		at	(-2, -0.05)		{$\phi^1_4 = $};
\node	(h1)		at	(-0.4, -0.8)		{$h_2$};
\node	(h2)		at	(-0.4, 0.8)		{$h_1$};
\end{tikzpicture}
\qquad
\begin{tikzpicture}[baseline]
\begin{feynman}
\draw[photon, densely dotted]	(0, 0)	--	(1, 0);
\draw[photon]				(-1, 1)	--	(-1, -1);
\draw[with arrow]			(0, 0)	--	(-1, 1);
\draw[with arrow]			(-1, -1)	--	(0, 0);
\draw[with arrow]			(-1, 1)	--	(-1.5, 1.5);
\draw[with arrow]			(-1.5, -1.5)	--	(-1, -1);
\fill[black]		(0, 0)		circle	(1.5pt)
				(-1.5, -1.5)	circle	(1.5pt)
				(-1.5, 1.5)	circle	(1.5pt);
\draw[black, thick] (-1.15, -1.15) rectangle (0.40, 1.15);
\end{feynman}
\node	(l)		at	(-2, -0.05)		{$\phi^2_4 = $};
\node	(h1)		at	(-0.4, -0.8)		{$h_1$};
\node	(h2)		at	(-0.4, 0.8)		{$h_2$};
\end{tikzpicture}
\\[2em]
\begin{tikzpicture}[baseline]
\begin{feynman}
\draw[photon]		(0, 0)	--	(0.8, 0);
\draw[with arrow]	(1.6, 0)	arc	(0:180:0.4);
\draw[with arrow]	(0.8, 0)	arc	(180:360:0.4);
\draw[photon, densely dotted]	(1.6, 0) -- (2.4, 0);
\draw[with arrow]	(0, 0)		--	(-1.5, 1.5);
\draw[with arrow]	(-1.5, -1.5)	--	(0, 0);
\fill[black]		(-1.5, 1.5)	circle	(1.5pt)
				(-1.5, -1.5)	circle	(1.5pt)
				(1.6, 0)		circle	(1.5pt);
\draw[black, thick]	(-0.5, -1) rectangle (2, 1);
\node	(l)		at	(-2, -0.05)	{$\phi^0_5 = $};
\node	(h1)		at	(1.2, 0.7)		{\small $h_1 = h_2$};
\node	(h2)		at	(1.2, -0.7)		{\small $h_2 = h_2$};
\end{feynman}
\end{tikzpicture}
\qquad
\begin{tikzpicture}[baseline]
\begin{feynman}
\draw[photon]		(0, 0)	--	(0.8, 0);
\draw[with arrow]	(1.6, 0)	arc	(0:180:0.4);
\draw[with arrow]	(0.8, 0)	arc	(180:360:0.4);
\draw[photon, densely dotted]	(1.6, 0) -- (2.4, 0);
\draw[with arrow]	(0, 0)		--	(-1.2, 1.5);
\draw[with arrow]	(-1.2, -1.5)	--	(0, 0);
\fill[black]	(-1.2, 1.5)	circle	(1.5pt)
				(-1.2, -1.5)	circle	(1.5pt)
				(1.6, 0)		circle	(1.5pt);
\draw[black, thick]	(-0.4, -1) rectangle (2, 1);
\node	(l)		at	(-2, -0.05)		{$\phi^1_5 = $};
\node	(h1)		at	(1.2, 0.7)		{$h_1$};
\node	(h2)		at	(1.2, -0.7)		{$h_2$};
\end{feynman}
\end{tikzpicture}
\quad
\begin{tikzpicture}[baseline]
\begin{feynman}
\draw[photon]		(0, 0)	--	(0.8, 0);
\draw[with arrow]	(1.6, 0)	arc	(0:180:0.4);
\draw[with arrow]	(0.8, 0)	arc	(180:360:0.4);
\draw[photon, densely dotted]	(1.6, 0) -- (2.4, 0);
\draw[with arrow]	(0, 0)		--	(-1.2, 1.5);
\draw[with arrow]	(-1.2, -1.5)	--	(0, 0);
\fill[black]	(-1.2, 1.5)		circle	(1.5pt)
			(-1.2, -1.5)		circle	(1.5pt)
			(1.6, 0)			circle	(1.5pt);
\draw[black, thick]	(-0.4, -1) rectangle (2, 1);
\node	(l)		at	(-2, -0.05)		{$\phi^2_5 = $};
\node	(h1)		at	(1.2, 0.7)		{$h_2$};
\node	(h2)		at	(1.2, -0.7)		{$h_1$};
\end{feynman}
\end{tikzpicture}
\end{center}
\caption{Nonvanishing multiscale Feynman diagrams arising from $\xi_1, \dots, \tau_4$. Every unlabeled fermion propagator is equal to $\sing{g}{h^\star}$ and a $\ren$ operator acts on every evidenced cluster whose inner propagators live on strictly higher scales than its external legs.}
\label{fig:vertex_diagrams}
\end{figure}
\begin{center}
\begin{tikzpicture}[baseline]
\draw[black]		(-0.5, 0)	-- (0, 0)
				(0, 0)		-- (2, 0.5)
				(0, 0)		-- (2, -0.5);
\fill[black]		(0, 0)		circle	(1.5pt)
				(2, -0.5)	circle	(1.5pt)
				(2, 0.5) 	circle	(1.5pt);
\node	(hs)		at	(0, -0.3)	{$h^\star$};
\node	(e)		at	(2.3, -0.5)	{$\eta$};
\node	(f)		at	(2.3, 0.5)	{$\eta$};	
\node	(t)		at	(-1, -0.05)	{$\xi_1 = $};
\end{tikzpicture}
\qquad
\begin{tikzpicture}[baseline]
\draw[black]		(-0.5, 0)	-- (2, 0)
				(0, 0)		-- (2, 0.5)
				(0, 0)		-- (2, -0.5);
\fill[black]		(0, 0)	circle	(1.5pt)
				(2, 0)		circle	(1.5pt)
				(2, 0.5) 	circle	(1.5pt)
				(2, -0.5) 	circle	(1.5pt)
				(1, -0.25)	circle	(1.5pt);
\node	(hs)		at	(0, -0.3)	{$h^\star$};
\node	(h2)		at	(1, -0.55)	{$h_1$};
\node	(e)		at	(2.3, 0)		{$\eta$};
\node	(l)		at	(2.3, 0.5)	{$\eta$};
\node	(l)		at	(2.3, -0.5)	{$\lambda$};
\node	(t)		at	(-1, -0.05)	{$\xi_2 = $};
\end{tikzpicture}
\end{center}
for the interacting two-point function, and 
\begin{center}
\begin{tikzpicture}[baseline]
\draw[black]	(-0.5, 0)	-- (3, 0)
				(0, 0)		-- (3, 0.5)
				(0, 0)		-- (0.7, -0.3);
\fill[black]	(0, 0)	circle	(1.5pt)
				(3, 0)	circle	(1.5pt)
				(3, 0.5) circle	(1.5pt)
				(0.7, -0.3)	circle	(1.5pt);
\node	(hs)		at	(0, -0.3)	{$h^\star$};
\node	(J)		at	(1, -0.35)	{$J$};
\node	(e)		at	(3.3, 0)	{$\eta$};
\node	(f)		at	(3.3, 0.5)	{$\eta$};	
\node	(t)		at	(-1, -0.05)	{$\tau_1 = $};
\end{tikzpicture}
\\[3pt]
\begin{tikzpicture}[baseline]
\draw[black]	(-0.5, 0)	-- (4, 0)
				(0, 0)		-- (4, 1)
				(0, 0)		-- (4, 0.5)
				(0, 0)		-- (0.7, -0.3);
\fill[black]	(0, 0)	circle	(1.5pt)
				(4, 0)	circle	(1.5pt)
				(4, 0.5) circle	(1.5pt)
				(4, 1) circle	(1.5pt)
				(2, 0)	circle	(1.5pt)
				(0.7, -0.3)	circle	(1.5pt);
\node	(hs)		at	(0, -0.3)	{$h^\star$};
\node	(h2)		at	(2, -0.3)	{$h_1$};
\node	(J)		at	(1, -0.35)	{$J$};
\node	(e)		at	(4.3, 0)	{$\lambda$};
\node	(f)		at	(4.3, 1)	{$\eta$};	
\node	(l)		at	(4.3, 0.5)	{$\eta$};
\node	(t)		at	(-1, -0.05)	{$\tau_2 = $};
\end{tikzpicture}
\qquad
\begin{tikzpicture}[baseline]
\draw[black]	(-0.5, 0)	-- (4, 0)
				(0, 0)		-- (4, 1)
				(0, 0)		-- (4, 0.5)
				(2, 0)		-- (2.5, -0.5);
\fill[black]	(0, 0)	circle	(1.5pt)
				(4, 0)	circle	(1.5pt)
				(4, 0.5) circle	(1.5pt)
				(4, 1) circle	(1.5pt)
				(1, 0) circle	(1.5pt)
				(2, 0)	circle	(1.5pt)
				(2.5, -0.5)	circle	(1.5pt);
\node	(hs)		at	(0, -0.3)	{$h^\star$};
\node	(h1)		at	(1, -0.3)	{$h_1$};
\node	(h2)		at	(2, -0.3)	{$h_2$};
\node	(J)		at	(2.8, -0.55)	{$J$};
\node	(e)		at	(4.3, 0)	{$\lambda$};
\node	(f)		at	(4.3, 1)	{$\eta$};	
\node	(l)		at	(4.3, 0.5)	{$\eta$};
\node	(t)		at	(-1, -0.05)	{$\tau_3 = $};
\end{tikzpicture}
\end{center}
for the full vertex function. The corresponding Feynman diagrams are shown in Figure~\ref{fig:vertex_diagrams}. It is immediate to see that $\omega_1, \omega_2, \phi_1$ respectively come from $\xi_1, \xi_2, \tau_1$. If $h_1 = h_2 = h^\star$, $\tau_2$ and $\tau_3$ degenerate into the same tree which gives rise to $\phi_2 \vert_{h_1 = h^\star}$, $\phi_3 \vert_{h_1 = h^\star}$, $\phi_4^1 \vert_{h_1 = h_2 = h^\star}$, $\phi_5^1 \vert_{h_1 = h_2 = h^\star}$; in any other case, $\tau_2$ produces $\phi_2, \phi_3$ and $\tau_3$ produces $\phi_4^{1, 2}, \phi_5^{1, 2}$ or $\phi_4^0, \phi_5^0$ depending on whether $h_1 \ne h_2$ or $h_1 = h_2$.

In terms of the above Feynman diagrams, the lowest orders of the tree expansions for $\hat{S}(k)$ and $\hat{\mathfrak{I}}^\mu(p', p)$ are
\begin{gather*}
\begin{split}
\hat{S}_0(k; \lambda) = \hat{S}(k; \omega_1, \lambda), \qquad \lambda^2\hat{S}_1(k; \lambda) = \hat{S}(k; \omega_2, \lambda), \qquad \hat{\mathfrak{I}}^\mu_0(p', p; \lambda) = \hat{\mathfrak{I}}^\mu(p', p; \phi_1, \lambda),
\end{split} \\
\begin{split}
\lambda^2 \hat{\mathfrak{I}}^\mu_1(p', p; \lambda) = \sum_{j=2}^3 \hat{\mathfrak{I}}^\mu(p', p; \phi_j, \lambda) + \sum_{a = 0}^2 \sum_{j=4}^5 \hat{\mathfrak{I}}^\mu(p', p; \phi_j^a, \lambda),
\end{split}
\end{gather*}
where, given any diagram $\phi$, the notations $\hat{S}(k; \phi, \lambda)$, $\hat{\mathfrak{I}}^\mu(p', p; \phi, \lambda)$ stand for the contributions to $\hat{S}(k), \hat{\mathfrak{I}}^\mu(p', p)$ coming from $\phi$. By inverting the series expansion for $\hat{S}(k)$, it is easy to see that
\begin{equation}
\label{eq:S_inverse_coefficients}
(\hat{S}^{-1})_0(k; \lambda) = [\hat{S}_0(k; \lambda)]^{-1}, \qquad (\hat{S}^{-1})_1(k; \lambda) = - [\hat{S}_0(k; \lambda)]^{-1} \, \hat{S}_1(k; \lambda) \, [\hat{S}_0(k; \lambda)]^{-1};
\end{equation}
after plugging the explicit expressions of the Feynman diagrams shown in Figure~\ref{fig:vertex_diagrams} into~\eqref{eq:S_inverse_coefficients} and~\eqref{eq:amp_vertex_cauchy product}, one finds
\begin{align}
\label{eq:amp_vertex_first_order}
\begin{split}
\hat{\Gamma}^\mu_0(p', p; \lambda) 
& = \gamma^\mu, \\
\lambda^2 \hat{\Gamma}^\mu_1(p', p; \lambda) 
& = \sum_{a = 0}^2 \sum_{j = 4}^5 \, [\sing{g}{h^\star}(p')]^{-1} \, \hat{\mathfrak{I}}^\mu(p', p; \phi_j^a, \lambda) \, [\sing{g}{h^\star}(p)]^{-1} \\
& \equiv \sum_{a = 0}^2 \sum_{j = 4}^5 \hat{\Gamma}^\mu(p', p; \phi_j^a, \lambda).
\end{split}
\end{align}
As it happens in ordinary perturbation theory, diagrams $\phi_2, \phi_3$ have no influence on $\hat{\Gamma}^\mu_1(p', p; \lambda)$, because
\begin{equation}
\label{eq:self_energy_insertion}
\sum_{j=2}^3 \hat{\mathfrak{I}}^\mu(p', p; \phi_j, \lambda) = \hat{S}_1(p'; \lambda) \, \hat{\mathfrak{I}}^\mu_0(p', p; \lambda) \, \hat{S}_0(p; \lambda) + \hat{S}_0(p'; \lambda) \, \hat{\mathfrak{I}}^\mu_0(p', p; \lambda) \, \hat{S}_1(p; \lambda)
\end{equation}
and this sum is exactly cancelled by the terms with $(n_1, n_2, n_3) = (1, 0, 0), (0, 0, 1)$ occurring inside~\eqref{eq:amp_vertex_cauchy product}. The particular structure displayed in~\eqref{eq:self_energy_insertion} arises because $\ren$ acts identically on every cluster that can be disconnected by cutting a single fermionic line.

We are now left to explicitly compute the function $\mathcal{A}(z) \equiv [\mathcal{F}(z) - \mathcal{Q}(z)]/\mathcal{Q}(z)$, from which the anomalous gyromagnetic factor will be recovered. Since $\hat{\Gamma}^\mu(p', p)$ satisfies~\eqref{eq:convergent_expansion_vertex}, the difference $\abs*{\mathcal{Q}(z) - 24}$ is controlled by the right hand side of~\eqref{eq:bounds_on_S_and_Gamma} with $n = 1$. Therefore, if $\lambda^2 \cutoff^2/M^2$ is sufficiently small, the function $\mathcal{Q}(z)$ is never vanishing and $\mathcal{A}(z)$ is well-defined. Relying on~\eqref{eq:noncovariant_gfunction}, we can write
\begin{equation}
\mathcal{A}(z) = \mathcal{A}_0(z; \lambda) + \mathcal{A}_2(z; \lambda) + \mathcal{A}_{> 2}(z; \lambda),
\end{equation}
where, according to Theorem~\ref{th:convergent_expansions}, $\abs*{\mathcal{A}_{> 2}(z; \lambda)}$ is controlled by the left hand side of~\eqref{eq:bounds_on_S_and_Gamma} with $n = 2$. 

\subsection{Vanishing of \texorpdfstring{$\mathcal{A}_0(z; \lambda)$}{A0} and of the non-triangular contributions to \texorpdfstring{$\mathcal{A}_2(z; \lambda)$}{A2}}
\label{subsec:non_triangle_vanishing} 

Before getting into the actual calculation of $\rgyr$, we prove that $\mathcal{A}_0(z; \lambda) = 0$, so $\rgyr$ does not receive any contribution from the dominant part of $\hat{\Gamma}^\mu(p', p)$. We also show that $\mathcal{A}_2(z; \lambda)$ only depends on the triangular diagrams $\phi^0_4, \phi^1_4, \phi^2_4$.

Due to the anticommutation properties of the $\gamma^\mu$ matrices, any term proportional to $\gamma^\mu$ that appears within the expansion of $\hat{\Gamma}^\mu(p', p)$ annihilates the difference $\mathcal{F}(z) - \mathcal{Q}(z)$; moreover, every term proportional to $\gamma^5 \gamma^\mu$ annihilates both $\mathcal{F}(z)$ and $\mathcal{Q}(z)$.

It is easy to see that $\hat{\Gamma}^\mu_0(p', p; \lambda)$ and the local parts of $\hat{\Gamma}^\mu(p', p; \phi_5^{0, 1, 2}, \lambda)$ are linear combinations of $\gamma^\mu$ and $\gamma^5 \gamma^\mu$, so they do not contribute to $\mathcal{A}(z)$, nor to $\rgyr$. On the other hand, the \emph{non local} parts of $\hat{\Gamma}^\mu(p', p; \phi_5^{0, 1, 2}, \lambda)$ are linear combinations of integrals of the form
\begin{equation}
\label{eq:loop_integral}
H^\mu(p', p) = \hat{v}(k) \Upsilon_\nu \int \frac{\dd[4]{q}}{(2\pi)^4} \tr[\Upsilon^\nu \sing{g}{h_1}(k + q) \, (\gamma^\mu_J)_{\max(h_1, h_2)} \, \sing{g}{h_2}(q)] \equiv \hat{v}(k) \Upsilon_\nu D^{\nu \mu}(k)
\end{equation}
with $k = p' - p$ and $h^\star \le h_1, h_2 \le N$. The $\mathrm{SO}(4)$ covariance of the theory implies that $D^{\nu \mu}(0)$ is proportional to $\delta^{\nu \mu}$, so  $H^\mu(p_z, p_z) = \hat{v}(k) \Upsilon_\nu D^{\nu \mu}(0)$ is a linear combination of $\gamma^\mu$ and $\gamma^5 \gamma^\mu$. Relying on the parity cancellation $\partial^\alpha D^{\nu \mu}(0) = 0$, it can be seen that $(\partial_{p'} - \partial_p)^\alpha H^\mu(p_z, p_z) = \partial^\alpha \hat{v}(0) \Upsilon_\nu D^{\nu \mu}(0)$ is a linear combination of $\gamma^\mu$ and $\gamma^5 \gamma^\mu$ as well. We conclude that \emph{any} integral of the form~\eqref{eq:loop_integral} annihilates the function $\mathcal{F}(z) - \mathcal{Q}(z)$, so the non local parts of diagrams $\phi_5^{0, 1, 2}$ have no influence on $\rgyr$.

The combined values of $\phi_4^{0, 1, 2}$ will be conveniently represented by the function
\begin{equation}
\sum_{j=0}^2 \hat{\Gamma}^\mu(p', p; \phi_4^j, \lambda) \equiv \hat{\Gamma}^\mu_\triangle(p', p; \lambda),
\end{equation}
called \emph{triangle integral}. Explicitly,
\begin{multline}
\label{eq:triangle_multiscale}
\hat{\Gamma}^\mu_\triangle(p', p; \lambda) = - \sum_{h_1, h_2}^{\star h^\star} \int \frac{\dd[4]{q}}{(2\pi)^4} \, \Upsilon_\nu \, \sing{\hat{g}}{h_1}(p' - q) \, (\gamma^\mu_J)_{\max(h_1, h_2)} \, \sing{\hat{g}}{h_2}(p - q) \, \Upsilon^\nu \, + \\
- \sum_{h_1, h_2 > h^\star} \int \frac{\dd[4]{q}}{(2\pi)^4} \, \ren[ \Upsilon_\nu \, \sing{\hat{g}}{h_1}(p' - q) \, (\gamma^\mu_J)_{\max(h_1, h_2)} \, \sing{\hat{g}}{h_2}(p - q) \, \Upsilon^\nu ],
\end{multline}
where $\sum_{h_1, h_2}^{\star h^\star}$ denotes a sum constrained by the requirement that at least one among $h_1, h_2$ must be equal to $h^\star$. According to the above argument, the function $\mathcal{A}_2(z; \lambda)$ is entirely determined by $\hat{\Gamma}^\mu_\triangle(p', p; \lambda)$.

\subsection{Extraction of the \texorpdfstring{$\lambda$}{lambda}-independent part of \texorpdfstring{$\mathcal{A}_2(z; \lambda)$}{a}}
\label{subsec:lambda_independent_triangle} 

Since the running coupling constants $Z^{J, s}_h, Z^s_h$ depend on $\lambda$ and $h$, it is not immediate to explicitly perform the sum over $h_1, h_2$ inside the triangle integral~\eqref{eq:triangle_multiscale}. For this reason, we split the function $\hat{\Gamma}^\mu_\triangle(p', p; \lambda)$ into the sum of a $\lambda$-independent part (for which the sum over $h_1, h_2$ could be trivially performed) and a small remainder. 

\begin{lm}
\label{lm:sum_over_scales}
Let $\sing{\hat{g}_0}{h}(p) \equiv f_h(p)(i \slashed{p} + m)^{-1}$ and suppose that $\lambda^2 \cutoff^2/M^2 < 1$. For every $\ell \in \N_0$, there exists a positive constant $C_\ell$ such that
\begin{equation}
\label{eq:sum_over_scales_estimate}
\sup_{\abs*{p'}, \abs{p} \le m/2} \norm*{m^\ell \partial^\ell \hat{\Gamma}^\mu_\triangle(p', p; \lambda) - m^\ell \partial^\ell \hat{\Gamma}^\mu_\triangle(p', p)} \le C_\ell \frac{m^2}{M^2} \cdot \frac{\lambda^2 \cutoff^2}{M^2} \log \left( \frac{M}{m} \right) \frac{M}{\cutoff},
\end{equation}
where $\hat{\Gamma}^\mu_\triangle(p', p)$ is the $\lambda$-independent function obtained by replacing $\sing{\hat{g}}{h}(p), (\gamma^\mu_J)_h$ with $\sing{\hat{g}_0}{h}(p), \gamma^\mu$ inside~\eqref{eq:triangle_multiscale}.
\end{lm}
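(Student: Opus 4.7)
The plan is to interpolate between $\hat{\Gamma}^\mu_\triangle(p',p;\lambda)$ and $\hat{\Gamma}^\mu_\triangle(p',p)$ by successively replacing the running propagators $\sing{\hat{g}}{h}$ with $\sing{\hat{g}_0}{h}$ and the running vertex matrices $(\gamma^\mu_J)_h$ with $\gamma^\mu$. Decomposing $\sing{\hat{g}}{h}=\sing{\hat{g}_0}{h}+\Delta^{\hat g}_h$ and $(\gamma^\mu_J)_h=\gamma^\mu+\Delta^{\gamma_J}_h$, the integrand of~\eqref{eq:triangle_multiscale} splits into the target $\hat{\Gamma}^\mu_\triangle(p',p)$ plus a finite sum of ``perturbed triangles'' containing at least one factor $\Delta^{(\cdot)}_h$. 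Theorem~\ref{th:bare_constants} together with~\eqref{eq:rcc_bounds} supplies the uniform pointwise bound $|\Delta^{(\cdot)}_h(q)|\le c(\lambda^2\cutoff^2/M^2)\,2^{h-N}\cdot|\sing{\hat{g}_0}{h}(q)|$ (resp.\ $\le c(\lambda^2\cutoff^2/M^2)\,2^{h-N}|\gamma^\mu|$), together with the analogous bound for the mass perturbation $m^s_h-m$ (which carries an additional factor $m$); this is the elementary small factor that will propagate through every estimate.

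Each perturbed triangle is then bounded scale by scale. The supports of $f_{h_1}(p'-q)$ and $f_{h_2}(p-q)$, together with $|p|,|p'|\le m/2$, force $|h_1-h_2|=O(1)$, so the double scale sum collapses to a single one indexed by $h$. The crucial ingredient is the behaviour of the boson propagator $\hat v(q)$ carried implicitly by the $\Upsilon_\nu\,\cdots\,\Upsilon^\nu$ pair: for $|q|\sim 2^h\lesssim M$ it is $\sim 1/M^2$, while for $|q|\sim 2^h\gtrsim M$ it decays as $1/2^{2h}$. Combined with the gain $(m/2^h)^2$ supplied by $\ren$ (see Lemma~\ref{lm:taylor_renormalization_gain} with $R=2$), this gives a renormalized per-scale contribution of order $m^2/M^2$ when $2^h\le M$ and $m^2/2^{2h}$ when $2^h>M$. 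Multiplying by the perturbation factor $(\lambda^2\cutoff^2/M^2)\,2^{h-N}$ and summing, one obtains
\[
\sum_{h=h^\star}^{\log_2 M}\frac{m^2}{M^2}\,\frac{c\lambda^2\cutoff^2}{M^2}\,2^{h-N}+\sum_{h>\log_2 M}\frac{m^2}{2^{2h}}\,\frac{c\lambda^2\cutoff^2}{M^2}\,2^{h-N}\;\lesssim\;\frac{m^2}{M^2}\cdot\frac{\lambda^2\cutoff^2}{M^2}\cdot\frac{M}{\cutoff}\cdot\log\!\left(\frac{M}{m}\right),
\]
the first, low-energy sum being dominated by $h\simeq\log_2 M$ where $2^{h-N}=M/\cutoff$, and picking up the logarithmic factor from the $O(\log(M/m))$ scales in its range. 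The first, non-renormalized sum in~\eqref{eq:triangle_multiscale} is smaller still, because the constraint that one of $h_1,h_2$ equals $h^\star$ localizes it to a bounded neighbourhood of $h^\star$.

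The extension to $m^\ell\partial^\ell$ for $\ell\ge1$ is routine: each external derivative in $p$ or $p'$ acts only on the propagators $\sing{\hat{g}}{h_i}(p_i-q)$ and costs at most $2^{-h_i}\le m^{-1}$ thanks to the support of $f_{h_i}$, and this loss is compensated by the $m^\ell$ prefactor. The main technical obstacle is the simultaneous control of the three mechanisms involved --- the $2^{h-N}$ smallness of the running couplings from Theorem~\ref{th:bare_constants}, the $(m/2^h)^2$ renormalization gain, and the transition of $\hat v(q)$ at $2^h\sim M$ --- since it is precisely their interplay that yields the precise combination $(m^2/M^2)(\lambda^2\cutoff^2/M^2)\log(M/m)\cdot M/\cutoff$ in~\eqref{eq:sum_over_scales_estimate}, and each step of the bookkeeping must be executed without inadvertently losing a factor $\cutoff/M$.
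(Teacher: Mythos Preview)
Your proposal is essentially the same argument as the paper's: decompose $\sing{\hat{g}}{h}=\sing{\hat{g}_0}{h}+\Delta_h$ and $(\gamma^\mu_J)_h=\gamma^\mu+\Delta_h'$, bound the corrections via Theorem~\ref{th:bare_constants}, collapse the double scale sum using the support constraint $|h_1-h_2|\le 2$, and split the remaining sum at $h_M=\lfloor\log_2 M\rfloor$ using $\hat v(q)\le 1/M^2$ below and $\hat v(q)\le 1/q^2$ above. One small inaccuracy: your low-energy sum $\sum_{h=h^\star}^{h_M}(m^2/M^2)(\lambda^2\cutoff^2/M^2)2^{h-N}$ is geometric and dominated by its top term, so it does \emph{not} itself produce a logarithm; the $\log(M/m)$ in the statement is recovered in the paper by the cruder step of bounding $2^{h-N}\le M/\cutoff$ uniformly for $h\le h_M$ and then counting the scales, which is wasteful but harmless since the target bound allows the log.
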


\begin{proof}
For sake of simplicity, we take $\ell = 0$ (if $\ell > 0$, the proof is essentially the same). If $\sing{\hat{g}}{h}(k) \equiv \sing{\hat{g}_0}{h}(k) + \sing{\delta \hat{g}}{h}(k)$ and $(\gamma^\mu_J)_h \equiv \gamma^\mu + (\delta \gamma^\mu_J)_h$, Theorem~\ref{th:bare_constants} implies that
\begin{equation}
\label{eq:estiamtes_remainders}
\norm*{(\delta \gamma^\mu_J)_h} \le (\mathrm{const}) \cdot \frac{\lambda^2 \cutoff^2}{M^2} 2^{h - N}, \qquad
\norm*{\sing{\delta g}{h}}_w \le (\mathrm{const}) \cdot \frac{\lambda^2 \cutoff^2}{M^2} 2^{h - N} \cdot
\begin{cases}
2^{3h}	&	w = \infty \\
2^{-h}	&	w = h
\end{cases}
\end{equation}
Every $\sing{\hat{g}}{h}, (\gamma^\mu_J)_h$ factor appearing in~\eqref{eq:triangle_multiscale} can be decomposed as shown above, so
\begin{equation}
\hat{\Gamma}^\mu_\triangle(p', p; \lambda) - \hat{\Gamma}^\mu_\triangle(p', p)  = \mathsf{R}^\mu(p', p; \lambda),
\end{equation}
where $\mathsf{R}^\mu(p', p; \lambda)$ is given by a sum of terms having the same structure as $\hat{\Gamma}^\mu(p', p; \phi_4^{0, 1, 2}, \lambda)$, except that they contain at least one $\sing{\delta g}{h}/(\delta \gamma^\mu_J)_h$ factor. Now consider one of such terms, say for instance
\begin{align*}
I^\mu(p', p; \lambda) 
& = - \sum_{h^\star < h_1 \le h_2} \int \frac{\dd[4]{q}}{(2\pi)^4} \, \ren[\Upsilon_\nu \, \sing{\delta \hat{g}}{h_1}(p' - q) \, (\gamma^\mu_J)_{h_2} \, \sing{\hat{g}}{h_2}(p - q) \, \Upsilon^\nu \, \hat{v}(q)] \\
& \equiv \sum_{h^\star < h_1 \le h_2} I_{h_1 h_2}(p', p; \lambda).
\end{align*}
Since $\abs*{p'}, \abs{p} \le m/2 \Rightarrow \abs*{p' - q - (p - q)} \le m$ and the single scale propagators are compactly-supported, $h_2 \in \lbrace h_1, h_1 + 1, h_1 + 2 \rbrace \cap [h^\star+1, N]$. This means that, apart from an inessential overall constant, we can perform our estimates by letting $h_2 = h_1$. If $h_1 < h_M \equiv \floor{\log_2 M}$, it is convenient to estimate the boson propagator as $\hat{v}(q) \le 1/M^2$, thus getting
\begin{align}
\label{eq:delta_low_regime}
\begin{split}
\sum_{h^\star < h_1 < h_M} \norm{I_{h_1 h_1}(p', p; \lambda)}
& \le \frac{C}{M^2} \frac{\lambda^2 \cutoff^2}{M^2} \sum_{h^\star < h_1 < h_M} 2^{3h_1 - h_1 + 2(h^\star - h_1)} 2^{h_1 - N} \\
& \le C \frac{m^2}{M^2} \frac{\lambda^2 \cutoff^2}{M^2} \frac{M}{\cutoff} \sum_{h^\star < h_1 < h_M} 2^{h_1 - h_1},
\end{split}
\end{align}
where in the second line we exploited the fact that $2^{h_1 - N} < M/\cutoff$. If $h_1 \ge h_M$, we have instead
\begin{align}
\label{eq:delta_high_regime}
\begin{split}
\sum_{h_M \le h_1 \le N} \norm{I_{h_1 h_1}(p', p; \lambda)}
& \le C \frac{\lambda^2 \cutoff^2}{M^2} \sum_{h_M \le h_1 \le N} 2^{3h_1 - h_1 + 2(h^\star - h_1) - 2h_1} 2^{h_1 - N} \\
& \le C m^2 \frac{\lambda^2 \cutoff^2}{M^2} \sum_{h_M \le h_1 \le N} 2^{-h_1-N},
\end{split}
\end{align}
where we used the fact that $\mathrm{supp} [\sing{\delta \hat{g}}{h_1}(p' - \cdot)] \subseteq \lbrace q \in \R^4 \colon \abs{q} \ge 2^{h_1-1} - m/2 \rbrace$ in order to write $\hat{v}(q) \le 1/q^2 \le (\mathrm{const}) \cdot 2^{-2h_1}$. By combining~\eqref{eq:delta_low_regime} and~\eqref{eq:delta_high_regime}, we obtain
\begin{equation}
\label{eq:regimes_combined}
\norm*{I^\mu(p', p; \lambda)} \le C \frac{m^2}{M^2} \frac{\lambda^2 \cutoff^2}{M^2}\frac{M}{\cutoff} \sum_{h^\star < h_1 < h_M} 1 + C m^2 \frac{\lambda^2 \cutoff^2}{M^2} \sum_{h_M \le h_1 \le N} 2^{-h_1-N}.
\end{equation}
After performing the sum over $h_1$ (and possibly redefining the constant $C$), this becomes
\begin{equation}
\norm*{I^\mu(p', p; \lambda)} \le C \frac{m^2}{M^2} \frac{\lambda^2 \cutoff^2}{M^2} \log \left( \frac{M}{m} \right) \frac{M}{\cutoff},
\end{equation}
so $I^\mu(p', p; \lambda)$ satisfies the bound~\eqref{eq:sum_over_scales_estimate}. The same conclusion holds for any other term contributing to $\mathsf{R}^\mu(p', p; \lambda)$ that contains \emph{one} $\sing{\delta g}{h}/(\delta \gamma^\mu_J)_h$ factor. If $\lambda^2 \cutoff^2/M^2 < 1$, terms containing more than one $\sing{\delta g}{h}/(\delta \gamma^\mu_J)_h$ factor are subdominant with respect to $I^\mu(p', p; \lambda)$, so the whole function $\mathsf{R}^\mu(p', p; \lambda)$ satisfies the bound~\eqref{eq:sum_over_scales_estimate}.
\end{proof}
\begin{rmk}
\label{rmk:optimal_bound}
Within the above proof, the boson propagator has been bounded with $1/M^2$ or $1/q^2$ depending on whether $h_1 < h_M$ or $h_1 \ge h_M$. This estimate is \emph{more refined} than the one presented in Section~\ref{subsec:bounds_on_propagator_and_vertex}, which is only based on the inequality $\hat{v}(q) \le 1/M^2$. This refinement is \emph{fundamental} in order to extract the $M/\cutoff$ factor displayed in~\eqref{eq:sum_over_scales_estimate}. Indeed, the short memory factor $2^{h_1 - N}$ coming from $\sing{\delta g}{h_1}$ produces a $M/\cutoff$ gain only provided that $h_1$ is less than $h_M$; in the opposite regime, the same gain is obtained by exploiting the damping produced by $\hat{v}(q)$ for large values of $\abs{q}$.
\end{rmk}

\subsection{Existence of the \texorpdfstring{$\cutoff \to +\infty$}{ctinf} limit of \texorpdfstring{$\hat{\Gamma}^\mu_\triangle(p', p)$}{g}}
\label{subsec:cutoff_removal_triangle}

Thanks to Lemma~\ref{lm:sum_over_scales}, from now on we can turn our attention on the $\lambda$-independent triangle integral $\hat{\Gamma}^\mu_\triangle(p', p)$. Since the $\sing{g}{h}_0$ propagator scales in the same way as $\sing{g}{h}$, the bound~\eqref{eq:bounds_on_S_and_Gamma} implies that
\begin{equation}
\label{eq:non_optimal_estimate}
\sup_{\abs*{p'}, \abs{p} \le m/2} \norm*{\hat{\Gamma}^\mu_\triangle(p', p)} \le C \cdot \frac{m^2}{M^2} \cdot \log^2 \left( \frac{\cutoff}{m} \right).
\end{equation}
so $\hat{\Gamma}^\mu_\triangle(p', p)$ could apparently \emph{diverge} as $\cutoff$ goes to $+\infty$. However, this is not the case, because the estimate~\eqref{eq:non_optimal_estimate} is not optimal. To improve it, it is necessary to exploit the decay of $\hat{v}(q)$ as $\abs{q} \to +\infty$, as we did in the proof of Lemma~\ref{lm:sum_over_scales}.
\begin{lm}
\label{lm:optimal_estimate}
The function $\hat{\Gamma}^\mu_{\triangle, \infty}(p', p) \equiv \lim_{\cutoff \to +\infty} \hat{\Gamma}^\mu_\triangle(p', p)$ is well-defined on $\mathcal{U} = \lbrace (p', p) \in \R^4 \times \R^4 \colon \abs*{p'}, \abs{p} \le m/2 \rbrace$ and it is smooth in the interior of $\mathcal{U}$. Moreover, for any $\ell \in \N_0$ there exists a positive constant $C_\ell$ such that
\begin{equation}
\label{eq:lm:optimal_estimate_convergence}
\norm*{m^\ell \partial^\ell \hat{\Gamma}^\mu_\triangle - m^\ell \partial^\ell \hat{\Gamma}^\mu_{\triangle, \infty}}_\mathcal{U} \le C_\ell \frac{m^2}{M^2} \frac{M^2}{\cutoff^2},
\end{equation}
where $\norm*{\, \cdot \,}_\mathcal{U}$ is the uniform norm on $\mathcal{U}$.
\end{lm}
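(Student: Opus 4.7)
My plan is to decompose $\hat\Gamma^\mu_\triangle(p',p)$ into contributions at fixed pairs of fermion scales $(h_1,h_2)$ and to prove an improved scale-by-scale bound that, when summed over the tail $\max(h_1,h_2)>N$, yields a remainder of the size claimed in~\eqref{eq:lm:optimal_estimate_convergence}. Write
\begin{equation*}
\hat\Gamma^\mu_\triangle(p',p) \;=\; \sum_{h^\star \le h_1, h_2 \le N} T^\mu_{h_1,h_2}(p',p),
\end{equation*}
where $T^\mu_{h_1,h_2}$ is the renormalized contribution at scales $(h_1,h_2)$ arising from~\eqref{eq:triangle_multiscale}, with the $\lambda$-independent ingredients of Lemma~\ref{lm:sum_over_scales} inserted. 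Define $\hat\Gamma^\mu_{\triangle,\infty}$ as the analogous sum extended to all $h_1,h_2\ge h^\star$ with the cutoff $\chi_N$ inside $\hat v$ replaced by $1$; the task is then to show that this series converges absolutely and uniformly on $\mathcal U$ together with its derivatives, and to bound the truncation error.

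The crux of the proof is the scale-by-scale estimate, valid for every $\ell\in\N$ and uniformly in $\abs{p'},\abs{p}\le m/2$:
\begin{equation*}
\norm*{m^\ell\partial^\ell T^\mu_{h_1,h_2}(p',p)} \;\le\; C_\ell\,\begin{cases} m^2/M^2, & h_{\max} < h_M, \\ m^2/2^{2h_{\max}}, & h_{\max} \ge h_M,\end{cases}
\end{equation*}
with $h_{\max}\equiv\max(h_1,h_2)$ and $h_M\equiv\floor{\log_2 M}$. Two ingredients combine to give the improved, $h_{\max}$-decaying bound in the high-scale regime: first, the $\ren$ operator supplies $R\ge 2$ subtractions in the external momentum derivatives, producing via the interpolation formula~\eqref{eq:ren_taylor_kernel_general} a gain $(m/2^{h_{\max}})^2=2^{2(h^\star-h_{\max})}$ (this reflects the bound $D_v-R_v\le -2$ of Remark~\ref{rmk:gain}); second, the boson propagator is controlled by $\hat v(q)\lesssim 1/q^2\sim 2^{-2h_{\max}}$ rather than by the trivial $1/M^2$, because the single-scale supports of $\sing{\hat g_0}{h_1}(p'-q)$ and $\sing{\hat g_0}{h_2}(p-q)$ force $\abs q\sim 2^{h_{\max}}$ (this is exactly the mechanism highlighted in Remark~\ref{rmk:optimal_bound}). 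Combined with the loop measure $2^{4h_{\max}}$ and the two fermion factors $2^{-2h_{\max}}$, this completes the dimensional count. The $m^\ell\partial^\ell$ derivatives only multiply the bound by harmless $\ell$-dependent constants, since $\abs{p'},\abs{p}\le m/2\simeq 2^{h^\star}\le 2^{h_{\max}}$ and the $\ren$ Taylor remainder depends smoothly on the external points.

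Granting the scale bound, the low-scale block contributes at most $C_\ell(m^2/M^2)\log^2(M/m)$, while the high-scale block is geometrically summable, $\sum_{h_{\max}\ge h_M} 2^{-2h_{\max}}\le C/M^2$. Hence $\hat\Gamma^\mu_{\triangle,\infty}$ is well-defined and, by dominated convergence applied derivative-by-derivative, smooth in the interior of $\mathcal U$. The truncation error coincides with the sum over $h_{\max}>N$, plus the contribution from the region $\abs q\gtrsim\cutoff$ where the cutoff $\chi_N$ inside $\hat v$ differs from $1$; both are controlled by the high-scale bound, giving
\begin{equation*}
\norm*{m^\ell\partial^\ell \hat\Gamma^\mu_\triangle - m^\ell\partial^\ell \hat\Gamma^\mu_{\triangle,\infty}}_{\mathcal U} \;\le\; C_\ell\sum_{h_{\max}>N}\frac{m^2}{2^{2h_{\max}}}\;\le\; C'_\ell\,\frac{m^2}{\cutoff^2},
\end{equation*}
which is~\eqref{eq:lm:optimal_estimate_convergence} since $m^2/\cutoff^2=(m^2/M^2)(M^2/\cutoff^2)$. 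The main obstacle is making the high-scale improvement rigorous uniformly in $\ell$ and at the transition region $h_{\max}\approx N$, where the supports of the fermion propagators and the cutoff $\chi_N$ overlap non-trivially: one must show that the decay $1/q^2$ really kicks in after the $\ren$ subtractions without losing the gain, and one must handle separately the thin annulus $\cutoff\le\abs q\le 2\cutoff$ in which $\chi_N$ is neither $0$ nor $1$. This is carried out by a careful interpolation on the Taylor remainder of the $\ren$ operator, analogous to the split into regimes performed in the proof of Lemma~\ref{lm:sum_over_scales}.
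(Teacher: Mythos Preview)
Your proposal is correct and follows essentially the same strategy as the paper: decompose the triangle integral into single-scale pieces, use the improved bound $\hat v(q)\lesssim 1/q^2$ in the regime $h\ge h_M$ (this is exactly the mechanism of Remark~\ref{rmk:optimal_bound}), combine it with the $\ren$ gain $2^{2(h^\star-h)}$, and sum the geometric tail beyond scale $N$ to obtain $m^2/\cutoff^2$.

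Two minor differences of presentation are worth flagging. First, the paper exploits from the outset that the compact supports of $\sing{\hat g_0}{h_1}(p'-q)$ and $\sing{\hat g_0}{h_2}(p-q)$, together with $\abs{p'-p}\le m$, force $\abs{h_1-h_2}\le 2$; this collapses the double sum to a single sum over $h_1$ and makes the tail estimate immediate. You implicitly use this when asserting $\abs q\sim 2^{h_{\max}}$, but never state it, which is why your low-scale count comes out as $\log^2(M/m)$ rather than $\log(M/m)$ and why your high-scale sum looks like a double sum. Second, the paper avoids your separate treatment of the $\chi_N$ inside $\hat v$ by arguing via the Cauchy property of the sequence $\lbrace\hat\Gamma^\mu_{\triangle,N}\rbrace_N$: for $h_1\le N-2$ the support constraint already forces $\chi_N(q)=1$, so the only contributions to $\hat\Gamma^\mu_{\triangle,N'}-\hat\Gamma^\mu_{\triangle,N}$ live at scales $h_1\ge N-1$, and the thin-annulus issue you identify as ``the main obstacle'' simply does not arise. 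Your explicit-limit approach works too, but the Cauchy argument is cleaner here.
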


\begin{proof}
Let $\hat{\Gamma}^\mu_{\triangle, N}(p', p)$ denote the triangle integral in presence of a finite cutoff $\cutoff = 2^N$. To prove that $\hat{\Gamma}^\mu_{\triangle, \infty}(p', p)$ exists, we show that the sequence of smooth functions $\lbrace \hat{\Gamma}^\mu_{\triangle, N} \rbrace_{N \in \N}$ has the Cauchy property with respect to $\norm*{\, \cdot \,}_\mathcal{U}$. 

For sake of simplicity, let us consider the contribution to $\hat{\Gamma}^\mu_{\triangle, N}(p', p)$ coming from the $\lambda$-independent version of diagram $\phi_4^1$, namely
\begin{align*}
I_N^\mu(p', p) 
& = - \sum_{h_1 = h^\star + 1}^N \sum_{h_2 = h_1}^N \int \frac{\dd[4]{q}}{(2\pi)^4} \, \ren[\Upsilon_\nu \, \sing{\hat{g}_0}{h_1}(p' - q) \, \gamma^\mu \, \sing{\hat{g}_0}{h_2}(p - q) \, \Upsilon^\nu \, \hat{v}(q)] \\
& = \sum_{h_1 = h^\star + 1}^N \sum_{h_2 = h_1}^N I^\mu_{h_1 h_2}(p', p).
\end{align*}
As we already noticed in the proof of Lemma~\ref{lm:sum_over_scales}, $h_2$ cannot exceed $h_1 + 2$, so we can let $h_1 = h_2$ at the price of introducing an inessential overall constant $c$. Given any $N' > N$, it is easy to see that
\begin{equation}
\norm*{I_N^\mu - I^\mu_{N'}}_\mathcal{U} \le c \sum_{h_1 = N-1}^{N'} \norm*{I^\mu_{h_1 h_1}}_\mathcal{U}.
\end{equation}
If $N \ge \floor{\log_2 M} \equiv h_M$, $h_1$ is greater than $h_M-1$, so it is convenient to estimate the boson propagator as $\hat{v}(q) \le 1/q^2$. In absence of $\sing{\delta g}{h_1}$ factors, the same procedure shown in~\eqref{eq:delta_high_regime} yields $\norm*{I^\mu_{h_1 h_1}(p', p)} \le m^2 \cdot 2^{-2h_1}$, so
\begin{equation}
\label{eq:cauchy_C0}
\norm*{I_N^\mu - I^\mu_{N'}}_\mathcal{U} \le C_1 m^2 \sum_{h_1 = N-1}^{N'} 2^{-2h_1} \le \frac{C_2 m^2}{2^{2N}}.
\end{equation}
The right hand side of this inequality becomes arbitrarily small if $N$ is large enough. Consequently, $\lbrace I^\mu_N \rbrace_{N \in \N}$ is a Cauchy sequence that converges uniformly to a continuous function $I^\mu_{\infty}(p', p)$ defined on $\mathcal{U}$. The same argument can be applied to the contributions coming from $\phi_4^0$ and $\phi_4^2$; in particular, $\hat{\Gamma}^\mu_{\triangle, \infty}(p', p)$ exists and it is continuous. Since the estimate~\eqref{eq:cauchy_C0} holds for the $\ell$-th derivatives of $m^\ell \hat{I}^\mu_N(p', p)$ as well (the $m^\ell$ factor balances the $2^{-\ell h^\star}$ factor produced by a $\ell$-th derivative), the sequence $\lbrace \hat{\Gamma}^\mu_{\triangle, N} \rbrace_{N \in \N}$ converges with respect to \emph{any} $C^\ell(\mathcal{U})$ norm for every $\ell \ge 0$. This implies that $\hat{\Gamma}^\mu_{\triangle, \infty}(p', p)$ is smooth and the $\ell$-th derivatives of $\hat{\Gamma}^\mu_{\triangle, N}(p', p)$ converge uniformly to the $\ell$-th derivatives of $\hat{\Gamma}^\mu_{\triangle, \infty}(p', p)$.

The difference between $I^\mu_N(p', p)$ and $I^\mu_\infty(p', p)$ can be estimated as
\begin{equation}
\norm*{I^\mu_N- I^\mu_\infty}_\mathcal{U} \le C \sum_{h_1 = N}^{+\infty} \norm*{I^\mu_{h_1 h_1}}_\mathcal{U}.
\end{equation}
Here $h_1$ is much larger than $h_M$, so we use the inequality $\hat{v}(q) \le 1/q^2$ to write $\norm*{I^\mu_{h_1 h_1}}_\mathcal{U} \le m^2 \cdot 2^{-2h_1}$. The bound~\eqref{eq:lm:optimal_estimate_convergence} then follows by summing over $h_1$ (again, this can be extended the whole triangle integral, for any value of $\ell$).
\end{proof}

\subsection{Evaluation of \texorpdfstring{$\mathcal{A}_2(z)$}{gyr}}
\label{subsec:A_evaluation_first_order}

Thanks to Lemmas~\ref{lm:sum_over_scales} and~\ref{lm:optimal_estimate}, we can write 
\begin{equation}
\label{eq:A_2_decomposition}
\mathcal{A}_2(z; \lambda) = \mathcal{A}_2(z) + R(z; \lambda),
\end{equation} 
where the function $\mathcal{A}_2(z)$ has the same form as $\mathcal{A}_2(z; \lambda)$ with $\hat{\Gamma}^\mu_{\triangle, \infty}(p', p)$ in place of $\hat{\Gamma}^\mu_2(p', p)$ and
\begin{equation}
\label{eq:R_estimate_A_2}
\abs{R(z; \lambda)} \le C_1 \cdot \frac{m^2}{M^2} \cdot \frac{\lambda^4 \cutoff^2}{M^2} \log \left( \frac{M}{m} \right) \frac{M}{\cutoff} + C_2 \cdot \lambda^2 \frac{m^2}{M^2} \frac{M^2}{\cutoff^2}.
\end{equation}
The explicit form of $\hat{\Gamma}^\mu_{\triangle, \infty}(p', p)$ is
\begin{align}
\label{eq:lambda_independent_triangle_limit}
\begin{split}
\hat{\Gamma}^\mu_{\triangle, \infty}(p', p) 
& = \lim_{\cutoff \to +\infty} \left(- \int \frac{\dd[4]{q}}{(2\pi)^4} \, \Upsilon_\nu \sing{\hat{g}}{\le N}_0(p' - q) \, \gamma^\mu \, \sing{\hat{g}}{\le N}_0(p - q) \Upsilon^\nu - L^\mu \right) \\
& \equiv \lim_{\cutoff \to +\infty} ( \hat{\mathscr{G}}^\mu_{\triangle, \cutoff}(p', p) - L^\mu),
\end{split}
\end{align}
where $L^\mu$ denotes the local term that comes from the action of the renormalization operator (see~\eqref{eq:triangle_multiscale}, recalling that $\ren = 1 - \loc$). Moreover,
\begin{multline}
\label{eq:gfunction_first_order}
\mathcal{A}_2(z) = \frac{\lambda^2}{24} \biggl( z \epsilon_{a b c} \tr[\gamma^5 (\partial_{p'} - \partial_p)^a \hat{\Gamma}^b_{\triangle, \infty}(p_z, p_z) \gamma^c]\, + \\
+ 2 \tr[\gamma_a \hat{\Gamma}^a_{\triangle, \infty}(p_z, p_z)] - 6 \tr[(1 + \gamma^0) \hat{\Gamma}^0_{\triangle, \infty}(p_z, p_z)] \biggr).
\end{multline}
It is easy to see that $L^\mu$ does not contribute to $\rgyr$, because it is equal to a linear combination of $\gamma^\mu$ and $\gamma^5 \gamma^\mu$ (the absence of such contributions has already been discussed in Section~\ref{subsec:non_triangle_vanishing}). Therefore, we can rewrite~\eqref{eq:gfunction_first_order} in terms of $\hat{\mathscr{G}}^\mu_{\triangle, \cutoff}(p', p)$ alone,
\begin{multline}
\label{eq:gfunction_first_order_no_local_part}
\mathcal{A}_2(z) = \lim_{\cutoff \to +\infty} \frac{\lambda^2}{24} \biggl( z \epsilon_{a b c} \tr[\gamma^5 (\partial_{p'} - \partial_p)^a \hat{\mathscr{G}}^b_{\triangle, \cutoff}(p_z, p_z) \gamma^c]\, + \\
+ 2 \tr[\gamma_a \hat{\mathscr{G}}^a_{\triangle, \cutoff}(p_z, p_z)] - 6 \tr[(1 + \gamma^0) \hat{\mathscr{G}}^0_{\triangle, \cutoff}(p_z, p_z)] \biggr),
\end{multline}
and then plug~\eqref{eq:lambda_independent_triangle_limit} into~\eqref{eq:gfunction_first_order_no_local_part}. A straightforward calculation yields
\begin{align}
\label{eq:A_cutoff}
\begin{split}
\mathcal{A}_2(z) 
& = 4 \lambda^2 \lim_{\cutoff \to +\infty} \int \frac{\dd[4]{q}}{(2\pi)^4} \frac{T_z(q) \chi(q) \chi^2(q - p_z)}{[q^2 + M^2][(p_z - q)^2 + m^2]^2}, \\
T_z(q) 
& \equiv (\kappa^2 + 1)[2 z^2 - 3 z q^0 + (q^0)^2 - q_a q^a/3] + 2im(\kappa^2 - 1)(z - q^0),
\end{split}
\end{align}
where a sum over the spatial index $a \in \lbrace 1, 2, 3 \rbrace$ is understood. This can be further simplified thanks to the identity
\begin{equation}
\label{eq:feynman_parametrization}
\frac{1}{ab^2} = 2 \int_0^1 \dd{x} \frac{x}{[a(1 - x) + bx]^3} \quad \forall a, b \in \C \setminus \lbrace 0 \rbrace,
\end{equation}
known as \emph{Feynman's parametrization}~\cite[Appendix B.1.1]{Schwartz_2013}: by letting $a \equiv q^2 + M^2, b \equiv (p_z - q)^2 + m^2$ and subsequently performing the variable change $q \mapsto q + p_z x$,~\eqref{eq:A_cutoff} becomes
\begin{equation}
\label{eq:limit_shifted_cutoffs}
\mathcal{A}_2(z) = 8 \lambda^2 \lim_{\cutoff \to +\infty} \int_0^1 \dd{x} \int \frac{\dd[4]{q}}{(2\pi)^4} \, \frac{x T_z(q + x p_z)}{[q^2 + \Delta^2(x, z)]^3} \, \chi(q + x p_z) \chi^2(q + (x - 1)p_z),
\end{equation}
where $\Delta^2(x, z) \equiv (1 - x)M^2 + z^2x(1-x) + m^2 x$. Due to the fact that the polynomial $T_z(q)$ contains the quadratic terms $(q^0)^2 - q_a q^a/3$ (see~\eqref{eq:A_cutoff}), the function $f_{x, z}(q) \equiv x T_z(q + x p_z)[q^2 + \Delta^2(x, z)]^{-3}$ decays as $1/q^4$ when $\abs{q} \to +\infty$; hence, although the limit~\eqref{eq:limit_shifted_cutoffs} exists, it \emph{cannot} be recklessly brought under the integral sign. This problem is circumvented by means of a subtle cancellation arising from rotational symmetry, as discussed below.

Fix an arbitrary $c \in \R^4$ and consider the difference $\chi(q + c) - \chi(q)$. Then
\begin{equation}
\label{eq:chi_estimate}
\abs{\chi(q + c) - \chi(q)} \le \abs{c} \cdot \norm*{\partial \chi}_\infty \, \id_\mathcal{B}(q) \le (\mathrm{const}) \cdot \frac{\id_\mathcal{B}(q)}{\cutoff},
\end{equation}
where $\mathcal{B} \equiv \lbrace q \in \R^4 \colon \cutoff - \abs{c} \le \abs{q} \le 2 \cutoff + \abs{c} \rbrace$ and the relation $\norm{\partial \chi}_\infty = 2^{-N} \norm{\partial \chi_0}_\infty$ has been exploited. Now, since $\abs{f_{x, z}(q)} \asymp 1/q^4$, we have
\begin{equation}
\label{eq:chi_estimate_function}
\int \frac{\dd[4]{q}}{(2\pi)^4} \, \abs{f_{x, z}(q)} \, \abs{\chi(q + c) - \chi(q)} \lesssim \frac{(\mathrm{const})}{\cutoff} \int_\cutoff^{2 \cutoff} \frac{q^3\dd{q}}{q^4} \lesssim \frac{1}{\cutoff} \overset{\cutoff \to +\infty}{\longrightarrow} 0.
\end{equation} 
Based on~\eqref{eq:chi_estimate_function}, the limit~\eqref{eq:limit_shifted_cutoffs} can be \emph{equivalently} computed with $\chi^3(q)$ in place of $\chi(q + x p_z) \chi^2(q + (x - 1)p_z)$, namely
\begin{equation}
\label{eq:limit_after_centering}
\mathcal{A}_2(z) = 8 \lambda^2 \lim_{\cutoff \to +\infty} \int_0^1 \dd{x} \int \frac{\dd[4]{q}}{(2\pi)^4} \, \frac{x T_z(q + x p_z)}{[q^2 + \Delta^2(x, z)]^3} \, \chi^3(q).
\end{equation} 
Thanks to the rotational symmetry of the function $\chi^3(q)[q^2 + \Delta^2(x, z)]^{-3}$, we have
\begin{equation}
\int \frac{\dd[4]{q}}{(2\pi)^4} \frac{(q^0)^2 - q_a q^a/3}{[q^2 + \Delta^2(x, z)]^3} \, \chi^3(q) = 0,
\end{equation} 
so the dangerous quadratic terms occurring in the numerator of~\eqref{eq:limit_after_centering} \emph{cancel out} exactly. We stress that this argument could not be applied to~\eqref{eq:limit_shifted_cutoffs}, because the function that multiplies $f_{x, z}(q)$ inside~\eqref{eq:limit_shifted_cutoffs} is not rotationally invariant.

In absence of the above quadratic terms, Lebesgue's dominated convergence theorem holds. If the limit is brought under the integral sign, an elementary integration gives
\begin{equation}
\label{eq:gfunction_pert}
\mathcal{A}_2(z) = \frac{\lambda^2}{4\pi^2} \int_0^1 \dd{x} \, \frac{z^2 (\kappa^2 + 1)x(x^2 - 3x + 2) + 2imz(\kappa^2 - 1) x(1-x)}{(1 - x)M^2 + z^2x(1-x) + m^2 x}.
\end{equation}

The function $\mathcal{A}_2(z)$ admits a unique holomorphic extension $\tilde{\mathcal{A}}_2(z)$ defined inside an open disk of radius $M/2$ centered around $z = 0$; in particular, this disk contains the point $z = im$. This is true because the denominator of~\eqref{eq:gfunction_pert} cannot vanish as long as $x \in [0, 1]$ and $\abs*{z} \le M/2$. According to Cauchy's integral formula, we have
\begin{equation}
\label{eq:cauchy_estimates}
\abs*{m^\ell \tilde{\mathcal{A}}_2^{(\ell)}(0)} = \abs{\ell! \, m^\ell \oint_{\abs{z} = M/2} \frac{\dd{z}}{2\pi i} \frac{\tilde{\mathcal{A}}_2(z)}{z^{\ell + 1}}} \le \ell! \left( \frac{2m}{M} \right)^\ell \max_{\theta \in [0, 2\pi]} \abs*{\tilde{\mathcal{A}}_2((M/2)e^{i \theta})},
\end{equation}
so
\begin{equation}
\abs*{m^\ell \tilde{\mathcal{A}}_2^{(\ell)}(0)} \le C \lambda^2 \, \ell! \left( \frac{2m}{M} \right)^\ell.
\end{equation}
Since $K = 4$ and $M > 10m$, this implies that
\begin{equation}
\abs{\sum_{\ell = 0}^K \frac{(im)^\ell}{\ell!} \mathcal{A}^{(\ell)}_2(0) - \tilde{\mathcal{A}}_2(im)} \le C \lambda^2 \frac{m^4}{M^4}.
\end{equation}
Finally, an easy calculation shows that
\begin{equation}
\label{eq:A_expansion}
\tilde{\mathcal{A}}_2(im) = \frac{m^2}{M^2} \, \frac{\lambda^2}{4\pi^2} \frac{1- 5 \kappa^2}{3} + \BigO \left( \frac{\lambda^2 m^4}{M^4} \right) = \bar{\gyr}_{\textsc{z}, 1}(1 + \BigO(m^2/M^2));
\end{equation} 
consequently, there exists a $(\lambda, M, m, \cutoff)$-independent constant $C_0$ such that
\begin{equation}
\abs{\sum_{\ell = 0}^K \frac{(im)^\ell}{\ell!} \mathcal{A}^{(\ell)}_2(0) - \bar{\gyr}_{\textsc{z}, 1}} \le C_1 \lambda^2 \frac{m^4}{M^4}.
\end{equation}

\subsection{Proof of Theorem~\ref{th:main_theorem}}
\label{subsec:main_theorem_proof}

According to definition~\eqref{eq:g-2:def}, we can write
\begin{equation}
\label{eq:holomorphic_estimate}
\rgyr = \sum_{\ell = 0}^K \frac{(im)^\ell}{\ell!} \, [\mathcal{A}^{(\ell)}_0(0; \lambda) + \mathcal{A}^{(\ell)}_2(0; \lambda) + \mathcal{A}^{(\ell)}_{>2}(0; \lambda)].
\end{equation}
We know from Section~\ref{subsec:non_triangle_vanishing} that $\mathcal{A}_0(z; \lambda)$ is identically zero. Moreover, relying on the fact that $\mathcal{A}(z) = [\mathcal{F}(z) - \mathcal{Q}(z)]/\mathcal{Q}(z)$ with $\mathcal{Q}(z) = 24 + \sum_{n \ge 1} \lambda^{2n} \mathcal{Q}_n(z; \lambda)$, Theorem~\ref{th:convergent_expansions} implies that $m^\ell \abs*{\mathcal{A}^{(\ell)}_{>2}(0; \lambda)}$ is bounded by the right hand side of~\eqref{eq:bounds_on_S_and_Gamma} with $n = 2$ and a $\ell$-dependent overall constant. Finally, $m^\ell \mathcal{A}^{(\ell)}_2(z; 0)$ can be decomposed as in~\eqref{eq:A_2_decomposition} and the remainder $m^\ell R^{(\ell)}(z; \lambda)$ is estimated as in~\eqref{eq:R_estimate_A_2} with a $\ell$-dependent overall constant. Based on~\eqref{eq:A_expansion}, we obtain $\rgyr = \bar{\gyr}_{\textsc{z}, 1}(1 + R_\lambda)$, where
\begin{equation}
\abs*{R_\lambda} \le C_1 \frac{M^2}{\cutoff^2} + C_2 \frac{m^2}{M^2} + C_3 \frac{\lambda^2\cutoff^2}{M^2} \cdot \log^4 \left(\frac{M}{m}\right) \log^4 \left( \frac{\cutoff}{M} \right) + C'_3 \frac{\lambda^2\cutoff^2}{M^2} \log \left( \frac{M}{m} \right) \frac{M}{\cutoff}.
\end{equation}
Note that $C'_3 (\lambda^2 \cutoff^2/M^2) \log (M/m) \cdot (M/\cutoff)$ is subdominant with respect to the remainder multiplied by $C_3$, so it can be reabsorbed inside it. This completes the proof of Theorem~\ref{th:main_theorem}.

\section*{Acknowledgments}

We acknowledge support from the MUR, PRIN 2022 project MaIQuFi cod.\ 20223J85K3. This work has been carried out under the auspices of the GNFM of INdAM.  

\appendix

\section{The regularized anomalous gyromagnetic factor}
\label{app:g-2_definition}
In this section we provide some details about the definition of regularized anomalous gyromagnetic factor given by~\eqref{eq:g-2:def}. In the context of perturbative Quantum Field Theory, the anomalous gyromagnetic factor is defined in Minkowski spacetime starting from the probability amplitude for a process in which a single fermion is scattered by a weak, external electromagnetic field $A_\mu(x)$. This amplitude reads
\begin{equation}
\label{eq:scattering_amplitude}
- \frac{1}{2\sqrt{p_0 \, p'_0}} \bar{u}_{p' \xi'} \, \hat{\Gamma}^\mu_\mink(p', p) \, u_{p \xi} \, \hat{A}_\mu(k) \eval_{(p')^2 = p^2 = - m^2}
\end{equation}
within the first order in $A_\mu$, where $\hat{\Gamma}^\mu_\mink(p', p)$ is the Minkowskian amputated vertex function (see e.\ g.~\cite[Equation (6.30)]{Peskin_1995}). Here, $k = p' - p$, $p^2 \equiv -(p_0)^2 + \abs{\bm{p}}^2$ and $\lbrace u_{p \xi} \rbrace_{\xi = \pm}$ are two linearly independent solutions of the equation $(i p_\mu \gamma^\mu_\mink + \mnorm{p}) u_{p \xi} = 0$, where $\gamma^\mu_\mink$ are the Minkowskian gamma matrices, $\mnorm{p} \equiv \sqrt{-p^2}$ and $p^2 \equiv -(p^0)^2 + \abs{\mathbf{p}}^2$. Explicitly,
\begin{equation}
\C^4 \ni u_{p \xi} = 
\begin{pmatrix}
\sqrt{-p^0 + i\mathbf{p} \cdot \bm{\sigma}} \, e_\xi \\
\sqrt{-p^0 - i\mathbf{p} \cdot \bm{\sigma}} \, e_\xi
\end{pmatrix}
\end{equation}
where $\xi \in \lbrace -1/2, 1/2 \rbrace$ labels the helicity states of the fermion and $\lbrace e_{-1/2}, e_{1/2} \rbrace \subseteq \C^2$ is the canonical basis of $\C^2$. The constraint $p^2=-m^2$ occurring in~\eqref{eq:scattering_amplitude} is called \emph{mass shell} condition. 
Lorentz symmetry constraints the vertex matrix element to take the form~\cite{Nowakowski_2005}
\begin{equation}
\label{eq:vertex_parametrization}
\bar{u}_{p' \xi'} \, \hat{\Gamma}^\mu_\mink \, u_{p \xi} =
\bar{u}_{p' \xi'} \, 
\biggl[ \gamma^\mu_\mink [F + \gamma^5 F_5]- \frac{i(p'+p)^\mu}{\mnorm{p'}+\mnorm{p}} [G + \gamma^5 G_5]
+\frac{(p'-p)^\mu}{\mnorm{p'}+\mnorm{p}} [H + \gamma^5 H_5] \biggr] u_{p \xi},
\end{equation}
where $F, F_5, \dots, H_5$ are called \emph{form fators} and they depend on $\mnorm{p'}, \mnorm{p}, \mnorm{k}$. In the non interacting case ($\lambda=0$), one has  $F=1$, $F_5, G, G_5, H, H_5 = 0$ and $\hat{\Gamma}^\mu_\mink(p', p) = \gamma^\mu_\mink$. 

If $\mathbf{k} = \mathbf{p}' - \mathbf{p}$ is small with respect to $m$, the parametrization~\eqref{eq:vertex_parametrization} implies that~\eqref{eq:scattering_amplitude} takes the form
\begin{equation}
\label{eq:low_energy_scattering}
(-i) [F + G](m, m, 0) \cdot \biggl[ \delta_{\xi' \xi} \hat{A}^0(\mathbf{k}) +\frac{2F(m, m, 0)}{[F + G](m, m, 0)} \cdot \left(-\frac{1}{2m} \mathbf{S}_{\xi' \xi} \cdot \hat{\mathbf{B}}(\mathbf{k}) \right) \biggr] + \cdots,
\end{equation}
where $\mathbf{S} = (\sigma_1, \sigma_2, \sigma_3)/2$ are the standard spin-$1/2$ matrices and $\hat{\mathbf{B}}(\mathbf{k}) \equiv i \mathbf{k} \times \hat{\mathbf{A}}(\mathbf{k})$ is the Fourier transform of the external magnetic field. The neglected terms, denoted by $(\, \cdots)$, are either higher orders in $\mathbf{p}, \mathbf{p}'$ or terms that do not contribute to the gyromagnetic factor~\cite{Nowakowski_2005}. To derive~\eqref{eq:low_energy_scattering}, one uses the basic relation
\begin{equation}
\bar{u}_{p' \xi'} \gamma^\mu_\mink \, u_{p \xi}=
 \bar{u}_{p' \xi'} \left[ \frac{(p' + p)^\mu}{2m} - \frac{[\gamma^\mu_\mink, \gamma^\nu_\mink]}{4m} \, k_\nu \right] u_{p \xi},
\end{equation}
which, once plugged into~\eqref{eq:vertex_parametrization}, shows that a term proportianal to $\mathbf{S}_{\xi' \xi} \cdot \hat{\mathbf{B}}$ arises even in the non-interacting case. 
Recalling that the gyromagnetic factor is the coefficient $g$ appearing inside the relation $\bm{\mu} = -g (e/2m) \mathbf{S}$, the usual perturbative definition of anomalous gyromagnetic factor can be read off from~\eqref{eq:low_energy_scattering} (see~\cite[Equation (6.37)]{Peskin_1995}):
\begin{equation}
\label{eq:g_form_factors}
\gyr = \frac{-G(m, m, 0)}{[F + G](m, m, 0)}.
\end{equation}
The form factors can be expressed in terms of the amputated vertex function as
\begin{align}
\label{eq:F_factor}
F(\mnorm{p}, \mnorm{p}, 0) & = \frac{i\epsilon_{\alpha \mu \sigma \beta}}{48 \mnorm{p}^2} p^\sigma \, (\partial_{p'} - \partial_p)^\alpha \tr[\gamma^5 \mproj{p'} \, \hat{\Gamma}_\mink^\mu(p', p) \, \mproj{p} \gamma^\beta_\mink] \eval_{p, p} \\
\label{eq:F_G_factor}
(F + G)(\mnorm{p}, \mnorm{p}, 0) & = -\frac{i}{4 \mnorm{p}^2} \tr[\mproj{p} \, p_\mu \hat{\Gamma}^\mu_\mink(p, p)]
\end{align}
where $\mproj{p} = -i \slashed{p} + \mnorm{p}$ and $\epsilon_{\alpha \mu \sigma \beta}$ is the Levi-Civita symbol. A proof of these relations will be given below.

As a consequence of the Lorentz covariance of the formal continuum theory, both $F$ and $G$ depend on $p$ only through 
$\mnorm{p}$, so we can evaluate them at $p = (\mnorm{p}, 0, 0, 0)$ without loss of generality. With this choice, a combination of~\eqref{eq:g_form_factors} and~\eqref{eq:F_factor} gives a \emph{formal} characterization of $\gyr$ in terms of the Minkowskian amputated vertex function; namely, a straightforward computation shows that $\gyr = \mathcal{A}_\mink(m)$, where
\begin{equation}
\mathcal{A}_\mink(s) \equiv \frac{i s \epsilon_{a b c} \tr[\gamma^5 (1 - i \gamma^0_\mink)(\partial_{p'} - \partial_p)^a \hat{\Gamma}^b_\mink(p_s, p_s) \gamma^c_\mink] + 2 \tr[ (\gamma_\mink)_a \hat{\Gamma}^a_\mink(p_s, p_s) ]}{-6i \tr[(1 - i \gamma^0_\mink) \hat{\Gamma}^0_\mink(p_s, p_s)]} - 1
\end{equation}
and $p_s \equiv (s, 0, 0, 0)$. Recalling that $\gamma^0_\mink = i \gamma^0$ and $\gamma^a_\mink = \gamma^a \,\, \forall a = 1, 2, 3$, the Euclidean function $\mathcal{A}(z)$ defined in~\eqref{eq:noncovariant_gfunction} is obtained by applying the formal Wick rotation
\begin{equation}
s \mapsto -iz, \quad \hat{\Gamma}^0_\mink(p_s, p_s) \mapsto i\hat{\Gamma}^0(p_z, p_z), \quad \hat{\Gamma}^a_\mink(p_s, p_s) \mapsto\hat{\Gamma}^a(p_z, p_z) \,\,\, \forall a = 1, 2, 3
\end{equation} 
on $\mathcal{A}_\mink(s)$. Our non perturbative definition of $\rgyr$ consists in the $K$-th MacLaurin polynomial of $\mathcal{A}(z)$ evaluated at $z = im$, as the rest is subdominant in $m^2/M^2$.

We finally give a derivation of~\eqref{eq:F_factor} and~\eqref{eq:F_G_factor}. As a preliminary step, it is convenient to multiply both sides of~\eqref{eq:vertex_parametrization} by $u_{p' \xi'}$ on the left and by $\bar{u}_{p \xi}$ on the right and subsequently sum over $\xi', \xi \in \lbrace -1/2, 1/2 \rbrace$. Thanks to the identity $\sum_{\xi=\pm 1/2} u_{p \xi} \bar{u}_{p \xi} = -i\slashed{p} + \mnorm{p} \equiv \mproj{p}$, this converts~\eqref{eq:vertex_parametrization} into 
\begin{equation}
\label{eq:form_factors_matrix}
\mproj{p'} \, \hat{\Gamma}^\mu_\mink(p', p) \, \mproj{p} = \mproj{p'} \, X^\mu_\mink(p', p) \, \mproj{p},
\end{equation}
where $X^\mu_\mink(p', p)$ denotes the matrix appearing at the right hand side of~\eqref{eq:vertex_parametrization}.

To prove~\eqref{eq:F_G_factor}, we take the trace of both sides of~\eqref{eq:form_factors_matrix} and evaluate them at $p = p'$, thus getting
\begin{equation}
\label{eq:F_G_trace}
\tr[\mproj{p} \, \hat{\Gamma}^\mu_\mink(p, p) \, \mproj{p}] = \tr \left[ \, \mproj{p} \! \left( \gamma^\mu_\mink \, F(\mnorm{p}, \mnorm{p} , 0) - \frac{ip^\mu}{\mnorm{p}} G(\mnorm{p}, \mnorm{p}, 0) \right) \mproj{p} \right].
\end{equation}
Here, $H, H_5$ are absent because they are multiplied by $(p - p)^\mu = 0$ and $G_5$ and $F_5$ are wiped away because $\tr[ \mproj{p} \, \gamma^5 \gamma^\mu_\mink \, \mproj{p} ] = \tr[ \mproj{p} \, \gamma^5 \, \mproj{p} ] = 0$. By exploiting the trivial matrix identity $\mproj{p} \gamma^\mu_\mink \mproj{p} = - \mproj{p} \, i p^\mu \, \mproj{p}/\mnorm{p}$ and the cyclicity of the trace, we can rewrite~\eqref{eq:F_G_trace} as
\begin{equation}
\label{eq:F_G_normalization_quasi_final}
\tr[\hat{\Gamma}^\mu_\mink(p, p) \, \mproj{p}^2] = - \frac{i p^\mu}{\mnorm{p}} (F + G)(\mnorm{p}, \mnorm{p}, 0) \tr[ \mproj{p}^2 ];
\end{equation}
knowing that $p^2 = -\mnorm{p}^2$, $\mproj{p}^2 = 2 \mnorm{p} \mproj{p}$ and $\tr[\mproj{p}^2] = 8 \mnorm{p}^2$, formula \eqref{eq:F_G_factor} follows after contracting both sides of~\eqref{eq:F_G_normalization_quasi_final} with $p_\mu$.

We now prove~\eqref{eq:F_factor}. At first, we verify that the right hand side of this equality cannot depend on any form factor different from $F$, nor on the derivatives of $F$ itself. For instance, we see from~\eqref{eq:form_factors_matrix} that $G$ contributes with the term
\begin{equation}
\frac{i\epsilon_{\alpha \mu \sigma \beta}}{48 \mnorm{p}^2} p^\sigma (\partial_{p'} - \partial_p)^\alpha \tr[\gamma^5 \left( \mproj{p'} \, \frac{-i(p' + p)^\mu}{\mnorm{p'} + \mnorm{p}} \, \mproj{p} \right) \gamma^\beta_\mink] \eval_{p, p} G(\mnorm{p}, \mnorm{p}, 0),
\end{equation}
which \emph{vanishes} because it can be written as a sum of traces of the form $\tr[\gamma^5 \gamma^{a_1}_\mink \cdots \gamma^{a_n}_\mink]$ with $n < 4$. An immediate extension of this argument leads to the conclusion that the right hand side of~\eqref{eq:F_factor} does not receive any contribution from $G, H, \partial F, \partial G, \partial H$. The form factors $F_5, G_5, H_5$ and their derivatives are instead cancelled due to the complete antisymmetry of the Levi-Civita symbol. To see why this is the case, let us consider the term proportional to $F_5$, which is
\begin{equation}
\label{eq:F5_contribution}
\frac{i\epsilon_{\alpha \mu \sigma \beta}}{48 \mnorm{p}^2} p^\sigma (\partial_{p'} - \partial_p)^\alpha \tr[\gamma^5 \, \mproj{p'} \, \gamma^\mu_\mink \gamma^5 \, \mproj{p} \, \gamma^\beta_\mink] \eval_{p, p} F_5(\mnorm{p}, \mnorm{p}, 0).
\end{equation}
Thanks to the properties of the $\gamma^5$ matrix~\cite[Appendix ``Traces'']{Weinberg_1995}, we can write
\begin{align}
\label{eq:axial_form_factor_vanishing}
\begin{split}
(\partial_{p'} - \partial_p)^\alpha \tr[\gamma^5 \, \mproj{p'} \, \gamma^\mu_\mink \gamma^5 \, \mproj{p} \, \gamma^\beta_\mink] \eval_{p, p} 
& = (\partial_{p'} - \partial_p)^\alpha \tr[(i \slashed{p}' + \abs*{p'}_\mink) \gamma^\mu_\mink (i \slashed{p} - \abs*{p}_\mink) \gamma^\beta_\mink] \eval_{p, p}\\
& = i\tr[\gamma^\alpha_\mink \, \gamma^\mu_\mink \, \slashed{p} \, \gamma^\beta_\mink] -i\tr[\slashed{p} \, \gamma^\mu_\mink \, \gamma^\alpha_\mink \, \gamma^\beta_\mink].
\end{split}
\end{align}
If $\eta = \mathrm{diag} (-1, 1, 1, 1)$ denotes the Lorentzian metric tensor, the trace of a product of $4$ gamma matrices satisfies $\tr \, [\gamma^\alpha_\mink \, \gamma^\beta_\mink \, \gamma^\mu_\mink \, \gamma^\nu_\mink] = 4(\eta^{\alpha \beta} \eta^{\mu \nu} - \eta^{\alpha \mu} \eta^{\beta \nu} + \eta^{\alpha \nu} \eta^{\beta \mu})$, so both the traces appearing in the last line of~\eqref{eq:axial_form_factor_vanishing} vanish when they are contracted with the totally antisymmetric tensor $\epsilon_{\alpha \mu \sigma \beta}$; this causes the whole term~\eqref{eq:F5_contribution} to vanish. A similar conclusion can be drawn for $G_5, H_5, \partial F_5, \partial G_5, \partial H_5$.

According to the above considerations, the right hand side of~\eqref{eq:F_factor} consists in a \emph{single} term which is proportional to $F(\mnorm{p}, \mnorm{p}, 0)$, namely
\begin{equation}
\label{eq:F_factor_RHS_surviving}
\frac{i\epsilon_{\alpha \mu \sigma \beta}}{48 \mnorm{p}^2} p^\sigma \, (\partial_{p'} - \partial_p)^\alpha \tr[\gamma^5 \mproj{p'} \, \gamma^\mu_\mink \, \mproj{p} \gamma^\beta_\mink] \eval_{p, p} F(\mnorm{p}, \mnorm{p}, 0);
\end{equation}
since $\tr[\gamma^5 \mproj{p'} \, \gamma^\mu_\mink \, \mproj{p} \gamma^\beta_\mink] = 4i \epsilon^{\tau \mu \nu \beta} (p')_\tau p_\nu$, this is also equal to
\begin{equation}
\frac{i\epsilon_{\alpha \mu \sigma \beta}}{48 \mnorm{p}^2} p^\sigma \cdot 4i \epsilon^{\tau \mu \nu \beta} \left( (\partial_{p'} - \partial_p)^\alpha (p')_\tau p_\nu \eval_{p, p} \right) \,  F(\mnorm{p}, \mnorm{p}, 0).
\end{equation}
After computing the derivatives and exploiting the identity $\epsilon_{\alpha \mu \sigma \beta} \epsilon^{\alpha \mu \nu \beta} = 6 \delta^\nu_\sigma$, the right hand side of~\eqref{eq:F_factor} reduces to
\begin{equation}
\frac{i p^2}{48 \mnorm{p}^2} \cdot (8i \cdot 6) F(\mnorm{p}, \mnorm{p}, 0) = F(\mnorm{p}, \mnorm{p}, 0),
\end{equation}
so~\eqref{eq:F_factor} holds.

\section{Proof of Lemma~\texorpdfstring{\ref{lm:localization_request}}{eq}}
\label{app:localization_request_proof}

Preliminarily, we show that $\beta^{m, s}_h \vert_{m_N = 0} = m^s_h \vert_{m_N = 0} = 0$ on all scales. The key observation is that the effective potential $\sing{\mathscr{V}}{N-1}[\psi, \omega] \vert_{m_N = 0}$ is manifestly invariant under the global chiral $\mathrm{U}(1)$ transformation $\hat{\psi}^\epsilon_{k s} \mapsto e^{i\epsilon \alpha_s} \hat{\psi}^\epsilon_{k s}, \, \hat{\eta}^\epsilon_{k s} \mapsto e^{i \epsilon \alpha_s} \hat{\eta}^\epsilon_{k s}$, where $\alpha_+$ and $\alpha_-$ are two independent phases. The term $-\beta^{m, s}_{N-1} \vert_{m_N = 0} (\psi^+_s \psi^-_{-s})(x)$ is not invariant under this transformation, so it must be $\beta^{m, s}_{N-1} \vert_{m_N = 0} = 0$ and therefore $m^s_{N-2} \vert_{m_N = 0} = 0$ (recall that $m^s_{N-1} = m^s_N$). This implies that $\sing{\mathscr{V}}{N-2}[\psi, \omega] \vert_{m_N = 0}$ is still invariant under the global chiral $\mathrm{U}(1)$ transformation introduced above, so both $\beta^{m, s}_{N-2} \vert_{m_N = 0}$ and $m^s_{N-3} \vert_{m_N = 0}$ vanish. By iterating this argument, one infers that $\beta^{m, s}_h \vert_{m_N = 0} = m^s_{h-1} \vert_{m_N = 0} = 0$ for every $h = h^\star, \dots, N$.

Now we turn our attention on~\eqref{eq:localization_request}. Since the theory is translationally invariant, the first line of~\eqref{eq:localization_definition} is equal to
\begin{equation}
\int_{\mathcal{M}_L^2} \!\! \dd[4]{x} \dd[4]{y} \loc \left[ W^{\alpha \beta}(x,y) \, \psi^+_\alpha(x) \psi^-_\beta(y) \right] =
\int_{\mathcal{M}_L} \! \dd[4]{x} (\psi^+_\alpha [b^{\alpha \beta} + c^{\mu \alpha \beta} \partial_\mu + d^{\mu \nu \alpha \beta} \partial_\mu \partial_\nu] \psi^-_\beta)(x)
\end{equation}
with
\begin{gather*}
\begin{split}
b^{\alpha \beta} = \int_{\mathcal{M}_L} \dd[4]{z} W^{\alpha \beta}(z), \! \quad c^{\mu \alpha \beta} = \int_{\mathcal{M}_L} \dd[4]{z} (z_\torus)^\mu W^{\alpha \beta}(z),
\end{split} \\
\begin{split}
d^{\mu \nu \alpha \beta} = \frac{1}{2} \int_{\mathcal{M}_L} \dd[4]{z} (z_\torus)^\mu (z_\torus)^\nu \bar{W}^{\alpha \beta}(z),
\end{split}
\end{gather*}
where a sum over repeated indices is understood the scale labels are temporarily suppressed, $W^{\alpha \beta}(z)$ stands for $W^{\alpha \beta}(z, 0)$ and the kernel $\bar{W}^{\alpha \beta}(z)$ must be evaluated at $m_N = 0$ when the spinor indices $\alpha, \beta$ have different chiralities. If the coefficients $b^{\alpha \beta}, c^{\mu \alpha \beta}, d^{\mu \nu \alpha \beta}$ are thought as $4 \times 4$ complex matrices whose entries are labelled by $\alpha, \beta$, we can decompose them along the standard basis $\lbrace \id, \gamma^\mu, \gamma^\mu \gamma^5, [\gamma^\mu, \gamma^\nu]/2 \rbrace$:
\begin{align}
\label{eq:gamma_matrices_base_decomposition}
\begin{split}
b & = b_\id \id + b_5 \gamma^5 + (b_v)_\rho \gamma^\rho + (b_a)_\rho \gamma^\rho \gamma^5 + (b_t)_{\theta \rho} [\gamma^\theta, \gamma^\rho]/2, \\
c^\mu & = (c_\id)^\mu \id + (c_5)^\mu \gamma^5 + (c_v)_\rho^\mu \gamma^\rho + (c_a)_\rho^\mu \gamma^\rho \gamma^5 + (c_t)_{\theta \rho}^\mu [\gamma^\theta, \gamma^\rho]/2, \\
d^{\mu \nu} & = (d_\id)^{\mu \nu} \id + (d_5)^{\mu \nu} \gamma^5 + (d_v)_\rho^{\mu \nu} \gamma^\rho + (d_a)_\rho^{\mu \nu} \gamma^\rho \gamma^5 + (d_t)_{\theta \rho}^{\mu \nu} [\gamma^\theta, \gamma^\rho]/2.
\end{split}
\end{align}
Even if the spacetime volume is finite, the theory is invariant under the action of a discrete group $G \subseteq \mathrm{SO}(4)$ generated by all the $\pi/2$ rotations around each pair of axes. The precise meaning of this sentence is the following: if we consider the maps
\begin{align*}
S(\vec{\theta}, \vec{\xi}) 
& \equiv 
\begin{pmatrix}
\displaystyle \exp \left( \frac{i \vec{\sigma} \cdot (\vec{\theta} + \vec{\xi})}{2} \right)	&	0 \\[3pt]
\displaystyle
0									& \displaystyle \exp \left( \frac{i \vec{\sigma} \cdot (\vec{\theta} - \vec{\xi})}{2} \right)
\end{pmatrix}, \\
U(\vec{\theta}, \vec{\xi}) 
& \equiv \exp[\vec{\theta} \cdot \vec{L} + \vec{\xi} \cdot \vec{K}],
\end{align*}
where $\vec{\sigma}$ are the Pauli matrices and the $4 \times 4$ matrices $\vec{L}, \vec{K}$ are defined as
\begin{align*}
L_1 = 
\begin{pmatrix}
0	&		&		&		\\	
	&	0	&	0	&	0	\\	
	&	0	&	0	&	-1	\\
	&	0	&	1	&	0	\\		
\end{pmatrix}, \qquad
L_2 = 
\begin{pmatrix}
0	&		&		&		\\	
	&	0	&	0	&	1	\\	
	&	0	&	0	&	0	\\
	&	-1	&	0	&	0	\\		
\end{pmatrix}, \qquad
L_3 = 
\begin{pmatrix}
0	&		&		&		\\	
	&	0	&	-1	&	0	\\	
	&	1	&	0	&	0	\\
	&	0	&	0	&	0	\\		
\end{pmatrix}, \\[4pt]
K_1 = 
\begin{pmatrix}
0	&	-1	&		&		\\	
1	&	0	&	0	&	0	\\
	&	0	&	0	&	0	\\		
	&	0	&	0	&	0	\\	
\end{pmatrix}, \qquad
K_2 = 
\begin{pmatrix}
0	&		&	1	&		\\	
	&	0	&	0	&	0	\\	
-1	&	0	&	0	&	0	\\		
	&	0	&	0	&	0	\\	
\end{pmatrix}, \qquad
K_3 = 
\begin{pmatrix}
0	&		&		&	-1	\\	
	&	0	&	0	&	0	\\	
	&	0	&	0	&	0	\\		
1	&	0	&	0	&	0	\\	
\end{pmatrix},
\end{align*}
the generating functional~\eqref{eq:generating_functional_fermionic} is invariant under the transformations
\begin{gather*}
\hat{\psi}^-_k \mapsto S(\vec{\theta}, \vec{\xi}) \, \hat{\psi}^-_{U(\vec{\theta}, \vec{\xi})k}, \qquad \hat{\psi}^+_k \mapsto \hat{\psi}^+_{U(\vec{\theta}, \vec{\xi})k} \, S^{-1}(\vec{\theta}, \vec{\xi}), \\
\hat{\eta}^-_k \mapsto S(\vec{\theta}, \vec{\xi}) \, \hat{\eta}^-_{U(\vec{\theta}, \vec{\xi})k}, \qquad \hat{\eta}^+_k \mapsto \hat{\eta}^+_{U(\vec{\theta}, \vec{\xi})k} \, S^{-1}(\vec{\theta}, \vec{\xi}), \\
\begin{split}
\hat{J}_k \mapsto U^{-1}(\vec{\theta}, \vec{\xi}) \hat{J}_{U(\vec{\theta}, \vec{\xi})k}
\end{split}
\end{gather*}
\emph{only} provided that $\vec{\theta}, \vec{\xi} \in (\pi \Z/2)^3$.  This constraint on $\vec{\theta}, \vec{\xi}$ is due to the finiteness of the spacetime volume: if $L < +\infty$, the generating functional is left invariant by those $\mathrm{SO}(4)$ transformations that preserve the shape of the cubic spacetime $\mathcal{M}_L = [-L/2, L/2]^4$ and the structure of the discrete Fourier space $\mathcal{D}_{N, L} = (2\pi \Z/L)^4 \cap \supp \chi_N$.
As $\vec{\theta}, \vec{\xi}$ span $(\pi \Z/2)^3$, the matrices $U(\vec{\theta}, \vec{\xi})$ constitute a four-dimensional, complex representation of the finite group $G$ denoted by $\mathcal{U}$. This representation is also \emph{irreducible}: in fact, the whole canonical basis of $\C^4$ is recovered by acting on the vector $e_0 \equiv (1, 0, 0, 0)$ with the matrices $\id_4, \exp(\pi K_1/2), \exp(-\pi K_2/2), \exp(\pi K_3/2) \in \mathcal{U}(G)$, so the only invariant subspaces of $\mathcal{U}$ are $\lbrace 0 \rbrace \subset \C^4$ and $\C^4$ itself.

Depending on the number of their spacetime indices, the coefficients $(b_v)_\rho, (b_a)_\rho$, $(b_t)_{\theta \rho}, (c_\id)^\mu, (c_5)^\mu, (c_v)_\rho^\mu, \dots$ can be thought as tensors belonging to the representation spaces of $\mathcal{U}, \mathcal{U}^{\otimes 2}, \mathcal{U}^{\otimes 3}, \dots$. The $G$-symmetry of the theory implies that each of these tensors must belong to a one-dimensional invariant subspace of the corresponding representation. To see a concrete example of this, consider the term
\begin{equation}
\int \dd[4]{x} [\psi^+ \, (b_v)_\rho \gamma^\rho \, \psi^-](x).
\end{equation} 
If this term is invariant under the above transformations, it must be
\begin{equation}
S^{-1}(\vec{\theta}, \vec{\xi}) \, (b_v)_\rho \gamma^\rho \, S(\vec{\theta}, \vec{\xi}) = (b_v)_\rho \gamma^\rho
\end{equation}
for every $\vec{\theta}, \vec{\xi} \in (\pi\Z/2)^3$. Thanks to the identity $S^{-1}(\vec{\theta}, \vec{\xi}) \, \gamma^\rho \, S(\vec{\theta}, \vec{\xi}) = U(\vec{\theta}, \vec{\xi})^{\rho}_{\,\, \mu} \, \gamma^\mu$, this can be rewritten as
\begin{equation}
(b_v)_\rho = U(\vec{\theta}, \vec{\xi})^{\rho}_{\,\, \mu} \, (b_v)^\mu \qquad \forall \vec{\theta}, \vec{\xi} \in (\pi\Z/2)^3,
\end{equation}
showing that the vector $((b_v)_\rho)_{\rho = 0, \dots, 3} \in \C^4$ belongs to a one-dimensional invariant subspace of the representation $\mathcal{U}$. Similar considerations apply to the other coefficients as well.

Now consider the element of $G$ represented by $\exp\,\![(\pi, 0, 0) \cdot (\vec{L} + \vec{K})] = - \id_4$. The invariance of $(b_v)_\rho$, $(b_a)_\rho$, $(c_\id)^\mu$, $(c_5)^\mu$, $(c_t)_{\theta \rho}^\mu$, $(d_v)_\rho^{\mu \nu}$, $(d_a)_\rho^{\mu \nu}$ under the action of this element implies that $(b_v)_\rho = - (b_v)_\rho, (b_a)_\rho = - (b_a)_\rho$ and so on; therefore, all the coefficients carrying an \emph{odd} number of spacetime indices are identically vanishing.  The fact that $\mathcal{U}$ is irreducible implies that all the coefficients with two spacetime indices are proportional to the Kronecker delta. Indeed, let us consider $(b_t)_{\theta \rho}$ for concreteness. The invariance argument used above yields
\begin{equation}
\label{eq:rank_2_tensor_invariance}
\mathcal{U}_\mu^{\,\, \theta}(g) \, \mathcal{U}_\nu^{\,\, \rho}(g) (b_t)_{\theta \rho} = (b_t)_{\mu \nu} \qquad \forall g \in G,
\end{equation}
or, in a more compact matrix form,
\begin{equation}
\label{eq:rank_2_tensor_invariance_matrix}
\mathcal{U}(g) \cdot b_t \cdot \mathcal{U}(g)^T = b_t \qquad \forall g \in G.
\end{equation}
Recalling that $\mathcal{U}(g) \mathcal{U}(g)^T = \id_4$ ($\mathcal{U}$ is orthogonal, because $\vec{L}, \vec{K}$ are skew-symmetric), we can multiply both sides of~\eqref{eq:rank_2_tensor_invariance_matrix} by $\mathcal{U}(g)$ on the right, thus getting $[b_t, \mathcal{U}(g)] = 0 \, \forall g \in G$. Then, since $\mathcal{U}$ is irreducible, Schur's lemma guarantees that $(b_t)_{\theta \rho} = b_t \, \delta_{\theta \rho}$ for some $b_t \in \mathbb{C}$. The same argument holds for $(c_v)_\rho^\mu, (c_a)_\rho^\mu, (d_\id)^{\mu \nu}, (d_5)^{\mu \nu}$ as well. 

Based on the above considerations, the decomposition~\eqref{eq:gamma_matrices_base_decomposition} becomes
\begin{equation}
\label{eq:lm:coefficients_matrix_form_final}
b = b_\id \id + b_5 \gamma^5, \qquad c^\mu = (c_v + c_a \gamma^5) \gamma^\mu, \qquad d^{\mu \nu} = (d_\id \id + d_5 \gamma^5) \delta^{\mu \nu} + (d_t)^{\mu \nu}_{\theta \rho} [\gamma^\theta, \gamma^\rho]/2.
\end{equation}
Since
\begin{equation}
b_\id \id + b_5 =
\begin{pmatrix}
b + b_5	&	0	\\
0		&	b - b_5
\end{pmatrix},
\qquad
(c_v + c_a \gamma^5) \gamma^\mu =
\begin{pmatrix}
0								&	(c_v + c_a) \sigma^\mu_+ \\
(c_v - c_a) \sigma^\mu_-		&	0
\end{pmatrix},
\end{equation}
the matrix structure of $b, c^\mu$ clearly agrees with~\eqref{eq:localization_request}. According to~\eqref{eq:lm:coefficients_matrix_form_final}, we have
\begin{equation}
\label{eq:d_structure}
d^{\mu \nu} =
\begin{pmatrix}
\delta^{\mu \nu}(d_\id + d_5) + (d_t)^{\mu \nu}_{\theta \rho} \, \sigma_+^{[\theta} \sigma_-^{\rho]}/2	&	0 \\
0	&	\delta^{\mu \nu}(d_\id - d_5) + (d_t)^{\mu \nu}_{\theta \rho} \, \sigma_-^{[\theta} \sigma_+^{\rho]}/2
\end{pmatrix},
\end{equation} 
where the notation $\sigma_\pm^{[\theta} \sigma_\mp^{\rho]}$ stands for the antisymmetrization $\sigma_\pm^{\theta} \sigma_\mp^{\rho} -\sigma_\pm^{\rho} \sigma_\mp^{\theta}$. Formula~\eqref{eq:d_structure} shows that the only potentially nonvanishing entries of the matrix $d^{\mu \nu}$ are labelled by spinor indices $\alpha, \beta$ with opposite chiralities, so these entries must be evaluated at $m_N = 0$. The same argument used to prove that $\beta^{m, s}_{h-1} \vert_{m_N = 0} = 0$ then implies that $d^{\mu \nu} = 0$, so we conclude that the action of $\loc$ on $W^{\alpha \beta}$ agrees with~\eqref{eq:localization_request}. An entirely similar procedure can be followed with the remaining lines of~\eqref{eq:localization_definition}.

\section{Some estimates at higher orders}
\label{app:higher_orders_estimates}

Although the bound provided by Theorem~\ref{th:convergent_expansions} is based on the tree expansion, it clearly holds for each renormalized multiscale Feynman diagram that contributes to $\hat{S}(k)$ or $\hat{\Gamma}^\mu(p', p)$, at least when $\abs{k}, \abs*{p'}, \abs{p}$ are small enough. To concretely appreciate the subtle mechanism that allows to extract the overall $m^2/M^2$ factor, it is instructive to analyze a few concrete examples and work out the bounds~\eqref{eq:bounds_on_S_and_Gamma} ``by hand''. In the standard perturbative QFT framework, it is practically impossible to give an \emph{a priori} estimate on the size of a Feynman diagram; yet, the non perturbative formalism adopted in this paper allows to exhibit a meaningful bound on any multiscale renormalized Feynman diagram, without even writing down its explicit expression.

\paragraph{A fourth order diagram.} Let us consider the diagram displayed in~\cite[Figure 1, row IV]{Karplus_Kroll_1950}, which contributes to the fourth order correction to the anomalous gyromagnetic factor in ordinary perturbative QED$_4$. In our model, there are plenty of renormalized multiscale Feynman diagrams with the same structure as this. An example is given by figure~\ref{fig:fourth_order_diagram}. Here, black boxes have been used to evidence the clusters associated with the vertices of the tree: proceeding from the innermost to the outermost cluster, we encounter one propagator on scale $h_1$, two propagators on scale $h_2$, one propagator on scale $h_3$ and two propagators on scale $h^\star$.
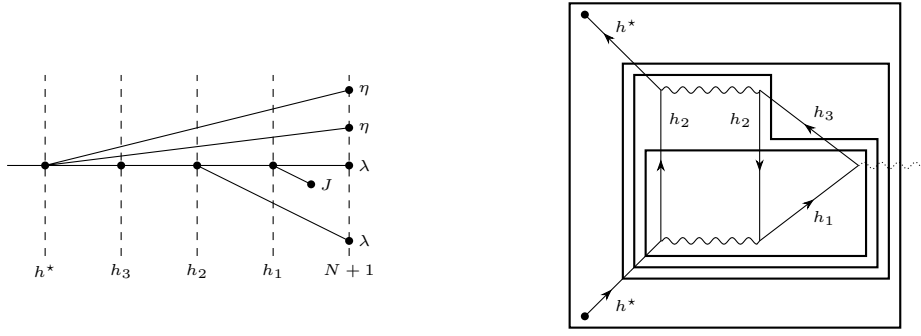
\begin{figure}[t]
\begin{center}
\begin{tikzpicture}[baseline]
\draw	(-0.5, 0) -- (4, 0)
		(2, 0) -- (4, -1)
		(3, 0) -- (3.5, -0.25)
		(0, 0) -- (4, 0.5)
		(0, 0) -- (4, 1);
\fill[black]		(0, 0)			circle (1.5pt)
				(1, 0)			circle (1.5pt)
				(2, 0)			circle (1.5pt)
				(3, 0)			circle (1.5pt)
				(3.5, -0.25)		circle (1.5pt)
				(4, 0)			circle (1.5pt)
				(4, -1)			circle (1.5pt)
				(4, 0.5)			circle (1.5pt)
				(4, 1)			circle (1.5pt);
\draw[dashed]	(0, -1.2) -- (0, 1.2)
				(1, -1.2) -- (1, 1.2)
				(2, -1.2) -- (2, 1.2)
				(3, -1.2) -- (3, 1.2)
				(4, -1.2) -- (4, 1.2);
\node	(hs)		at	(0, -1.4)		{\tiny $h^\star$};
\node	(h1)		at	(1, -1.4)		{\tiny $h_3$};
\node	(h2)		at	(2, -1.4)		{\tiny $h_2$};
\node	(h3)		at	(3, -1.4)		{\tiny $h_1$};
\node	(ep)		at	(4, -1.4)		{\tiny $N+1$};
\node	(l1)		at	(4.2, 0)			{\tiny $\lambda$};
\node	(l2)		at	(4.2, -1)		{\tiny $\lambda$};
\node	(J)		at	(3.7, -0.25)		{\tiny $J$};
\node	(n1)		at	(4.2, 0.5)		{\tiny $\eta$};
\node	(n2)		at	(4.2, 1)			{\tiny $\eta$};
\end{tikzpicture}
\qquad \qquad \qquad
\begin{tikzpicture}[baseline]
\begin{feynman}
\draw[with arrow]	(0.5, 1) -- (0.5, -1);
\draw[with arrow]	(0.5, -1) -- (1.8, 0);
\draw[with arrow]	(1.8, 0) -- (0.5, 1);
\draw[photon]		(-0.8, 1) -- (0.5, 1)
					(-0.8, -1) -- (0.5, -1);
\draw[with arrow]	(-0.8, -1) -- (-0.8, 1);
\draw[with arrow]	(-1.8, -2) -- (-1.2, -1.4);
\draw				(-1.2, -1.4) -- (-0.8, -1);
\draw				(-0.8, 1) -- (-1.2, 1.4);
\draw[with arrow]	(-1.2, 1.4) -- (-1.8, 2);
\fill[black]			(-1.8, -2)		circle	(1.5pt)
					(-1.8, 2)		circle	(1.5pt);
\draw[photon, densely dotted]		(1.8, 0) -- (2.7, 0);
\end{feynman}
\draw[black, thick]	(-1, -1.2)		rectangle	(1.9, 0.2);
\draw[black, thick]	(-1.3, -1.5) 	rectangle	(2.2, 1.35);
\draw[black, thick]	(-1.15, -1.35) -- (-1.15, 1.2) -- 
					(0.65, 1.2) -- (0.65, 0.35) --
					(2.05, 0.35) -- (2.05, -1.35) -- cycle;
\draw[black, thick]	(-2, -2.15)	rectangle	(2.35, 2.15);
\node	(h1)		at	(1.35, -0.7)		{\tiny $h_1$};
\node	(h2)		at	(-0.55, 0.6)		{\tiny $h_2$};
\node	(h21)	at	(0.25, 0.6)		{\tiny $h_2$};
\node	(h3)		at	(1.35, 0.7)		{\tiny $h_3$};
\node	(hs1)	at	(-1.25, -1.85)	{\tiny $h^\star$};
\node	(hs2)	at	(-1.25, 1.85)	{\tiny $h^\star$};
\end{tikzpicture}
\end{center}
\caption{Fourth order diagram with the same structure as~\cite[Figure 1, row IV]{Karplus_Kroll_1950}, together with corresponding tree.}
\label{fig:fourth_order_diagram}
\end{figure}
According to definition~\eqref{eq:localization_definition}, the $\ren$ operator acts nontrivially on the penultimate cluster. The $(h^\star-1)$-weighted norm of the diagram can be estimated by using the same methods discussed in the proof of Theorem~\ref{th:renormalization_bound}; namely, after constructing a spanning tree $T_h$ for each cluster with scale $h$, we bound the propagators lying along $T_h$ with their weighted $L^1$ norm and the other ones with their $L^\infty$ norm. Moreover, the action of $\ren$ on the cluster with scale $h_3$ results in an extra $2^{(h^\star - h_3) \cdot 2}$ factor, so we get
\begin{equation}
\norm*{W}_{h^\star-1} \le \underbrace{\left(\frac{C \lambda^2}{M^2}\right)^2}_{\text{Endpoints}} \cdot \,\, \underbrace{2^{-h^\star - h^\star}}_{h^\star \text{ cluster}} \,\, \cdot \underbrace{2^{3h_3}}_{h_3 \text{ cluster}} \cdot \,\, \underbrace{2^{3h_2 - h_2}}_{h_2 \text{ cluster}} \,\, \cdot \underbrace{2^{-h_1}}_{h_1 \text{ cluster}} \cdot \,\, \underbrace{2^{2(h^\star - h_3)}}_{\ren}
\end{equation} 
and this can be easily rewritten as
\begin{equation}
\label{eq:diagram_estimate}
\norm*{W}_{h^\star-1} \le \left(\frac{C \lambda^2 \cutoff^2}{M^2}\right)^2 \cdot 2^{2(h_1 - N) + 2(h_2 - N) -2h^\star} \cdot 2^{(h_3 - h^\star) \cdot (0 - 2) + (h_2 - h_3) \cdot (-3 - 0) + (h_1 - h_2) \cdot (-3 - 0)},  
\end{equation}
thus reproducing the structure displayed in~\eqref{eq:renormalization_bound} (for each cluster, we evidenced the difference $D_v - R_v$, where $v \in \vrt(\tau)$ is the corresponding vertex on the tree). The short memory property is recovered by multiplying~\eqref{eq:diagram_estimate} by $1 = 2^{2(h^\star - N)} \cdot 2^{2(N - h^\star)}$ and redistributing the factor $2^{2(N - h^\star)}$ among the various clusters as $2^{2(N - h^\star)} = 2^{2(N - h_1) + 2(h_1 - h_2) + 2(h_2 - h_3) + 2(h_3 - h^\star)}$; the result is
\begin{equation}
\norm*{W}_{h^\star-1} \le \left(\frac{C \lambda^2 \cutoff^2}{M^2}\right)^2 \cdot 2^{2(h^\star - N)} \cdot 2^{2(h_2 - N) -2h^\star} \cdot 2^{(h_3 - h^\star) \cdot 0 + (h_2 - h_3) \cdot (-1) + (h_1 - h_2) \cdot (-1)}.
\end{equation}
Finally, recalling that $2^{h^\star} \simeq m$, we use the bound $2^{2(h_2 - N)} \le 1$ and rearrange the remaining factors as
\begin{equation}
\norm*{W}_{h^\star-1} \le \frac{C^2 \lambda^4 \cutoff^2}{M^2} \cdot \frac{m^2}{M^2} \cdot 2^{-2h^\star} \cdot 2^{(h_3 - h^\star) \cdot 0 + (h_2 - h_3) \cdot (-1) + (h_1 - h_2) \cdot (-1)}.
\end{equation}
As expected, an overall $m^2/M^2$ factor appears. Also, the sum over the scale differences $h_2 - h_3, h_1 - h_2$ can be safely performed due to the damping factors $2^{-(h_2 - h_3)}, 2^{-(h_1 - h_2)}$, whereas the undamped sum over $h_3 - h^\star$ is bounded by $\abs*{N - h^\star + 1} \simeq \log(\cutoff/m) \le \log^4(\cutoff/m)$. Since the amputation procedure cancels the $2^{-2h^\star}$ factor, we proved that our (amputated) fourth order diagram satisfies the bound~\eqref{eq:bounds_on_S_and_Gamma}. It is worth noting that this estimate is not optimal, because it neglects the gain produced by $2^{2(h_2 - N)}$ and it overcounts the number of $\log(\cutoff/m)$ factors. Nevertheless, this analysis makes it evident that some diagrams (or, more generally, some trees) are more relevant than others. In principle, the precise number of $\log(\cutoff/m)$ factors and the overall damping produced by the $\lambda$ endpoints can be worked out case by case.

\paragraph{A fourth order diagram with two nested renormalizations.} As a second example, we consider the diagram depicted in Figure~\ref{fig:fourth_order_diagram_nested}, which has the same structure as the one displayed in~\cite[Figure 1, row II]{Karplus_Kroll_1950}.
\begin{figure}[t]
\begin{center}
\begin{tikzpicture}[baseline]
\draw	(-0.5, 0) -- (4, 0)
		(1.5, 0) -- (4, -1)
		(3, 0) -- (3.5, -0.25)
		(0, 0) -- (4, 0.5)
		(0, 0) -- (4, 1);
\fill[black]		(0, 0)			circle (1.5pt)
				(1.5, 0)			circle (1.5pt)
				(3, 0)			circle (1.5pt)
				(3.5, -0.25)		circle (1.5pt)
				(4, 0)			circle (1.5pt)
				(4, -1)			circle (1.5pt)
				(4, 0.5)			circle (1.5pt)
				(4, 1)			circle (1.5pt);
\draw[dashed]	(0, -1.2) -- (0, 1.2)
				(1.5, -1.2) -- (1.5, 1.2)
				(3, -1.2) -- (3, 1.2)
				(4, -1.2) -- (4, 1.2);
\node	(hs)		at	(0, -1.4)		{\tiny $h^\star$};
\node	(h2)		at	(1.5, -1.4)		{\tiny $h_2$};
\node	(h1)		at	(3, -1.4)		{\tiny $h_1$};
\node	(ep)		at	(4, -1.4)		{\tiny $N+1$};
\node	(l1)		at	(4.2, 0)			{\tiny $\lambda$};
\node	(l2)		at	(4.2, -1)		{\tiny $\lambda$};
\node	(J)		at	(3.7, -0.25)		{\tiny $J$};
\node	(n1)		at	(4.2, 0.5)		{\tiny $\eta$};
\node	(n2)		at	(4.2, 1)			{\tiny $\eta$};
\end{tikzpicture}
\qquad \qquad \qquad \quad
\begin{tikzpicture}[baseline]
\begin{feynman}
\draw[with arrow]	(0, 0) -- (-1, 1);
\draw[with arrow]	(-1, 1) -- (-2, 2);
\draw[with arrow]	(-2, 2) -- (-2.6, 2.6);
\draw[with arrow]	(-2.6, -2.6) -- (-2, -2);
\draw[with arrow]	(-2, -2) -- (-1, -1);
\draw[with arrow]	(-1, -1) -- (0, 0);
\draw[photon]		(-1, -1) -- (-1, 1)
					(-2, -2) -- (-2, 2);
\draw[photon, densely dotted]		(0, 0) -- (0.8, 0);
\end{feynman}
\draw[black, thick]	(-1.15, -1.2) rectangle (0.2, 1.2);
\draw[black, thick]	(-2.15, -2.2) rectangle (0.35, 2.2);
\draw[black, thick]	(-2.75, -2.75) rectangle (0.5, 2.75);
\fill[black]		(-2.6, -2.6) circle (1.5pt)
				(-2.6, 2.6) circle (1.5pt);
\node	(hs1)	at	(-2.15, -2.55)		{\tiny $h^\star$};
\node	(hs2)	at	(-2.15, 2.55)		{\tiny $h^\star$};
\node	(h21)	at	(-1.3, -1.7)		{\tiny $h_2$};
\node	(h22)	at	(-1.3, 1.7)			{\tiny $h_2$};
\node	(h11)	at	(-0.3, -0.7)		{\tiny $h_1$};
\node	(h12)	at	(-0.3, 0.7)			{\tiny $h_1$};
\end{tikzpicture}
\end{center}
\caption{A fourth order diagram with two nested renormalization operators acting nontrivially.}
\label{fig:fourth_order_diagram_nested}
\end{figure}
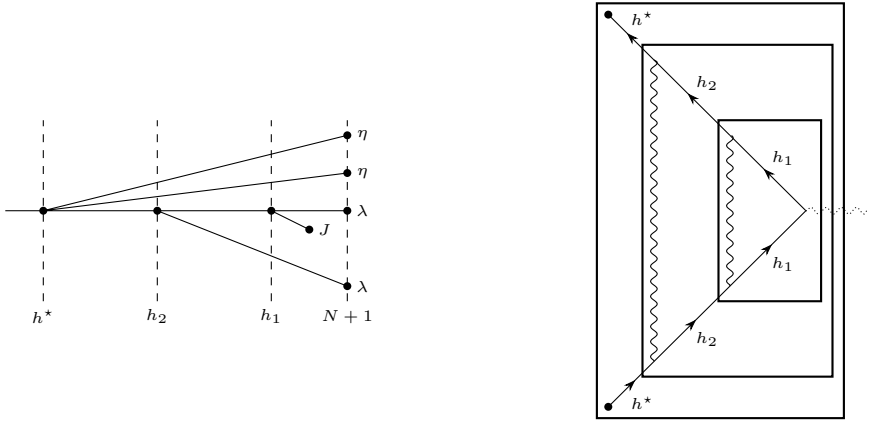
Now the renormalization operator acts nontrivially on two clusters, respectively lying on scale $h_1$ and on scale $h_2$. Proceeding as before, we obtain
\begin{equation}
\norm*{W}_{h^\star-1} \le \underbrace{\left(\frac{C \lambda^2}{M^2}\right)^2}_{\text{Endpoints}} \cdot \,\, \underbrace{2^{-h^\star - h^\star}}_{h^\star \text{ cluster}} \,\, \cdot \,\, \underbrace{2^{3h_2 - h_2}}_{h_2 \text{ cluster}} \,\, \cdot \,\, \underbrace{2^{3h_1 - h_1}}_{h_1 \text{ cluster}} \,\, \cdot \,\, \underbrace{2^{2(h^\star - h_2) + 2(h_2 - h_1)}}_{\ren}.
\end{equation}
Again, this can be rearranged as
\begin{equation}
\label{eq:second_diagram_clusters_bound}
\norm*{W}_{h^\star-1} \le \left(\frac{C \lambda^2 \cutoff^2}{M^2}\right)^2 \cdot 2^{2(h_1 - N) + 2(h_2 - N) -2h^\star} \cdot 2^{(h_2 - h^\star) \cdot (0 - 2) + (h_1 - h_2) \cdot (0 - 2)}
\end{equation}
and the short memory property is evidenced by inserting a $1 = 2^{2(h^\star - N)} \cdot 2^{2(N - h^\star)}$ factor and subsequently writing $N - h^\star$ as $(N - h_1) + (h_1 - h_2) + (h_2 - h^\star)$. Since $2^{2(h_2 - N)} \le 1, 2^{h^\star} \simeq m$, the short memory property transforms~\eqref{eq:second_diagram_clusters_bound} into
\begin{equation}
\norm*{W}_{h^\star-1} \le \frac{C \lambda^4 \cutoff^2}{M^2} \frac{m^2}{M^2} \cdot 2^{-2h^\star} \cdot 2^{(h_2 - h^\star) \cdot 0 + (h_1 - h_2) \cdot 0}.
\end{equation}
The sum over the scale differences is not damped, so it produces a $\abs*{N - h^\star + 1}^2 \simeq \log^2(\cutoff/m) \le \log^4(\cutoff/m)$ factor: once the amputation is taken into account, we conclude that this diagram satisfies~\eqref{eq:bounds_on_S_and_Gamma}.

\paragraph{A tenth order diagram.} We now consider the diagram represented in Figure~\ref{fig:tenth_order_diagram}, that contributes to $\rgyr$ at tenth order in $\lambda$. It should be noted that this diagram lies at the current \emph{limit} of the ordinary perturbative calculations in Minkowskian QED$_4$: it is extremely demanding (and perhaps impossible) to express its contribution to $\rgyr$ in a closed form~\cite{Aoyama_2012}.
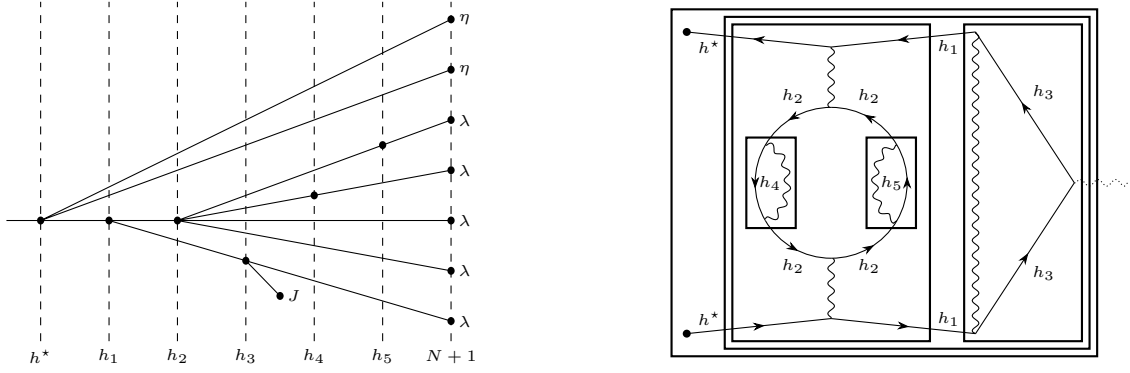
\begin{figure}
\begin{center}
\begin{tikzpicture}[baseline, xscale=0.9]
\draw	(-0.5, 0) -- (6, 0)
		(1, 0) -- (6, -1.333)
		(2, 0) -- (6, -0.667)
		(2, 0) -- (6, 0.667)
		(2, 0) -- (6, 1.333)
		(0, 0) -- (6, 2)
		(0, 0) -- (6, 2.667)
		(3, -0.533) -- (3.5, -1);
\fill[black]		(0, 0)			circle (1.5pt)
				(1, 0)			circle (1.5pt)
				(2, 0)			circle (1.5pt)
				(3, -0.533)		circle (1.5pt)
				(4, 0.333)		circle (1.5pt)
				(5, 1)			circle (1.5pt)
				(6, 0)			circle (1.5pt)
				(6, -1.333)		circle (1.5pt)
				(6, -0.667)		circle (1.5pt)
				(6, 0.667)		circle (1.5pt)
				(6, 1.333)		circle (1.5pt)
				(6, 2)			circle (1.5pt)
				(6, 2.667)		circle (1.5pt)
				(3.5, -1)		circle (1.5pt);
\draw[dashed]	(0, -1.6) -- (0, 2.9)
				(1, -1.6) -- (1, 2.9)
				(2, -1.6) -- (2, 2.9)
				(3, -1.6) -- (3, 2.9)
				(4, -1.6) -- (4, 2.9)
				(5, -1.6) -- (5, 2.9)
				(6, -1.6) -- (6, 2.9);
\node	(hs)		at	(0, -1.8)		{\tiny $h^\star$};
\node	(h1)		at	(1, -1.8)		{\tiny $h_1$};
\node	(h2)		at	(2, -1.8)		{\tiny $h_2$};
\node	(h3)		at	(3, -1.8)		{\tiny $h_3$};
\node	(h4)		at	(4, -1.8)		{\tiny $h_4$};
\node	(h5)		at	(5, -1.8)		{\tiny $h_5$};
\node	(h6)		at	(6, -1.8)		{\tiny $N+1$};
\node	(l0)		at	(6.2, 0.667)		{\tiny $\lambda$};
\node	(l1)		at	(6.2, 0)			{\tiny $\lambda$};
\node	(l2)		at	(6.2, -0.667)	{\tiny $\lambda$};
\node	(l3)		at	(6.2, 1.333)		{\tiny $\lambda$};
\node	(l4)		at	(6.2, -1.333)	{\tiny $\lambda$};
\node	(l5)		at	(6.2, 2)			{\tiny $\eta$};
\node	(l6)		at	(6.2, 2.667)		{\tiny $\eta$};
\node	(J)		at	(3.7, -1)		{\tiny $J$};
\end{tikzpicture}
\qquad \qquad \qquad
\begin{tikzpicture}[baseline=-0.5cm]
\begin{feynman}
\draw[with arrow]	(0.3, -2) -- (1.6, 0);
\draw[with arrow]	(1.6, 0) -- (0.3, 2);
\draw[with arrow]	(0.3, 2) -- (-1.6, 1.8);
\draw[with arrow]	(-1.6, -1.8) -- (0.3, -2);
\draw[with arrow]	(-1.6, 1.8) -- (-3.5, 2);
\draw[with arrow]	(-3.5, -2) -- (-1.6, -1.8);
\coordinate	(A)	at	($(-1.6, 0) +(150:1)$);
\coordinate	(B)	at	($(-1.6, 0) +(210:1)$);
\coordinate	(C)	at	($(-1.6, 0) +(30:1)$);
\coordinate	(D)	at	($(-1.6, 0) +(-30:1)$);
\coordinate	(d1)	at	($(-1.6, 0) +(180:1)$);
\coordinate	(d2)	at	($(-1.6, 0) +(0:1)$);
\draw[photon]		(-1.6, 1.8) -- (-1.6, 1)
					(-1.6, -1.8) -- (-1.6, -1)
					(0.3, 2) -- (0.3, -2);
\draw[photon]		(A) to[out=-5, in=5] (B)
					(C) to[out=185, in=175] (D);
\draw[photon, densely dotted]		(1.6, 0) -- (2.3, 0);
\draw[with arrow]	(-1.6, 0) ++(90:1) arc (90:150:1);
\draw[with arrow]	(-1.6, 0) ++(150:1) arc (150:210:1);
\draw[with arrow]	(-1.6, 0) ++(210:1) arc (210:270:1);
\draw[with arrow]	(-1.6, 0) ++(270:1) arc (270:330:1);
\draw[with arrow]	(-1.6, 0) ++(-30:1) arc (-30:30:1);
\draw[with arrow]	(-1.6, 0) ++(30:1) arc (30:90:1);
\draw[black, thick]	($(A) +(-0.25, 0.1)$) rectangle ($(B) +(0.4, -0.1)$);
\draw[black, thick]	($(C) +(-0.4, 0.1)$) rectangle ($(D) +(0.25, -0.1)$);
\draw[black, thick]	($(d1) +(-0.3, 2.1)$) rectangle ($(d2) +(0.3, -2.1)$);
\draw[black, thick]  (0.15, -2.1) rectangle (1.7, 2.1);
\draw[black, thick]	($(d1) +(-0.4, 2.2)$) rectangle (1.8, -2.2);
\draw[black, thick]	(-3.7, -2.3) rectangle (1.9, 2.3);
\fill[black]			(-3.5, -2)	circle	(1.5pt)
					(-3.5, 2)	circle	(1.5pt);
\end{feynman}
\node	(l1)		at	(1.2, -1.2)		{\tiny $h_3$};
\node	(l2)		at	(1.2, 1.2)		{\tiny $h_3$};
\node	(l3)		at	(-3.2, -1.8)		{\tiny $h^\star$};
\node	(l4)		at	(-3.2, 1.8)		{\tiny $h^\star$};
\node	(l5)		at	(-0.05, -1.8)	{\tiny $h_1$};
\node	(l6)		at	(-0.05, 1.8)		{\tiny $h_1$};
\node	(l7)		at	(-1.1, -1.1)		{\tiny $h_2$};
\node	(l8)		at	(-1.1, 1.15)		{\tiny $h_2$};
\node	(l9)		at	(-2.1, -1.1)		{\tiny $h_2$};
\node	(l10)	at	(-2.1, 1.15)		{\tiny $h_2$};
\node	(l11)	at	($(d1) +(0.2, 0)$)	{\tiny $h_4$};
\node	(l12)	at	($(d2) +(-0.2, 0)$)	{\tiny $h_5$};
\end{tikzpicture}
\end{center}
\caption{A tenth order diagram with several nested renormalizations.}
\label{fig:tenth_order_diagram}
\end{figure}

The same approach followed in the previous cases yields
\begin{multline}
\norm*{W}_{h^\star} \le \underbrace{\left(\frac{C \lambda^2}{M^2}\right)^5}_{\text{Endpoints}} \cdot \,\, \underbrace{2^{-h^\star - h^\star}}_{h^\star \text{ cluster}} \,\, \cdot \underbrace{2^{3h_1 - h_1}}_{h_1 \text{ cluster}} \cdot \,\, \underbrace{2^{3h_2 + 3 \cdot (-h_2)}}_{h_2 \text{ cluster}} \,\, \cdot \,\, \underbrace{2^{3h_3 -h_3}}_{h_3 \text{ cluster}} \,\, \cdot \underbrace{2^{3h_4}}_{h_4 \text{ cluster}} \, \times \\
\times \underbrace{2^{3h_5}}_{h_5 \text{ cluster}} \,\, \cdot \,\, \underbrace{2^{2(h^\star - h_1) + 2(h_1 - h_3) + 3(h_2 - h_4) + 3(h_2 - h_5)}}_{\ren}
\end{multline}
which is equivalent to
\begin{multline}
\norm*{W}_{h^\star} \le \left(\frac{C \lambda^2 \cutoff^2}{M^2}\right)^5 \,\, \cdot \,\, 2^{2(h_2 - N) + 2(h_2 - N) + 2(h_3 - N) + 2(h_4 - N) + 2(h_5 - N) - 2h^\star} \, \times \\
\times 2^{(h_1 - h^\star)(0 - 2) + (h_2 - h_1) (-2 - 0) + (h_3 - h_1)(0 - 2) + (h_4 - h_2)(1 - 3) + (h_5 - h_2)(1 - 3)}.
\end{multline}
Once again, we see that the estimate provided by Theorem~\ref{th:renormalization_bound} is correctly reproduced. The short memory factor can be extracted as before, by multiplying the right hand side by $1 = 2^{2(h^\star - N)} \cdot 2^{2(N - h^\star)}$. 

\newpage

\printbibliography

\end{document}